\documentclass[11pt,letterpaper]{article}
\usepackage[margin=1in]{geometry}
\usepackage{tikz} % main package
\usetikzlibrary{arrows.meta,positioning,calc} % optional, very useful

\usepackage[compatibility=false]{caption}
\usepackage{amsmath,amsfonts}
\usepackage{algorithm}
\usepackage{array}
\usepackage[caption=false,font=normalsize,labelfont=sf,textfont=sf]{subfig}
\usepackage{textcomp}
\usepackage{stfloats}
\usepackage{url}
\usepackage{verbatim}
\usepackage{cite}
\usepackage{parskip}

\usepackage{graphicx}
\graphicspath{{figs/}{matlab/}}
\usepackage{ifpdf}
\usepackage[table]{xcolor}
\usepackage{pgfplots}
\usepackage{mathrsfs}  
\usepackage{amsmath,amssymb,amsthm,bm,dsfont}  %数学公式字体   %\usepackage{mathtools} provide {dcases}
\usepackage{enumerate}  %枚举
\usepackage{color}
\usepackage{setspace}%调整行间距
\usepackage[hidelinks]{hyperref} %生成目录，公式引用自动括号
\usepackage{cleveref} % multiple refference
\usepackage{cite}
\usepackage{subcaption}
\allowdisplaybreaks[2]  %公式分页显示，数字0-4表示分开程度
\usepackage{multirow}
\usepackage{cases}
\usepackage{balance}
\usepackage{algpseudocode}
\usepackage[dvipsnames]{xcolor}

\newif\ifEnv

\Envtrue
\ifEnv
    
\else
    \excludecomment{Env}
\fi

\usepackage{optidef}

\newtheorem{theorem}{\textbf{Theorem}}
\newtheorem{prop}{\textbf{Proposition}}

\newtheorem{definition}{\textbf{Definition}}%[section]
\newtheorem{lemma}{\textbf{Lemma}}
\newcommand{\supp}{\mathrm{supp}}

	\newcommand{\cX}{\mathcal{X}}
	
    \newcommand{\cZ}{\mathcal{Z}}

\begin{document}

\title{Optimal Multi-bit Generative Watermarking Schemes Under Worst-Case False-Alarm Constraints}

\author{Yu-Shin Huang, Chao Tian, and Krishna Narayanan}
        % <-this % stops a space
%\thanks{}% <-this % stops a space
%\thanks{}}

% The paper headers
\markboth{Journal of \LaTeX\ Class Files,~Vol.~1, No.~2, December~2023}%
{Shell \MakeLowercase{\textit{et al.}}: A Sample Article Using IEEEtran.cls for IEEE Journals}

%\IEEEpubid{0000--0000~\copyright~2023 IEEE}
% Remember, if you use this you must call \IEEEpubidadjcol in the second
% column for its text to clear the IEEEpubid mark.

\maketitle

\begin{abstract}
This paper considers the problem of multi-bit generative watermarking for large language models under a worst-case false-alarm constraint. Prior work established a lower bound on the achievable miss-detection probability in the finite-token regime and proposed a scheme claimed to achieve this bound. We show, however, that the proposed scheme is in fact suboptimal. We then develop two new encoding-decoding constructions that attain the previously established lower bound, thereby completely characterizing the optimal multi-bit watermarking performance. Our approach formulates the watermark design problem as a linear program and derives the structural conditions under which optimality can be achieved. In addition, we identify the failure mechanism of the previous construction and compare the tradeoffs between the two proposed schemes.
\end{abstract}

%\begin{IEEEkeywords}
%Article submission, IEEE, IEEEtran, journal, \LaTeX, paper, template, typesetting.
%\end{IEEEkeywords}

\section{Introduction}
%\IEEEPARstart{T}{his} 

As Large Language Models (LLMs) continue to scale and their outputs become increasingly indistinguishable from human-written text, they raise serious concerns such as the spread of fabricated news and false academic content. Consequently, the need for robust watermarking techniques \cite{cox2008digital} for LLM-generated text is becoming ever more critical. Watermarking seeks to insert hidden information into generated content to reliably indicate its source and authorship, while remaining undetectable so that users cannot remove the concealed message components easily.

Recent LLM watermarking techniques fall into two main categories: training-based \cite{sun2023codemark,xu2024learning,gu2023learnability,padhi2024deep,liu2023unforgeable,munyer2024deeptextmark,an2026reinforcement} and inference-time approaches \cite{aaronson2023watermarking, he2024theoretically,hedistributional,kirchenbauer2023watermark,kuditipudi2023robust,zhao2023provable,liu2024adaptive,hu2023unbiased,wu2023resilient,chao2024watermarking, dathathri2024scalable,long2025optimized,zhu2024duwak,takezawa2023necessary,chen2025improved,christ2024undetectable,wouters2023optimizing,kirchenbauer2023reliability}. Training-based methods fine-tune models so they directly produce watermarked text. In contrast, this work targets inference-time methods, which embed messages by altering the sampling process during generation. These methods generally operate under a model-agnostic detection setting: while the generator and detector share auxiliary side information (secret keys), the detector lacks access to the generator’s parameters or architectural details. More discussion of the categorization can be found in \cite{liu2024survey, yang2025watermarking}

Current LLM watermarking methods are mostly ``zero-bit": they only allow binary detection, indicating whether a piece of text was generated by AI or not, without encoding any extra information in the text. Most existing zero-bit approaches \cite{takezawa2023necessary,zhu2024duwak,hu2023unbiased,chen2025improved,wouters2023optimizing,kirchenbauer2023reliability} are extensions of a state-of-the-art technique known as the Green-Red list \cite{kirchenbauer2023watermark}. This approach uses side information to select a subset of tokens marked as green and then modifies token probabilities so that the generated text assigns higher probability mass to these green tokens. Alternatively, distortionless watermarking \cite{kuditipudi2023robust, hu2023unbiased,wu2023resilient,chao2024watermarking,long2025optimized,chen2025improved,christ2024undetectable,wouters2023optimizing} maintains LLM next-token distribution. It constructs a joint distribution between tokens and side information such that the marginal over tokens coincides with the original LLM. By sampling tokens conditionally on the side information, this approach embeds watermarks without altering the text's statistical properties. 

While research on zero-bit watermarking has reached a certain level of maturity, existing multi-bit schemes \cite{jiang2025stealthink,wang2023towards,yoo2024advancing,boroujeny2024multi} are limited, usually relying on heuristic extensions of zero-bit methods. In practice, encoding more information into the text tends to either degrade the naturalness of the watermarked content or induce higher detection error probability, making it harder to resolve through empirical trial-and-error alone. Therefore, a rigorous analysis from an information-theoretic perspective \cite{moulin2000information,moulin2001role,cohen2002gaussian,merhav2002random,steinberg2002identification} is necessary to determine the information-rate limits underlying these trade-offs and to establish a principled foundation for watermarking.

Recently, He et al. proposed a theoretical framework for zero-bit watermarking \cite{he2024theoretically}, which was later extended to the multi-bit setting \cite{hedistributional} under the worst-case false-alarm constraint. Under an information-theoretic framework, an asymptotic analysis of the maximum achievable information rate under given false-alarm constraints for multi-bit generative watermarking was given in \cite{hedistributional}; for the setting with a finite number of tokens, a detection-theory framework was used to study the optimal miss-detection performance. 

We made a critical observation that the encoding–decoding scheme introduced for the finite-token setting in \cite{hedistributional} unfortunately does not actually achieve the claimed optimal performance. Therefore, in this work, we revisit the multi-bit watermarking framework. We identify the fundamental cause for the construction proposed in \cite{hedistributional} to be suboptimal, which is the overly restrictive decoder design, and adopt a more general class of decoding functions. Two novel encoding-decoding constructions are proposed that can indeed achieve the lower bound established in \cite{hedistributional}, thereby completely settling the problem of multi-bit generative watermarking under the worst-case false-alarm constraint. We formulate the problem naturally as a linear program and analyze the necessary conditions for the constructions to match the lower bound. 
The first construction utilizes a decomposition approach, where these conditions are carefully maintained, and a key concept we introduce is the $T$-hot representable vectors. The second construction instead relies on a pseudo-token approach, which is conceptually straightforward at the expense of a larger key set than the first one.

The remainder of this paper is organized as follows: Section 2 defines the valid watermarking system; Section 3 presents the main theorem on optimal detection error under worst-case false-alarm constraints; Section 4 introduces a generalized expression for deterministic decoders; Sections 5 and 6 provide our two optimal constructions; and Section 7 analyzes the failure cases of the previous construction in \cite{hedistributional}, and also compares the two proposed constructions. Section 8 finally concludes the paper.

Notation: The set $\{a,a+1,\ldots,b\}$ for integers $a,b$ will be denoted as $[a:b]$. Denote the LLM’s vocabulary set as $\cX$, and write $N = |\cX|$ for its cardinality, where $|\cdot|$ represents the size of a set. Let $m \in [0:T]$ denote the watermark message that we want to encode into the generated text sequences, where $M = 0$ indicates that no watermark message is present, and  $T$ is the total number of messages that this watermarking system can embed.  
We make the assumption that $T \leq N$. This is justified since $N$ is usually large, while the number of watermark messages employed in practice is significantly smaller.

\section{System Description}

\subsection{Generative Watermarking Encoder, Sampler, and Decoder}

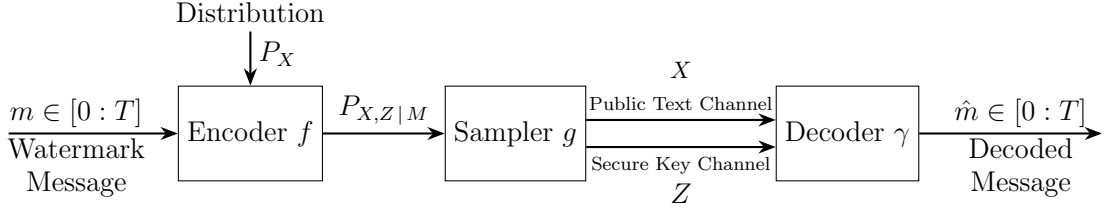
\begin{figure}[thb]
\centering
\begin{tikzpicture}[
  scale=0.9, transform shape,
  font=\large,
  block/.style={draw, minimum width=20mm, minimum height=14mm, align=center},
  >={Stealth[length=3mm]},
  line/.style={-Stealth, thick},
  inarrow/.style={Stealth-Stealth, thick} 
]

    % --- Blocks ---
    \node[block] (enc) {Encoder $f$};
    \node[block, right=18mm of enc] (sam) {Sampler $g$};
    \node[block, right=28mm of sam] (dec) {Decoder $\gamma$};
    
    % --- Main horizontal arrows ---
    \draw[line] ([xshift=-25mm]enc.west) -- (enc.west); % incoming line to encoder
    \draw[line] (enc.east) -- (sam.west);
    % \draw[line] (sam.east) -- (dec.west);
    \draw[line] (dec.east) -- ([xshift=27mm]dec.east); % outgoing line
    
    % --- Labels on horizontal arrows (use midway to place text) ---
    \node[align=center] at ([xshift=-15mm, , yshift=-5mm]enc.west) {Watermark\\Message};

    \node[align=center] at ([xshift=-15mm, , yshift=3mm]enc.west) {$m \in [0:T]$};
    
    \node[above] at ($(enc.east)!0.5!(sam.west)$) {$P_{X,Z\,|\,M}$};
   
    \draw[line] ($(sam.east)+(0,2mm)$) --
    node[above, align=center]{\small $X$\\[-0.5pt]\scriptsize Public Text Channel}
    ($(dec.west)+(0,2mm)$);

    \draw[line] ($(sam.east)+(0,-2mm)$) --
    node[below, align=center]{\small \scriptsize Secure Key Channel \\$Z$}
    ($(dec.west)+(0,-2mm)$);
    \node[align=center] at ([xshift=15mm, yshift=3mm]dec.east) {$\hat{m} \in [0:T] $};
    \node[align=center] at ([xshift=15mm, yshift=-5mm]dec.east) {Decoded \\ Message};
    
    \node[above=8mm of enc.north, align=center] (dist) {Distribution};
    \draw[line] (dist.south) -- (enc.north);
    \node[above] at ($(dist.south)!0.8!(enc.north)+ (4mm,0)$) {$P_{X}$};
\end{tikzpicture}
\caption{Diagram of a generative watermarking system. \label{fig:systemdiagram}}
\end{figure}

A generative watermark system consists of an encoder $f$, a sampler $g$, and a decoder $\gamma$; see Fig. \ref{fig:systemdiagram}. 

\begin{definition}
A generative watermarking encoding function $f: M \times \Delta(\cX) \rightarrow \Delta(\cX\times\cZ)$ is a deterministic mapping, where $\Delta(\cdot)$ denotes the probability simplex on the given space. 
\end{definition}
Essentially, the function $f$ takes a message $M$ in the set $[0:T]$, where $M=0$ means no watermark is to be inserted, and a distribution $P_X \in \Delta(\cX)$ as input, and outputs a joint distribution on the space $\cX\times\cZ$, which we denote as $P_m(\cdot,\cdot) := f(m, P_X)$ when $M=m$; $T$ is the total number of messages allowed. $M$ can be viewed as a random variable, and then $P_m$ can be viewed as the conditional distribution $P(X,Z|M=m)$. The encoding procedure has a sampling module, which will use the random key $Z$ to produce a sample $X$ in $\cX$ according to the joint distribution given by the encoder, which is isolated out of the encoder in Fig. \ref{fig:systemdiagram}. The sampler is fixed and not up for design. 

\begin{definition}
A generative watermarking sampler is a stochastic mapping $g:\Delta(\cX\times\cZ)\rightarrow \cX\times\cZ$, which samples in the set $\cX\times \cZ$ according to the distribution $P_m$.
\end{definition}

\begin{definition} \label{def:decoder}
A watermarking decoding function $\gamma: \cX \times \cZ \rightarrow [0:T]$ is a deterministic mapping, which outputs a message based on the observed symbol in $\cX$ and a shared secret key in $\cZ$.
\end{definition}

Notice that the decoder and the sampler share the same random key $Z$, however, the encoder only determines the coupling structure between $X$ and $Z$, but does not utilize the common randomness $Z$ directly. We impose certain requirements on the encoder to guarantee the secrecy of the embedded message. Specifically, the generated text $x \in \cX$ must be statistically indistinguishable regardless of which message $m$ is embedded, when the secret key $Z$ is not available. Furthermore, the secret key $Z$ itself, without $X$, should be independent of the message, due to the nature of the generative watermarking sampler. Formally, we state this requirement as follows:

\begin{definition}\label{req:encoder_requirement}
A generative watermarking encoder is called valid if both the marginal distribution $P_m(X)$ and the marginal distribution $P_m(Z)$ are invariant to $m\in[1:T]$. 
\end{definition}

When the system needs to produce non-watermarked output, that is, when $m = 0$, the distribution $P_0$ for a valid watermarking encoder is given by $P_0=P_X\otimes P_m(Z)$, for any $m\in[1:T]$ given the invariance above, i.e., the secret key $Z$ is independent of $X$ when $m=0$. We will only consider in this work the setting where there is no statistical difference between the marginal distribution $P_m(X)$ and $P_X$, which is usually referred to as distortionless watermarking.
\begin{definition}\label{req:distortionless}
A valid generative watermarking encoder is called distortionless if the marginal distribution $P_m(X)$ is the same as $P_X$ for any $m\in [1:T]$.
\end{definition}

\subsection{Decoding Errors}

An effective watermarking scheme should let the decoder correctly infer the message that the encoder intends to transmit, with sufficiently high probability. Let $M$ denote the embedded message, and let $\hat{M}$ be the message recovered by the decoder, which can be written as $\hat{M} = \gamma(X, Z)$. If $\hat{M} = 0$, the decoder interprets the signal as unwatermarked; if instead $\hat{M} = m$ for some $m \in [1:T]$, the decoder interprets it as watermarked with message $m$. 

Because there are multiple possible watermark messages, the problem can be cast as $(T+1)$-ary hypothesis testing. We denote by $H_0$ the hypothesis that the text is not watermarked, and by $H_m$ the hypothesis that the text is watermarked with message $m$.

The decoding performance is characterized by the errors made in the hypothesis test. We denote by $\beta_m$ the error probability corresponding to hypothesis $H_m$, meaning that the decoder should have recovered message $m$ but failed to do so. 
\begin{align}
    \beta_m(P_m, \gamma) = \sum_{x\in \cX, \zeta \in \cZ} P_m(x,\zeta)\mathds{1}\{\gamma(x,\zeta) \neq m \}, \quad \forall m \in [0:T].
\end{align}
In particular, $\beta_m$ is commonly referred to as the miss-detection error for $m \in [1:M]$, while $\beta_0$ is known as the false-alarm error, which occurs when the decoder reports that a watermark is present even though the text in fact contains no watermark. The worst-case false-alarm probability is expressed as $\sup_{Q_X} \beta_0(Q_X \otimes P_Z, \gamma)$ where $Q_X\in\Delta(\cX)$, representing the largest false-alarm error over all possible choices of any token distribution $Q_X$; we will denote this supremum as $\beta_0^*[f,\gamma]$. 
Since $P_m$ is the output of the encoder $f$, we will also use the notation $\beta_m(f,\gamma)$ to represent the miss-detection error. Throughout this article, we will use these two expressions interchangeably to refer to the same error quantity.

\section{Main Theorem} \label{sec:main-thm}

Given a token distribution $P_X$, we define $\beta^*[\alpha, P_X]$ as the smallest possible value of the worst-case miss-detection probability achievable by any watermarking scheme $(f,\gamma)$, under the constraints that its worst-case false-alarm probability does not exceed a specified threshold $\alpha$, and that the encoder is distortionless, i.e., 
\begin{align*}
    \beta^*[\alpha, P_X] & := \min_{f,\gamma} \max_{m \in [1:M]} \beta_m(f, \gamma) \\
    \text{subject to: } & ~ 1.  \quad \beta_0^*[f,\gamma] \leq \alpha \\
    & ~ 2.  \quad \text{ $f$ is distortionless}
\end{align*}
We have the following theorem.

\begin{theorem}\label{thm:main_thm}
    For any $P_X \in \Delta(\cX)$ and $\alpha \in [0,1)$, 
    \begin{align}
        \beta^*[\alpha, P_X] = 1 - \sum_{x\in \cX} \min\left(\tfrac{\alpha}{T}, P_X(x)\right). \label{eq:min_error}
    \end{align}
\end{theorem}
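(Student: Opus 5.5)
The proof has two parts: a converse showing that no scheme can do better than \eqref{eq:min_error}, and an explicit construction attaining it. Write $c_x:=\min(\tfrac{\alpha}{T},P_X(x))$ throughout.

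\textbf{Converse.} Fix any distortionless $(f,\gamma)$ with $\beta_0^*[f,\gamma]\le\alpha$, and let $P_Z$ be the common key marginal.
\begin{enumerate}
\item Evaluate the worst-case false-alarm constraint at the point masses $Q_X=\delta_x$. With $S_m(x):=\{\zeta:\gamma(x,\zeta)=m\}$ and $A_m(x):=P_Z(S_m(x))$, this gives $\sum_{m=1}^T A_m(x)\le\alpha$ for every $x\in\cX$.
\item Bound the success probability of message $m$. Since $P_m$ has marginals $P_X$ and $P_Z$, we have $P_m(x,S_m(x))\le\min(P_X(x),A_m(x))$. Hence $1-\beta_m\le\sum_x\min(P_X(x),A_m(x))$.
\item The maximum over $m$ dominates the average, so
\[
1-\max_m\beta_m\le\frac1T\sum_{m}(1-\beta_m)\le\sum_x\frac1T\sum_m\min(P_X(x),A_m(x)).
\]
\item For each fixed $x$, the map $a\mapsto\min(P_X(x),a)$ is concave. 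By Jensen's inequality and step 1, $\frac1T\sum_m\min(P_X(x),A_m(x))\le\min\bigl(P_X(x),\tfrac{1}{T}\sum_m A_m(x)\bigr)\le c_x$.
\end{enumerate}
Together these give $\beta^*[\alpha,P_X]\ge 1-\sum_x c_x$.

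\textbf{Achievability.} I would first work with a continuous key $Z\sim\mathrm{Unif}[0,1)$ and discretize at the end.
\begin{enumerate}
\item Since $\sum_x c_x\le\sum_x P_X(x)=1$, lay out consecutive disjoint intervals $I_x\subset[0,1)$ of length $c_x$.
\item Define the decoder by $\gamma(x,z)=m$ if $z\in I_x+(m-1)\tfrac{\alpha}{T}\pmod 1$ for some $m\in[1:T]$, and $\gamma(x,z)=0$ otherwise.
\item For fixed $x$, the $T$ shifted copies are pairwise disjoint, because $c_x\le\alpha/T$ equals the shift step. They also cannot wrap onto one another, since $T\cdot\tfrac{\alpha}{T}=\alpha<1$. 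So $\gamma$ is well defined.
\item For any $Q_X$, the false-alarm probability is at most $\max_x\sum_m|I_x|\le T\cdot\tfrac{\alpha}{T}=\alpha$.
\end{enumerate}

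Next, define the encoder for message $m$.
\begin{enumerate}
\item Put mass $c_x$ on $\{x\}\times(I_x+(m-1)\alpha/T)$, uniformly in $z$. For fixed $m$ these shifted intervals are disjoint across $x$, because they are rigid rotations of the disjoint $I_x$. So this part uses Lebesgue mass exactly $c_x$ on a set of length $c_x$.
\item Couple the residual $x$-mass $P_X(x)-c_x$ with the residual uniform measure on the complement. This complement has total length $1-\sum_x c_x$, which equals the total residual $x$-mass. Use the normalized product of the two residuals.
\end{enumerate}
This gives exact marginals $P_X$ and $\mathrm{Unif}[0,1)$ for every $m$, so the encoder is valid and distortionless. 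The success probability for message $m$ is at least $\sum_x c_x$, so $\max_m\beta_m\le 1-\sum_x c_x$.

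\textbf{Discretization.} Finally, partition $[0,1)$ by all finitely many interval endpoints (over all $x$ and all shifts). Replace $Z$ by the cell index, with probability equal to the cell length. Each piece of the construction is constant on cells, so a finite key set $\cZ$ suffices and all the properties above are preserved.

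\textbf{Main obstacle.} I expect the main difficulty to lie in the achievability step. The requirement is that, simultaneously for every $m$, the decoding regions attain the per-$x$ success $c_x$, while the $Z$-marginal stays independent of $m$ and the per-$x$ decoding budget stays at most $\alpha$. The cyclic-shift layout is designed to handle all three at once. The point to check carefully is that the wraparound modulo $1$ never causes overlaps, either across $m$ for a fixed $x$ or across $x$ for a fixed $m$.
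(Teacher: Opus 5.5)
Your proof is correct, and it takes a genuinely different route from the paper on both halves.

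\textbf{Converse.} The paper does not prove the converse; it cites He et al. Your argument is self-contained and sound: you evaluate the false-alarm constraint at point masses $Q_X$, use the coupling bound $P_m(x,S_m(x))\le\min(P_X(x),A_m(x))$, and then average over $m$.

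\textbf{Achievability: what the paper does.} The paper works on the fixed permutation key set $\cZ^N_T$, in which every nonzero key sends exactly one token to each message. That rigidity is what drives its machinery. Construction A needs the $T$-hot representability criterion, the three-way split $P_X^{(1)}+P_X^{(2)}+P_X^{(3)}$, the row-sum imbalance $U$ created by the step-vector layer, and the check $R\ge U$ so that $P_X^{(3)}$ can repair it. Construction B instead needs the pseudo-token extension.

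\textbf{Achievability: what your construction does instead.} Your cyclic-shift layout with a uniform key avoids all of this. Each decoding region $S_m(x)$ is filled at full key density with mass exactly $c_x$. Hence the unfilled part of the key space has measure $1-\sum_x c_x$ for every $m$, which is exactly the residual token mass. Row-sum invariance is therefore automatic. Column-sum invariance, the $\tfrac{\alpha}{T}$-capped column sum, and the $\alpha$-bounded total sum ($Tc_x\le\alpha$) are then immediate.

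\textbf{What each route buys.} After discretization your key alphabet has at most $T(N+1)$ cells. Construction A uses $N!/(N-T)!+1$ keys, and Construction B uses $(N+n)!/(N+n-T)!+1$. Your construction also does not need $P_X$ sorted, nor $T\le N$. What the paper's route buys is a key alphabet and decoding rule $\gamma(x,\zeta)=\zeta_x$ that do not depend on $P_X$, at least in Construction A. In your construction the cell partition, and hence $\gamma$ on cell indices, depends on $P_X$. This is harmless for the theorem, which optimizes $(f,\gamma)$ for a given $P_X$. It can also be removed by relabeling each cell with its decoding pattern in $[0:T]^N$, using the paper's key-pattern representation, and merging cells with equal patterns. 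Your patterns use each message at most once but not necessarily exactly once, so they generally lie outside $\cZ^N_T$.
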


To prove this theorem, it needs to be shown that the minimum worst-case missed-detection error is lower-bounded by the right-hand side of (\ref{eq:min_error}); furthermore, we must design an encoder–decoder pair that achieves this minimum error. For this purpose, we rewrite the optimization as the following equivalent form (\ref{eq:main_opt_obj})–(\ref{eq:main_opt_condi_3}), which is now parameterized by $P_m$, $m\in[1:T]$, and $\gamma$.
\begin{align}
    \min_{\gamma,P_Z\in \Delta(\cZ),\{P_m\in \Delta(\cX\times\cZ),m\in[1:T]\}} \quad & \max_{m\in[1:T]} \beta_m(P_m, \gamma)\label{eq:main_opt_obj} \\
    \text{s.t.}  \quad & \sup_{Q_X\in \Delta(\cX)} \beta_0 (Q_X\otimes P_Z,\gamma) \leq \alpha \label{eq:main_opt_condi_1}\\
    & \sum_{\zeta \in \cZ} P_m(x, \zeta) = P_X(x), ~ \forall x \in X, \forall m \in [1:T] \label{eq:main_opt_condi_2}\\
    & \sum_{x \in \cX} P_m (x, \zeta) = P_{Z}(\zeta), ~ \forall \zeta \in Z, \forall m \in [1:T] \label{eq:main_opt_condi_3}
\end{align}

Let $P_{Z}(\zeta)$ denote the marginal distribution of $P_m$ over the key space. The encoder $f$ is now implicitly parametrized by the joint distribution $P_m$. Conditions (\ref{eq:main_opt_condi_2}) and (\ref{eq:main_opt_condi_3}) follow from the validity condition in Definition \ref{req:encoder_requirement}. The following two theorems together establish Theorem \ref{thm:main_thm}.

\begin{theorem} [Converse Bound]\label{thm:converse_bound}
    For any $P_X \in \Delta(\cX)$ and $\alpha \in [0,1)$, $\beta^*[\alpha, P_X]$ is lower bounded as 
    \begin{align}
        \beta^*[\alpha, P_X]\geq 1 - \sum_{x\in \cX} \min\left(\tfrac{\alpha}{T}, P_X(x)\right).
    \end{align}
\end{theorem}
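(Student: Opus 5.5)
The plan is to work directly with the equivalent program (\ref{eq:main_opt_obj})–(\ref{eq:main_opt_condi_3}). Fix any feasible $(\gamma, P_Z, \{P_m\})$. For each key $\zeta\in\cZ$ and message $m\in[1:T]$, define the decoding region
\[
A_m(\zeta)=\{x\in\cX:\gamma(x,\zeta)=m\}.
\]
The first step is to translate the worst-case false-alarm constraint (\ref{eq:main_opt_condi_1}) into a pointwise statement. Choose $Q_X=\delta_x$, a point mass on a single token $x$. Then
\[
\beta_0(\delta_x\otimes P_Z,\gamma)=\sum_{\zeta}P_Z(\zeta)\,\mathds{1}\{\gamma(x,\zeta)\neq 0\}\le\alpha .
\]
Summing over $m$, this gives, for every $x\in\cX$,
\[
\sum_{m=1}^T\sum_{\zeta}P_Z(\zeta)\,\mathds{1}\{x\in A_m(\zeta)\}\le \alpha .
\]
(Conversely, since $\beta_0$ is linear in $Q_X$, the supremum is attained at point masses, so this condition is also sufficient. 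Only the necessary direction is needed here.)

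The second step bounds the total probability of correct detection. Averaging over messages,
\[
\frac1T\sum_{m=1}^T\bigl(1-\beta_m\bigr)=\frac1T\sum_{x\in\cX}\sum_{m=1}^T\sum_{\zeta}P_m(x,\zeta)\,\mathds{1}\{x\in A_m(\zeta)\}.
\]
Fix a token $x$. Each term admits two separate bounds.
\begin{itemize}
\item Using $P_m(x,\zeta)\le\sum_{x'}P_m(x',\zeta)=P_Z(\zeta)$, which is (\ref{eq:main_opt_condi_3}), the inner sum over $(m,\zeta)$ is at most $\sum_m\sum_\zeta P_Z(\zeta)\mathds{1}\{x\in A_m(\zeta)\}\le\alpha$ by the first step.
\item Alternatively, for each fixed $m$, the sum over $\zeta$ is at most $\sum_\zeta P_m(x,\zeta)=P_X(x)$ by (\ref{eq:main_opt_condi_2}). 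Summing over $m$ gives at most $T P_X(x)$.
\end{itemize}
Hence the contribution of token $x$ is at most $\min(\alpha, T P_X(x))$. Dividing by $T$ and summing over $x$ gives
\[
\frac1T\sum_m(1-\beta_m)\le\sum_x\min\!\left(\tfrac{\alpha}{T},P_X(x)\right).
\]

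Finally, $\max_m\beta_m\ge\frac1T\sum_m\beta_m\ge 1-\sum_x\min(\alpha/T,P_X(x))$. This holds for every feasible scheme, so it holds for the minimum, which is the claimed bound on $\beta^*[\alpha,P_X]$.

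The main subtlety is the first bullet of the second step. The bound must be applied jointly over $(m,\zeta)$ for a fixed $x$ before taking the minimum with $T P_X(x)$. Bounding the terms separately per message would lose the factor coupling $\alpha$ and $T$. It is also necessary to check that the same key marginal $P_Z$ appears in both the false-alarm expression and constraint (\ref{eq:main_opt_condi_3}); the validity condition of Definition \ref{req:encoder_requirement} guarantees this. Everything else is routine exchange of finite sums, with $\cZ$ taken countable so the sums are well defined.
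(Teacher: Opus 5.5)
Your proof is correct and complete. The paper does not prove Theorem~\ref{thm:converse_bound} itself; it only cites \cite{hedistributional}. Your argument is therefore a self-contained substitute rather than a variant of an in-paper proof. It fits the paper's achievability machinery closely. Your first step uses $Q_X=\delta_x$ to reduce the worst-case false-alarm constraint to the pointwise condition $\sum_{m}\sum_{\zeta}P_Z(\zeta)\mathds{1}\{\gamma(x,\zeta)=m\}\le\alpha$. This is exactly the converse of Proposition~\ref{prop:alpha_bound}, so the $\alpha$-bounded total sum condition is necessary, not merely sufficient. Your per-token bound $\sum_m\sum_\zeta P_m(x,\zeta)\mathds{1}\{\gamma(x,\zeta)=m\}\le\min(\alpha,TP_X(x))$ is the matching upper bound to the $\tfrac{\alpha}{T}$-capped column sum behind Proposition~\ref{prop:alpha/T-capped}. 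Combined with your bound, that condition forces both the per-token inequality and the averaging step $\max_m\beta_m\ge\frac{1}{T}\sum_m\beta_m$ to be tight. Your route also gives slightly more than stated: it lower-bounds the message-averaged miss-detection probability, it never uses $\alpha<1$, and it needs the column-sum constraint only as an inequality. The citation buys the paper brevity. Your argument gives a short verification inside the paper's own LP formulation (\ref{eq:main_opt_obj})--(\ref{eq:main_opt_condi_3}), so the optimality of Constructions A and B can be checked without consulting \cite{hedistributional}.
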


\begin{theorem} [Achievability]\label{thm:forward}
     For any $P_X \in \Delta(\cX)$ and $\alpha \in [0,1)$, there exists a distortionless encoder and a corresponding decoder $(f, \gamma)$ such that 
    \begin{align}
    \max_{m\in [1:T]}\beta_m(P_m, \gamma)\leq 1- \sum_{x\in \cX} \min\left(\tfrac{\alpha}{T}, P_X(x)\right),\quad \sup_{Q_X\in\Delta(\cX)} \beta_0 (Q_X\otimes P_Z,\gamma) \leq \alpha
    \end{align}
\end{theorem}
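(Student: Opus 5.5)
The plan is a direct construction in which the key is a random partial assignment of tokens to messages, with the decoder reading off the message from the key. Write $a_x:=\min(\alpha/T,P_X(x))$ and $S:=\sum_{x\in\cX}a_x\le 1$. First I will simplify the false-alarm constraint (\ref{eq:main_opt_condi_1}). For a deterministic $\gamma$, $\beta_0(Q_X\otimes P_Z,\gamma)=\sum_x Q_X(x)\,P_Z(\{\zeta:\gamma(x,\zeta)\neq 0\})$ is linear in $Q_X$, so its supremum is attained at a point mass. Hence (\ref{eq:main_opt_condi_1}) is equivalent to
\begin{align*}
P_Z(\{\zeta:\gamma(x,\zeta)\neq 0\})\le \alpha \quad \text{for every } x\in\cX .
\end{align*}

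\textbf{Key and decoder.} Each key $\zeta$ will be a partial injective map $s_\zeta:[1:T]\to\cX\cup\{\bot\}$, injective on the messages it does not send to $\bot$. The decoder is
\begin{align*}
\gamma(x,\zeta)=\begin{cases} m, & \text{if } s_\zeta(m)=x,\\ 0, & \text{if no such } m \text{ exists}.\end{cases}
\end{align*}
Injectivity makes this well defined. I will choose $P_Z$ so that it has two properties:
\begin{itemize}
\item[(i)] $P_Z(s_Z(m)=x)=a_x$ for every $m\in[1:T]$ and $x\in\cX$, which forces $P_Z(s_Z(m)=\bot)=1-S$;
\item[(ii)] $s_Z$ is injective almost surely.
\end{itemize}
Given (i) and (ii), the events $\{s_Z(m)=x\}$, $m\in[1:T]$, are disjoint for each fixed $x$. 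So $P_Z(\gamma(x,Z)\neq 0)=\sum_m P_Z(s_Z(m)=x)=T a_x\le\alpha$, and the worst-case false-alarm bound holds.

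\textbf{Constructing the key distribution.} This is the step I expect to be the main obstacle: property (i) must hold for every $m$ simultaneously while property (ii) is kept. I will use a cyclic-shift construction.
\begin{itemize}
\item Lay out half-open, consecutive intervals $I_x\subset[0,S)$ of length $a_x$, one per token.
\item Draw $U$ uniform on $[0,1)$, and set $s(m)=x$ if $(U+(m-1)/T)\bmod 1\in I_x$, and $s(m)=\bot$ otherwise.
\end{itemize}
Each shifted point is uniform on $[0,1)$, which gives (i). The $T$ points are spaced exactly $1/T$ apart, and each $I_x$ is half-open with length $a_x\le\alpha/T<1/T$, so no interval contains two of them; this gives (ii). Only finitely many distinct maps $s$ arise as $U$ varies, so $\cZ$ is finite and $P_Z$ is the induced law.

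\textbf{Encoder.} For message $m$, define the joint law of $(X,Z)$ as follows.
\begin{itemize}
\item Draw $Z\sim P_Z$.
\item If $s_Z(m)=x\neq\bot$, set $X=x$.
\item If $s_Z(m)=\bot$, an event of probability $1-S$, draw $X$ independently from the residual distribution $(P_X(x)-a_x)/(1-S)$. This is nonnegative because $a_x\le P_X(x)$. If $S=1$ the event is null and no residual is needed.
\end{itemize}
The marginal of $Z$ is $P_Z$ for every $m$, so (\ref{eq:main_opt_condi_3}) holds. The marginal of $X$ is $a_x+(P_X(x)-a_x)=P_X(x)$, so (\ref{eq:main_opt_condi_2}) and distortionlessness hold. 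Finally, on the event $s_Z(m)\neq\bot$ we have $\gamma(X,Z)=m$, so $1-\beta_m\ge S=\sum_x\min(\alpha/T,P_X(x))$ for every $m$. This gives the claimed bound on $\max_m\beta_m$ and proves Theorem~\ref{thm:forward}.
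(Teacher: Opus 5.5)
Your proposal is correct, and it takes a genuinely different and much shorter route than the paper.

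The paper works with the reduced key set $\cZ^N_T$, in which every nonzero key places all $T$ messages on distinct tokens. A key law with $P_Z(\zeta_x=m)=\min(\alpha/T,P_X(x))$ for every $m$ then requires this vector to be $T$-hot representable (Lemma~\ref{lemma:T-hot-representable-criterion}). When it is not, the paper needs extra machinery. Construction~A splits off the tail $P_X^{(2)}$, places it on anchored keys, and repairs the resulting row-sum imbalance with the residual $P_X^{(3)}$ (Propositions~\ref{prop:split_px}--\ref{prop:pm3_contruction}). Construction~B instead pads with $n$ pseudo-tokens and folds their columns back in proportion to $\boldsymbol{r}$.

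Your symbol $\bot$ acts as a single pseudo-token that can absorb several messages of the same key at once. This removes the $T$-hot obstruction entirely. The cyclic-shift (systematic-sampling) device then produces the required law on partial injective maps in one step. Your encoder is exactly the fold-back step of Construction~B with $R=1-S$.

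What your route buys:
\begin{itemize}
\item a proof with no decomposition or imbalance bookkeeping;
\item no need for the assumption $T\le N$;
\item a key support of at most $T(N+1)$ keys, since the map $s$ changes only when one of the $T$ shifted points crosses one of at most $N+1$ cell boundaries.
\end{itemize}
The paper's key sets have size $N!/(N-T)!+1$ or $(N+n)!/(N+n-T)!+1$, so your construction bears directly on the key-size tradeoff discussed in its comparison section.

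What the paper's route offers instead is that every nonzero key is a full assignment of all $T$ messages. It also develops the LP formulation and structural conditions, which it reuses to diagnose why He et al.'s bijective decoders fail.
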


Theorem \ref{thm:converse_bound} was proved in \cite{hedistributional}. 
Theorem \ref{thm:forward} was purportedly also proved in \cite{hedistributional}, however, there was a critical mistake in the construction, which leads to either invalid or suboptimal solutions. Therefore, we seek to show that there is a pair of $(f, \gamma)$ that fulfills the required distortionless property along with the needed performance. The distortionless properties for $P_m$ are already given in (\ref{eq:main_opt_obj})–(\ref{eq:main_opt_condi_3}) but we state them explicitly as the following properties on the distributions $P_m$'s: (i) \textbf{Column-sum invariance}: the column sums equal $P_X$, for all $m \in [1:T]$, which is the constraint (\ref{eq:main_opt_condi_2});  (ii) \textbf{Row-sum invariance}: all $P_m$ have the same row-sum for all $m \in [1:T]$, which is the constraint (\ref{eq:main_opt_condi_3}); and (iii) $P_m$ is in the probability simplex, for all $m\in[1:T]$, which is specified as the domain of the optimization variables.

In addition, to attain the minimum objective value $\beta^*[\alpha,P_X]$, the following two properties will be shown to lead to the desired performance:
\begin{itemize}
\item \textbf{$\alpha$-bounded total sum: }
\begin{align}
    \sum_{m\in [1:T]} \sum_{\zeta \in \cZ} P_{Z}(\zeta)\,\mathds{1}\{\gamma(x,\zeta) = m\}\leq \alpha, \quad \forall x \in \cX. \label{eq:alpha_bounded_total_sum}
\end{align}
\item \textbf{$\frac{\alpha}{T}$-capped column sum: } 
\begin{align}
    \sum_{\zeta \in \cZ} P_m(x,\zeta)\,\mathds{1}\{\gamma(x,\zeta) = m \} \geq \min\left(\frac{\alpha}{T}, P_X(x)\right),  \forall m \in [1:T],\ \forall x \in \cX\label{eq:alpha/M_capped_column_sum}
\end{align}
\end{itemize}

Formally, we have the following propositions. 
\begin{prop}\label{prop:alpha_bound}
If a valid $(f,\gamma)$ pair satisfies the $\alpha$-bounded total sum condition, then it satisfies the condition (\ref{eq:main_opt_condi_1}), i.e., 
\begin{align}
\sup_{Q_X\in\Delta(\cX)} \beta_0 (Q_X\otimes P_Z,\gamma) \leq \alpha. \label{eqn:alpha_bound}
\end{align}
\end{prop}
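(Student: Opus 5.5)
The plan is to expand $\beta_0$ directly and show that the worst-case false alarm is a convex combination of quantities that are each at most $\alpha$. By definition, the decoder errs under $H_0$ exactly when $\gamma(x,\zeta)\neq 0$, i.e., when $\gamma(x,\zeta)=m$ for some $m\in[1:T]$. Since $\gamma$ is a deterministic map into $[0:T]$, these events are disjoint across $m$, so
\begin{align*}
\beta_0(Q_X\otimes P_Z,\gamma)=\sum_{x\in\cX}\sum_{\zeta\in\cZ} Q_X(x)P_Z(\zeta)\,\mathds{1}\{\gamma(x,\zeta)\neq 0\}
=\sum_{x\in\cX} Q_X(x)\sum_{m\in[1:T]}\sum_{\zeta\in\cZ}P_Z(\zeta)\,\mathds{1}\{\gamma(x,\zeta)=m\}.
\end{align*}
Here I would note that, for a valid encoder, the key marginal $P_Z$ is the common row-sum of the $P_m$'s (constraint (\ref{eq:main_opt_condi_3})). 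This is the same $P_Z$ that appears in the $\alpha$-bounded total sum condition (\ref{eq:alpha_bounded_total_sum}) and in the non-watermarked distribution $P_0=P_X\otimes P_Z$, so the two quantities match.

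Next, define $g(x):=\sum_{m\in[1:T]}\sum_{\zeta}P_Z(\zeta)\mathds{1}\{\gamma(x,\zeta)=m\}$. Then $\beta_0(Q_X\otimes P_Z,\gamma)=\sum_x Q_X(x)g(x)$, which is a $Q_X$-average of $g$. By (\ref{eq:alpha_bounded_total_sum}), $g(x)\le\alpha$ for every $x$. Hence $\beta_0\le \alpha\sum_x Q_X(x)=\alpha$ for every $Q_X\in\Delta(\cX)$, and taking the supremum over $Q_X$ gives (\ref{eqn:alpha_bound}).

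There is no real obstacle. The only point needing care is the interchange of sums when $\cZ$ is infinite, which is justified because all terms are nonnegative (Tonelli), together with correctly identifying the key marginal $P_Z$ used in the false-alarm expression with the one in (\ref{eq:alpha_bounded_total_sum}).
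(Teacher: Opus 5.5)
Your proposal is correct and uses the same argument as the paper. Both rewrite $\mathds{1}\{\gamma(x,\zeta)\neq 0\}$ as $\sum_{m\in[1:T]}\mathds{1}\{\gamma(x,\zeta)=m\}$, so that $\beta_0(Q_X\otimes P_Z,\gamma)$ becomes a $Q_X$-weighted average of the per-token sums bounded by $\alpha$ in (\ref{eq:alpha_bounded_total_sum}), and then take the supremum over $Q_X$. The paper runs the steps in the reverse order, and your explicit use of $\sum_x Q_X(x)=1$ is slightly tighter than its remark that each $Q_X(x)\in[0,1]$.
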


\begin{proof}[Proof of Proposition \ref{prop:alpha_bound}]
Because $(f,\gamma)$ satisfies the $\alpha$-bounded total sum condition, we have
\begin{align}
    & \sum_{m\in[1:T]} \sum_{\zeta \in \cZ} P_{Z}(\zeta)\,\mathds{1}\{\gamma(x,\zeta)=m\} \leq \alpha. 
\end{align}
For any $Q_X \in \Delta(\cX)$, each entry $Q_X(x)\in[0,1]$, and it follows that
\begin{align}
    \sup_{Q_X \in \Delta(\cX)} \sum_{x \in \cX} Q_X(x)\left(\sum_{\zeta \in \cZ}\sum_{m\in[1:T]}  P_{Z}(\zeta)\,\mathds{1}\{\gamma(x,\zeta)=m\} \right)\leq \alpha
\end{align}
However, the left-hand side can be rewritten as
\begin{align}
&\sup_{Q_X \in \Delta(\cX)} \sum_{x \in \cX}  \sum_{\zeta \in \cZ}Q_X(x)P_{Z}(\zeta) \left( \sum_{m\in[1:T]}\mathds{1}\{\gamma(x,\zeta)=m\} \right)\notag\\
    &=\sup_{Q_X \in \Delta(\cX)} \sum_{x \in \cX}\sum_{\zeta \in \cZ}Q_X(x)P_{Z}(\zeta) \mathds{1}\{\gamma(x,\zeta)\neq 0\}\notag\\
    &=\sup_{Q_X \in \Delta(\cX)} \beta_0 (Q_X\otimes P_Z,\gamma),
\label{eq:pf_prop_alpha_bound_goal} 
\end{align}
which is exactly what we need for (\ref{eqn:alpha_bound}), and the proof is complete.
\end{proof}

\begin{prop} \label{prop:alpha/T-capped}
    If a valid encoder-decoder pair $(f,\gamma)$ satisfies the condition of $\frac{\alpha}{T}$-capped column sum, then the corresponding $P_m$'s satisfy
    \begin{align}
    \max_{m\in [1:T]}\beta_m(P_m, \gamma)\leq 1 - \sum_{x\in \cX} \min\left(\tfrac{\alpha}{T}, P_X(x)\right).
    \end{align}
\end{prop}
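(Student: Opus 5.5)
The plan is a direct computation that writes $\beta_m$ as one minus a success probability and decomposes that success probability column by column over $x\in\cX$. Fix any $m\in[1:T]$. By definition,
\begin{align*}
\beta_m(P_m,\gamma) = \sum_{x\in\cX}\sum_{\zeta\in\cZ} P_m(x,\zeta)\,\mathds{1}\{\gamma(x,\zeta)\neq m\}.
\end{align*}
Since $P_m\in\Delta(\cX\times\cZ)$, its total mass is one, so we can rewrite this as
\begin{align*}
\beta_m(P_m,\gamma) = 1 - \sum_{x\in\cX}\sum_{\zeta\in\cZ} P_m(x,\zeta)\,\mathds{1}\{\gamma(x,\zeta)= m\}.
\end{align*}
The indicator $\mathds{1}\{\gamma(x,\zeta)\neq m\}$ equals $1-\mathds{1}\{\gamma(x,\zeta)=m\}$, because $\gamma$ is deterministic and outputs exactly one message.

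Next I would apply the $\frac{\alpha}{T}$-capped column sum condition (\ref{eq:alpha/M_capped_column_sum}) to each inner sum over $\zeta$ separately. For every $x\in\cX$, the inner sum is at least $\min(\frac{\alpha}{T},P_X(x))$. Summing over $x$ gives
\begin{align*}
\beta_m(P_m,\gamma)\leq 1-\sum_{x\in\cX}\min\left(\tfrac{\alpha}{T},P_X(x)\right).
\end{align*}
This holds for every $m\in[1:T]$, and the right-hand side does not depend on $m$. Taking the maximum over $m$ therefore yields the claim.

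There is essentially no obstacle; the only points needing care are bookkeeping. First, the rewriting as $1$ minus the success mass uses only that $P_m$ is a probability distribution; the column-sum invariance (\ref{eq:main_opt_condi_2}) is not needed here. Second, the bound must be applied before summing over $x$, so that it is used column-wise exactly as the condition is stated.
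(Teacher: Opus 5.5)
Your proposal is correct and is essentially the paper's proof. Both write $\beta_m$ as one minus the $P_m$-mass on $\gamma^{-1}(m)$, using only that $P_m$ sums to one, and then apply the $\tfrac{\alpha}{T}$-capped column sum condition to each column $x$ before summing over $x$ and maximizing over $m$.
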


\begin{proof}[Proof of Proposition \ref{prop:alpha/T-capped}]
For any $m \in [1:T] $, by the definition of $\beta_m(P_m, \gamma)$, we can write 
\begin{align}
    \beta_m(P_m, \gamma) & = \sum_{x\in \cX} \sum_{ \zeta \in \cZ} P_m(x,\zeta)\mathds{1}\{\gamma(x,\zeta) \neq m \} \\
    & = 1 - \sum_{x\in \cX} \left( \sum_{ \zeta \in \cZ}  P_m(x,\zeta)\mathds{1}\{\gamma(x,\zeta) = m \} \right) \\
    & \leq 1- \sum_{x\in \cX} \min(\frac{\alpha}{T}, P_X(x)), \label{eq:pf_alpha/M_capped}
\end{align}
where (\ref{eq:pf_alpha/M_capped}) follows directly from the $\frac{\alpha}{T}$-capped column sum in (\ref{eq:alpha/M_capped_column_sum}). This completes the proof.
\end{proof}

We shall introduce two constructions of $(P_m, \gamma)$ under which the watermarking system achieves the optimal value $1 - \sum_{x\in \cX} \min\left(\tfrac{\alpha}{T}, P_X(x)\right)$ in Sections \ref{sec:construction_A} and \ref{sec:construction_B}. Given the two propositions above, we only need to show that these constructions indeed satisfy the properties described earlier.

\section{Decoder Representation via Key Patterns}
Since the decoder $\gamma$ is deterministic, for each key value the decoder assigns a unique decision value $\hat{m} \in [0:T]$ to every pair $(x,\zeta)$ . Consequently, we can directly use a length $N = |\cX|$ vector to denote each key, whose $x$-th entry indicates the corresponding decision value $\hat{m}$. 

For instance, suppose a key is $\zeta=(0, 1, 2)$, then the decoder $\gamma$ maps the pair $(x,\zeta)=(1, (0, 1, 2))$ to $\hat{m} = 0$ and the pair $(x,\zeta)=(2, (0, 1, 2))$ to $\hat{m} = 1$.

Table \ref{tab:deterministic_dec_example} gives a simple illustration of how a deterministic decoder $\gamma$ maps a pair $(x, \zeta)$ to the decoded message $\hat{m}$. We further denote by $\zeta_i$ the $i$-th component of $\zeta$, which also implies that $\gamma(x = i, \zeta) = \zeta_i$. To account for all possible decoding functions, we consider every vector $\zeta$ of length $N$ whose components can take any value in $[0:T]$. Consequently, the secret key set $\cZ$ contains at most $N^{T+1}$ distinct $\zeta$. It will soon become clear that the key set can be reduced without loss of optimality, as given in Definition \ref{def:reduced_key_set}.

\begin{table}
\centering
\begin{tabular}{c|c|c|c}
$\gamma(x,\zeta)$   & ~$x = 1$ & $x = 2$ & ~$x = 3$  \\ 
\hline
$\zeta = (0, 1, 2)$ & $0$     & $1$     & $2$       \\ 
\hline
$\zeta = (0, 2, 1)$ & $0$     & $2$     & $1$           \\ 
\hline
$\zeta = (2, 0, 1)$ & $2$     & $0$     & $1$           \\ 
\hline
$\zeta = (1, 0, 2)$ & $1$     & $0$     & $2$             \\ 
\hline
$\zeta = (1, 2, 0)$ & $1$      & $2$     & $0$           \\ 
\hline
$\zeta = (2, 1, 0)$ & $2$     & $1$     & $0$            \\ 
\hline
$\zeta = (0, 0, 0)$ & $0$     & $0$     & $0$          
\end{tabular}
\caption{Example of deterministic decoder}
\label{tab:deterministic_dec_example}
\end{table}

\begin{definition}[Reduced Key Set] \label{def:reduced_key_set}
    The canonical key vector is 
    \begin{align}
        b_T^N := (1, 2, \ldots, T, \mathbf{0}_{N-T}),
    \end{align}
    where $\mathbf{0}_{N-T}$ denotes the all-zero vector of length $N-T$. The $(N,T)$ reduced key set is given by the set of permutations of the canonical key, together with an all-zero key
    \begin{align}
        \mathcal{Z}_T^N := \left\{ (b_T^N(\pi(1)), b_T^N(\pi(2)), \ldots, b_T^N(\pi(N))) \mid \pi : [1\!:\!N] \rightarrow [1\!:\!N] \text{ is any permutation} \} \cup \{ \mathbf{0}_N \right\},
    \end{align}
    where $b_T^N(i)$ is the value of $i$-th position in the standard key vector.
\end{definition}

The size of the key set is then reduced from $N^{T+1}$ to 
\begin{align*}
    |\cZ^N_T| = \left( \begin{gathered}N\\ T\end{gathered} \right)   \times T! + 1 =\frac{N!}{\left( N-T\right)  !} + 1.
\end{align*}

\begin{definition}[Preimage and preimage slice]\label{def:preimage}

Let $\gamma$ denote a decoder that fulfills Definition \ref{def:decoder}. The preimage of a message $m\in[0:T]$ is
\[
\gamma^{-1}(m) := \{(x,\zeta):\ \gamma(x,\zeta)=m\} = \{(x,\zeta):\ (x,\zeta_x = m) \}.
\]

For a fixed $x \in \mathcal{X}$, the preimage of $m$ under $\gamma$ at $x$ is
\[
\gamma^{-1}_x(m):= \{\zeta : \gamma(x,\zeta)=m\} = \{\zeta : \zeta_x = m\}. 
\]
Likewise, for fixed $\zeta \in \mathcal{Z}$, the preimage of $m$ at $\zeta$ is
\[
\gamma^{-1}_\zeta(m):= \{x : \gamma(x,\zeta)=m\} = \{x : \zeta_x = m\}.
\]
\end{definition}

In Definition \ref{def:preimage}, we denote the preimage of a message $m$, i.e., the set of all pairs $(x,\zeta)$ that the decoder maps to $m$, as $\gamma^{-1}(m)$. For a fixed $x$, the preimage slice of $m$ is $\gamma_x^{-1}(m)$, which is the set of all keys $\zeta$ such that the decoder maps $(x,\zeta)$ to $m$. We will frequently rely on this notation in the sequel.

\section{Construction A: A Decomposition Approach} \label{sec:construction_A}

\subsection{Decomposition of $P_X$}
We shall apply a decomposition and split $P_X$ into three components: $P_X^{(1)} + P_X^{(2)} + P_X^{(3)} = P_X$, where each $P_X^{(i)}$ induces a corresponding $P_m^{(i)}$ for $m \in [1:T]$, and the resulting construction is given as $P_m = P_m^{(1)} + P_m^{(2)} + P_m^{(3)}$. The reduced key set $\cZ = \cZ^N_T$ is used in this construction. Without loss of generality, we assume the elements of the vector $P_X$ to be nondecreasing. The following notion of $T$-hot representability in Definition \ref{def:T-hot-representable} is needed for the decomposition.

\begin{definition}[$T$-hot Representable] \label{def:T-hot-representable}
    A vector $\boldsymbol{a}$ of length $N$ is said to be $T$-hot representable if $\boldsymbol{a}$ can be written as a non-negative weighted sum of all $T$-hot vectors $\boldsymbol{\omega}$ of length $N$; that is
    \begin{align}
        \boldsymbol{a}= \sum_{\boldsymbol{\omega} \in \Omega_T} \lambda_{\boldsymbol{\omega}}\boldsymbol{\omega}, \label{eq:T-hot_representable}
    \end{align}
    where $\Omega_T$ is the collection of all $T$-hot vectors
    \begin{align}
        \Omega_T = \{\boldsymbol{\omega} \in \{0, 1\}^N : \sum_{i=1}^N \boldsymbol{\omega}(i) = T\}.
    \end{align} 
\end{definition}

A necessary and sufficient condition for a vector to be $T$-hot representable is given below. 

\begin{lemma}\label{lemma:T-hot-representable-criterion} 
     A length-$N$ non-negative vector $\boldsymbol{a}$ is $T$-hot representable if and only if  
    \begin{align}
        T \cdot \max_i \boldsymbol{a}(i) \leq \sum_{i = 1}^N \boldsymbol{a}(i).\label{eq:T-hot-representable}
    \end{align}
\end{lemma}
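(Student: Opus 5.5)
Necessity comes straight from the definition, so the real work is sufficiency, which I will prove by a greedy peeling induction; presumably $1\le T\le N$, as assumed in the Notation paragraph.

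\emph{Necessity.} Suppose $\boldsymbol{a}=\sum_{\boldsymbol{\omega}\in\Omega_T}\lambda_{\boldsymbol{\omega}}\boldsymbol{\omega}$ with $\lambda_{\boldsymbol{\omega}}\ge 0$. Summing all entries gives $\sum_i \boldsymbol{a}(i)=T\Lambda$, where $\Lambda:=\sum_{\boldsymbol{\omega}}\lambda_{\boldsymbol{\omega}}$. Each entry satisfies $\boldsymbol{a}(i)=\sum_{\boldsymbol{\omega}:\boldsymbol{\omega}(i)=1}\lambda_{\boldsymbol{\omega}}\le\Lambda$. Combining the two yields $T\max_i\boldsymbol{a}(i)\le T\Lambda=\sum_i\boldsymbol{a}(i)$, which is (\ref{eq:T-hot-representable}).

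\emph{Sufficiency.} I will induct on the number of nonzero entries of $\boldsymbol{a}$. Write $S=\sum_i\boldsymbol{a}(i)$. If $S=0$, take all $\lambda_{\boldsymbol{\omega}}=0$. Otherwise, (\ref{eq:T-hot-representable}) forces at least $T$ nonzero entries, since $S\le(\#\text{nonzeros})\cdot\max_i\boldsymbol{a}(i)$.

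Sort the entries so that $a_1\ge a_2\ge\cdots\ge a_N$. Let $\boldsymbol{\omega}$ be the indicator of the $T$ largest entries, and subtract $\lambda\boldsymbol{\omega}$ with the largest step $\lambda$ that preserves both nonnegativity and the inequality (\ref{eq:T-hot-representable}).

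\emph{Main obstacle: checking that the inequality survives.} After the subtraction the sum becomes $S-T\lambda$. For $i\le T$ the entries become $a_i-\lambda$, and the constraint $T(a_1-\lambda)\le S-T\lambda$ is just the original inequality, so it always holds. For indices $j>T$ the entries stay at $a_j$, and the condition $Ta_j\le S-T\lambda$ can fail once $\lambda$ is large. So I choose
\[
\lambda=\min\Big(a_T,\;\tfrac{S}{T}-a_{T+1}\Big),
\]
with $a_{T+1}:=0$ if $T=N$. This is nonnegative because $Ta_{T+1}\le Ta_1\le S$. A minor point is that the maximum after the step may move to an index beyond $T$; the choice of $\lambda$ covers exactly that case.

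\emph{Progress.} At the new vector, one of two things happens:
\begin{enumerate}
\item[(a)] $\lambda=a_T$. Then entry $T$ becomes zero, so the count of nonzeros drops.
\item[(b)] $\lambda=\tfrac{S}{T}-a_{T+1}$. Then equality holds in (\ref{eq:T-hot-representable}), i.e.\ $T\max=S'$, with the maximum attained at $a_{T+1}$.
\end{enumerate}

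In the tight case (b), equality $T\max=S'$ forces one of two things: either exactly $T$ nonzero entries, all equal to the maximum, or at least $T+1$ nonzero entries. In the second situation, at least one entry among $1,\dots,T$ that was at or above $a_{T+1}$ remains.

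To guarantee termination I will use the potential ``number of nonzeros plus number of entries strictly below the maximum.'' In a tight vector every subsequent greedy step keeps equality, and since $T\max=S$, every step lowers the maximum-level entries uniformly. Hence each step either zeroes an entry or merges a new entry into the maximum level, so the potential strictly decreases and the process stops in at most $2N$ steps. Alternatively, the finite convex-combination argument can be replaced by a max-flow or Hall's theorem argument on a bipartite graph, but I would attempt the peeling approach first.

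When the process stops, $\boldsymbol{a}$ has been decomposed as the sum of the subtracted $\lambda\boldsymbol{\omega}$ terms, which is (\ref{eq:T-hot_representable}).
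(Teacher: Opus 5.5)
Your proposal is correct and essentially follows the paper's proof. The necessity argument is identical, and your step $\lambda=\min\bigl(a_T,\,S/T-a_{T+1}\bigr)$ is exactly the coefficient $\lambda_{\boldsymbol{\omega}}$ of Algorithm~\ref{alg:T-hot-representable}, with the same invariant check on indices inside and outside $\supp(\boldsymbol{\omega})$. Your two-phase termination argument (count nonzeros before tightness, then count entries strictly below the maximum once $T\max=S$) repackages the paper's strictly growing index set $I=\{i:\boldsymbol{\nu}(i)=0 \text{ or } T\boldsymbol{\nu}(i)=\sum_j\boldsymbol{\nu}(j)\}$.
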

\begin{proof}
    The proof of the ``if” direction is via Algorithm \ref{alg:T-hot-representable}, which constructs a set of $\lambda_{\boldsymbol{\omega}}$'s for any $T$-hot representable vector; we defer the proof of correctness of this algorithm to Appendix \ref{sec:correctness_T-hot-representable}.

\begin{algorithm}[h]
\caption{$T$-hot representable}
\label{alg:T-hot-representable}
\begin{algorithmic}[1]
\State \textbf{Input:} $N$ length $T$-hot representable vector $\boldsymbol{a}$; Index set $A = \emptyset $
\State $\boldsymbol{\nu} \gets \boldsymbol{a}$
\While{$\boldsymbol{\nu} > \mathbf{0}_N$}
    \State $A \gets \{ \text{index of the $T$ largest entries in } \boldsymbol{\nu} \}$
    \State $\boldsymbol{\omega} \gets \mathbf{1}_A$
    \State $\lambda_{\boldsymbol{\omega}} \gets \frac{1}{T}\min \left( \min_{j \in \supp(\boldsymbol{\omega})} \left( T \boldsymbol{\nu}(j)\right)  ,\min_{j\notin \supp(\boldsymbol{\omega})} \left( \sum^{N}_{i=1} \boldsymbol{\nu}(i)-T \boldsymbol{\nu}(j)\right)  \  \right)$ \label{alg-eq:lambda}
    \State $\boldsymbol{\nu} \gets \boldsymbol{\nu} - \lambda_{\boldsymbol{\omega}}\boldsymbol{\omega}$ \label{alg-eq:update_v}
\EndWhile
\State \textbf{Output:} $\{\lambda_{\boldsymbol{\omega}}, \omega\in \Omega_T\}$
\end{algorithmic}
\end{algorithm}

    For the ``only if"    direction, we can rewrite, using (\ref{eq:T-hot_representable}), the $i$-th entry of $\boldsymbol{a}$ as follows
    \begin{align}
        \boldsymbol{a}(i) = \sum_{\boldsymbol{\omega} \in \Omega_T} \lambda_{\boldsymbol{\omega}}\boldsymbol{\omega}(i).
    \end{align}
    Thus, the sum of $\boldsymbol{a}$ is 
    \begin{align}
        \sum_{i=1}^N \boldsymbol{a}(i) & = \sum_{i=1}^N \sum_{\boldsymbol{\omega} \in \Omega_T} \lambda_{\boldsymbol{\omega}}\boldsymbol{\omega}(i)=  \sum_{\boldsymbol{\omega} \in \Omega_T} \lambda_{\boldsymbol{\omega}} \left( \sum_{i=1}^N \boldsymbol{\omega}(i) \right) = T \sum_{\boldsymbol{\omega} \in \Omega_T} \lambda_{\boldsymbol{\omega}}
    \end{align}
    For all entry $i \in [1:N]$, the value of $T\cdot \boldsymbol{a}(i)$ is 
    \begin{align}
        T\cdot \boldsymbol{a}(i) & = T\cdot \sum_{\boldsymbol{\omega} \in \Omega_T} \lambda_{\boldsymbol{\omega}}\boldsymbol{\omega}(i) \leq T\cdot \sum_{\boldsymbol{\omega} \in \Omega_T} \lambda_{\boldsymbol{\omega}} = \sum_{i=1}^N \boldsymbol{a}(i), \label{eq:T-hot_rep_proof_ineq}
    \end{align}
    where the inequality in (\ref{eq:T-hot_rep_proof_ineq}) follows from $\boldsymbol{\omega}(i)\leq 1$ for all $i$. Therefore, (\ref{eq:T-hot-representable}) follows from (\ref{eq:T-hot_rep_proof_ineq}), which completes the proof. 
\end{proof}

The precise decomposition of $P_X$ is stated in Proposition \ref{prop:split_px}, which also ensures that $P_X^{(1)} + P_X^{(2)}$ is bounded above by $\tfrac{\alpha}{T}$, conforming with the $\tfrac{\alpha}{T}$-capped column sum condition in our construction.

\begin{prop}[$P_X$ Decomposition] \label{prop:split_px}
    Let $P_X$ be a distribution with $P_X(1) \leq P_X(2) \leq \dots \leq P_X(N)$. There exists a decomposition $P_X^{(1)}+P_X^{(2)}+P_X^{(3)} = P_X$ such that 
    \begin{itemize}
        \item[1.] $P_X^{(1)}$ is $T$-hot representable;
        \item[2.] $P_X^{(2)}$ is a non-negative non-decreasing vector with at most $T-1$  positive components, all of which are located in the final entries;
        \item[3.] $P_X^{(3)}(x) = \max(P_X(x) -\tfrac{\alpha}{T}, 0), \forall x$.
    \end{itemize}
    Moreover, either $P^{(2)}_X(x)=0$ for all $x\in\cX$, or 
    \begin{align}
    T\cdot\max_{x\in \cX}P_X^{(1)}(x)=\sum_{x\in\cX}P_X^{(1)}(x).\label{eqn:PX2}
    \end{align}
\end{prop}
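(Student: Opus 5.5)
Set $c := \min(\tfrac{\alpha}{T}, P_X(\cdot))$ componentwise, i.e.\ $c(x) = \min(\tfrac{\alpha}{T}, P_X(x))$. Property 3 forces $P_X^{(3)} = P_X - c$, so the whole task is to split the capped vector $c$ into $P_X^{(1)} + P_X^{(2)}$. Since $P_X$ is nondecreasing, $c$ is nondecreasing and nonnegative, with $c(x) \le \tfrac{\alpha}{T}$ for every $x$. The plan is to peel off the top entries of $c$ one at a time until Lemma \ref{lemma:T-hot-representable-criterion} applies to what is left.

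\emph{Case 1: $c$ is already $T$-hot representable.} If $T\cdot c(N) \le \sum_x c(x)$, then by Lemma \ref{lemma:T-hot-representable-criterion} we take $P_X^{(1)} = c$ and $P_X^{(2)} = 0$, and every requirement holds.

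\emph{Case 2: $T\cdot c(N) > \sum_x c(x)$.} In this case we lower the top entries to a common level $t$. For $t \ge 0$ define $c_t(x) := \min(c(x), t)$ and
\[
h(t) := \sum_x c_t(x) - T t .
\]
The function $h$ is continuous and piecewise linear, with $h(0) = 0$ and $h(c(N)) < 0$. On any interval where exactly $k$ entries of $c$ exceed $t$, its slope is $k - T$.

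The key step is to choose $t^*$ as the largest $t \in [0, c(N)]$ with $h(t) \ge 0$, and set
\[
P_X^{(1)} = c_{t^*}, \qquad P_X^{(2)} = c - c_{t^*}.
\]
Continuity gives $h(t^*) = 0$, which is exactly $T \max_x P_X^{(1)}(x) = \sum_x P_X^{(1)}(x)$, the equality (\ref{eqn:PX2}), provided $\max_x c_{t^*}(x) = t^*$, i.e.\ $t^* \le c(N)$, which holds by the choice of interval. This equality also shows that $P_X^{(1)}$ is $T$-hot representable via the lemma.

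The vector $P_X^{(2)}(x) = (c(x) - t^*)^+$ is nonnegative and nondecreasing because $c$ is. Its positive entries are exactly the $k$ indices with $c(x) > t^*$, and these are the last entries by monotonicity.

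It remains to show $k \le T-1$, and I expect this to be the main obstacle. Just to the right of $t^*$ the function $h$ is negative, so its slope there must be nonpositive, which gives only $k \le T$. If $k = T$, the slope is $0$, and $h$ stays at $0$ on an interval to the right of $t^*$. This interval extends either to the next breakpoint or to $c(N)$, and in both cases it contradicts the maximality of $t^*$, or contradicts $h(c(N)) < 0$ when the interval reaches $c(N)$.

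I should also rule out the degenerate case $t^* = 0$ while $c \ne 0$. Near $0$ the slope is $k_0 - T$, where $k_0$ is the number of positive entries of $c$. If $k_0 \ge T$, then $h > 0$ just right of $0$, so $t^* > 0$. If $k_0 < T$, then $t^* = 0$ is allowed: we get $P_X^{(1)} = 0$, which trivially satisfies the equality and is $T$-hot representable, and $P_X^{(2)} = c$ has $k_0 \le T-1$ positive entries, as required.

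Finally, summing the three pieces gives $P_X^{(1)} + P_X^{(2)} + P_X^{(3)} = c + (P_X - c) = P_X$, and all three components are nonnegative.
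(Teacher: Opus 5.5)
Your proof is correct. It arrives at the same decomposition as the paper: the top entries of the paper's $\boldsymbol{a}$ (your $c$) are lowered to a common level. The existence argument, however, is different.

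The paper searches over the number of lowered entries. For $k=\tilde K,\dots,T-1$ it sets $y=\tfrac{1}{T-k}\sum_{i\le N-k}\boldsymbol{a}(i)$ and stops at the first $k$ with $y\ge\boldsymbol{a}(N-k)$; the search is guaranteed to stop by $k=T-1$. It then proves $\boldsymbol{a}(N-K+1)>y$ by contradiction. That step needs two subcases: minimality of $K$ when $K>\tilde K$, and non-representability of $\boldsymbol{a}$ when $K=\tilde K$.

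You instead fix the level first, as the largest zero $t^*$ of the continuous piecewise-linear function $h$, and read off $K$ as the number of entries above it. The bound $K\le T-1$ then falls out of the sign of the slope $K-T$ just to the right of $t^*$. You need no termination argument and no separation of $K>\tilde K$ from $K=\tilde K$. Since the slope $k(t)-T$ is nonincreasing, $h$ is concave. Hence $t^*$ is the only zero of $h$ with fewer than $T$ entries above it, and it coincides with the paper's $y$.

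The paper's discrete route gives an explicit finite procedure, together with the closed form $y=\tfrac{1}{T-K}\sum_{i\le N-K}\boldsymbol{a}(i)$ and $K\ge\tilde K$. Both facts are reused later in the proof that $R\ge U$. They also follow quickly from your construction:
\begin{itemize}
\item In Case 2 one has $c(N)=\tfrac{\alpha}{T}$; otherwise $c=P_X$ and $Tc(N)<\alpha<1=\sum_x c(x)$, contradicting Case 2.
\item So every entry equal to $\tfrac{\alpha}{T}$ lies above $t^*<c(N)$, which gives $K\ge\tilde K$.
\item The equation $h(t^*)=0$ rearranges to the closed form for $y$.
\end{itemize}

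One small slip: when $k_0=T$, $h$ is identically zero just to the right of $0$, not positive. The conclusion $t^*>0$ still follows from maximality. In any case that paragraph is redundant, because your slope argument at $t^*$ already covers $t^*=0$.
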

Note that (\ref{eqn:PX2}) is essentially for $P_X^{(1)}$ to satisfy (\ref{eq:T-hot-representable}) with an equality. The proof of this proposition is given in Appendix \ref{appendix:prop3}.

The intuition to construct $P_m^{(1)}(x,\zeta)$ and $P_m^{(2)}(x,\zeta)$ is as follows: to satisfy the $\tfrac{\alpha}{T}$-capped column sum property in (\ref{eq:alpha/M_capped_column_sum}), we need to distribute $\min(\tfrac{\alpha}{T}, P_X(x)) = P_X^{(1)}(x) + P_X^{(2)}(x)$ over those $(x,\zeta)$ pairs that the associates with message $m$ in the decoder. This is equivalent to specifying values for all pairs $(x,\zeta) \in \gamma^{-1}(m)$.  The third component, $P_m^{(3)}(x,\zeta)$ is then used to guarantee that $P_m$ satisfies the row-sum invariance condition, while also implicitly ensuring that the overall construction meets the $\alpha$-bounded total sum condition in (\ref{eq:alpha_bounded_total_sum}). In the following three sections, we describe in detail how $P_m^{(i)}$ is constructed from the decomposed components of $P_X^{(i)}$. 

For convenience of notation, we will use the same properties, such as column-sum and row-sum invariance, to characterize any joint distributions $P^{(i)}_m$ (having the same dimensions as $P_m$).
In particular, when the column-sum target is a another distribution $Q_X$ (instead of $P_X$), we say that
$P^{(i)}_m$ satisfies \textbf{$Q_X$-column-sum invariance}, defined by
\begin{equation}
    \label{eq:QX_column_sum_invariance}
    \sum_{\zeta \in \cZ}P^{(i)}_m(x,\zeta) = Q_X(x),\; \forall x \in \cX, \;  \forall m\in[1:T].
\end{equation}

\subsection{Contruction of $P_m^{(1)}$ } \label{sec:construct_Pm1}
In the decomposition in Proposition \ref{prop:split_px}, we chose $P_X^{(1)}$ to be $T$-hot representable. We shall use this representation to construct $P_m^{(1)}$.

\subsubsection{Structural-Basis via $T$-hot Representative}

By Defintion \ref{def:T-hot-representable}, $P_X^{(1)}$ can be expressed as a non-negative weighted sum of all $T$-hot vectors $\boldsymbol{\omega}$. Leveraging this representation, we construct $P_m^{(1)}$ by assigning appropriate weights to the pairs  $(x,\zeta) \in \gamma^{-1}(m)$.

\begin{prop} [Construction of $P_m^{(1)}$] \label{prop:construct_pm1}
    If $P_X^{(1)}$ is $T$-hot representable, then there exists a distribution $P_m^{(1)}$ such that $P_X^{(1)}$ can be perfectly allocated to the pairs $(x, \zeta) \in \gamma^{-1}(m)$ via the method of $T$-hot representation of $P_X^{(1)}$, and moreover $P_m^{(1)}$ itself satisfies $P_X^{(1)}$-column-sum and row-sum invariance.
\end{prop}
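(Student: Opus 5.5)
We build $P_m^{(1)}$ explicitly from the $T$-hot representation of $P_X^{(1)}$, using the reduced key set $\cZ_T^N$ of Definition~\ref{def:reduced_key_set}. Write
\[
P_X^{(1)}=\sum_{\boldsymbol{\omega}\in\Omega_T}\lambda_{\boldsymbol{\omega}}\boldsymbol{\omega},
\]
with $\lambda_{\boldsymbol{\omega}}\ge 0$; such weights exist by Lemma~\ref{lemma:T-hot-representable-criterion} and can be produced by Algorithm~\ref{alg:T-hot-representable}. For a $T$-hot vector $\boldsymbol{\omega}$ with support $S=\supp(\boldsymbol{\omega})$, let $\cZ(\boldsymbol{\omega})\subset\cZ_T^N$ be the set of keys $\zeta$ whose nonzero entries lie exactly on $S$. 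These are the keys that place the labels $1,\dots,T$ bijectively on $S$, so $|\cZ(\boldsymbol{\omega})|=T!$.

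For such a key, each $x\in S$ is decoded to $\zeta_x\in[1:T]$, and each $m\in[1:T]$ occurs at exactly one $x\in S$. The construction puts mass only on the pairs $(x,\zeta)\in\gamma^{-1}(m)$:
\[
P_m^{(1)}(x,\zeta)=\sum_{\boldsymbol{\omega}:\,\zeta\in\cZ(\boldsymbol{\omega})}\frac{\lambda_{\boldsymbol{\omega}}}{(T-1)!}\,\mathds{1}\{\zeta_x=m\}.
\]
All other entries are zero. Each $\zeta\in\cZ_T^N\setminus\{\mathbf{0}_N\}$ belongs to exactly one $\cZ(\boldsymbol{\omega})$, namely the one with $\supp(\boldsymbol{\omega})=\{x:\zeta_x\neq 0\}$, so the sum has a single term. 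Nonnegativity is immediate. By construction all the mass lies in $\gamma^{-1}(m)$, which is the "perfect allocation" claim.

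\textbf{Row-sum invariance.} Fix a key $\zeta\in\cZ(\boldsymbol{\omega})$. For every $m\in[1:T]$ there is exactly one $x$ with $\zeta_x=m$, so
\[
\sum_x P_m^{(1)}(x,\zeta)=\frac{\lambda_{\boldsymbol{\omega}}}{(T-1)!},
\]
which does not depend on $m$. The all-zero key carries zero mass for every $m$. Hence the row sums are invariant in $m$.

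\textbf{Column sums.} Fix $x$ and $m$. A key $\zeta\in\cZ(\boldsymbol{\omega})$ contributes only if $x\in\supp(\boldsymbol{\omega})$ and $\zeta_x=m$. The remaining $T-1$ labels are then arranged on the other $T-1$ support positions, giving exactly $(T-1)!$ such keys. Therefore
\[
\sum_\zeta P_m^{(1)}(x,\zeta)=\sum_{\boldsymbol{\omega}}\lambda_{\boldsymbol{\omega}}\boldsymbol{\omega}(x)=P_X^{(1)}(x)
\]
for every $m$. This is $P_X^{(1)}$-column-sum invariance in the sense of \eqref{eq:QX_column_sum_invariance}. It also gives equality in the allocation: the mass placed on $\gamma^{-1}(m)$ at column $x$ is exactly $P_X^{(1)}(x)$, which later feeds the $\tfrac{\alpha}{T}$-capped column sum condition.

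The argument is essentially bookkeeping. The main point to verify is the counting: each support set is paired with exactly the $T!$ permutation keys, and exactly $(T-1)!$ of them put label $m$ at a given position. With this uniform spread over label arrangements, row and column invariance hold simultaneously, and the cases $T=1$ and the all-zero key are trivially covered. The total mass of $P_m^{(1)}$ is $\sum_{\boldsymbol{\omega}}\lambda_{\boldsymbol{\omega}}T=\sum_x P_X^{(1)}(x)$, so $P_m^{(1)}$ is a sub-probability piece, as intended for the later sum $P_m=P_m^{(1)}+P_m^{(2)}+P_m^{(3)}$.
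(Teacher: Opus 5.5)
Your proposal is correct and follows essentially the same route as the paper's proof. Both place $\lambda_{\boldsymbol{\omega}}/(T-1)!$ on each $(x,\zeta)$ with $\zeta$ in the $T!$ keys supported on $\supp(\boldsymbol{\omega})$ and $\zeta_x=m$, obtain the column sums from the $(T-1)!$ count of such keys, and obtain the row sums from the fact that each nonzero key carries each label exactly once. Your remark that $P_m^{(1)}$ is only a sub-probability piece is a small clarification the paper leaves implicit.
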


We prove Proposition \ref{prop:construct_pm1} in Appendix \ref{sec:proof_pm1} using Algorithm \ref{alg:iterative_construction}. The procedure builds on the expression
\begin{align}
    P_X^{(1)} = \sum_{\boldsymbol{\omega} \in \Omega_T} \lambda_{\boldsymbol{\omega}}\boldsymbol{\omega}.
\end{align}
For each $T$-hot vector $\boldsymbol{\omega} \in \Omega_T$, we identify a structurally equivalent key set $\cZ_{\boldsymbol{\omega}}$, consisting of all elements whose non-zero entries occur in exactly the same positions as those of $\boldsymbol{\omega}$. Formally, we define
\begin{align}
    \cZ_{\boldsymbol{\omega}} = \{\zeta \in \cZ^N_T : \supp(\zeta) = \supp(\boldsymbol{\omega}) \},
\end{align}
where $\supp(\cdot)$ denotes the support of a vector, i.e., the set of indices of its non-zero components. 
Next, we distribute the weight $\lambda_{\boldsymbol{\omega}}$ uniformly across all rows $P_m^{(1)}(\cdot, \zeta)$ with $\zeta \in \cZ_{\boldsymbol{\omega}}$. More precisely, for each $\zeta \in \cZ_{\boldsymbol{\omega}}$, we uniformly assign a value to every pair $(x,\zeta)$ for which the decoder $\gamma$ outputs the message $m$. Thus, given each $x \in [1:N]$ and $m \in [1:T]$, we assign
\begin{align}
    P_m^{(1)}(x,\zeta) =  \frac{\lambda_{\boldsymbol{\omega}}}{\bigl|\cZ_{\boldsymbol{\omega}} \cap \gamma_x^{-1}(m)\bigr|}, \; \forall \zeta \in \cZ_{\boldsymbol{\omega}} \cap \gamma_x^{-1}(m).
\end{align}
Note that the value of $|\cZ_{\boldsymbol{\omega}} \cap \gamma_x^{-1}(m)\bigr| = (T-1)!$. 
We iterate this procedure until all positive $\lambda_{\boldsymbol{\omega}}$'s have been handled. This naturally guarantees that the full mass of $P_X^{(1)}(x)$ is assigned to the pairs $(x,\zeta) \in \gamma^{-1}(m)$ within $P_m^{(1)}$, while simultaneously maintaining both the row-sum and column-sum invariance. 

\begin{algorithm}[h]
\caption{$P_m^{(1)}$: Structural-Basis via $T$-hot Representative}
\label{alg:iterative_construction}
\begin{algorithmic}[1]
\State \textbf{Input:} $N$ length $T$-hot representable $P_X^{(1)} = \sum_{\boldsymbol{\omega} \in \Omega_T} \lambda_{\boldsymbol{\omega}}\boldsymbol{\omega}$, key set $\cZ^N_T$, and initialized $P_m^{(1)} = \mathbf{0}, m \in [1:T]$.
\For{$\boldsymbol{\omega} \in \Omega_T$}
    \State $P^{(1)}_m(x, \zeta) \gets \frac{\lambda_{\boldsymbol{\omega}}}{(T-1)!}, \text{if} \; \zeta \in \cZ_{\boldsymbol{\omega}} \; \text{and}\; (x,\zeta) \in \gamma^{-1}(m)$
\EndFor
\end{algorithmic}
\end{algorithm}

\subsubsection{Example of $P_m^{(1)}$ Construction}

We present an example with for $N=4$ and $T=3$, where $P_X^{(1)} = (0.05, 0.1, 0.15, 0.15)$. The $T$-hot representable decomposition of this $P_X^{(1)}$ is
\begin{align}
    P_X^{(1)} = (0.05, 0.1, 0.15, 0.15) 
    = 0.1 \times (0,1,1,1) + 0.05 \times (1,0,1,1).
\end{align}
From this representation, we obtain the positive coefficients $\lambda_{(0,1,1,1)} = 0.1$ and $\lambda_{(1,0,1,1)} = 0.05$. Using Algorithm \ref{alg:iterative_construction}, we then construct the associated $P_m^{(1)}$, which is shown in Table \ref{tab:pm1_example2}. The red keys belong to the set $\cZ_{(1,0,1,1)}$, and the green keys belong to the set $\cZ_{(0,1,1,1)}$.

\begin{table}[]
\resizebox{\linewidth}{!}{%
\begin{tabular}{c||c|c|c|c||c|c|c|c||c|c|c|c}
            & \multicolumn{4}{c||}{$P_1^{(1)} (m=1)$} & \multicolumn{4}{c||}{$P_2^{(1)} (m=2)$} & \multicolumn{4}{c}{$P_3^{(1)} (m=3)$} \\ \hline
$\zeta$     & $x=1$   & $x=2$   & $x=3$   & $x=4$   & $x=1$   & $x=2$   & $x=3$   & $x=4$   & $x=1$   & $x=2$   & $x=3$   & $x=4$   \\ \hline\hline
$(3,2,1,0)$ &         &         &    \cellcolor{yellow!25}     &         &         &     \cellcolor{yellow!25}    &         &         &   \cellcolor{yellow!25}      &         &         &         \\\hline
$(2,3,1,0)$ &         &         &   \cellcolor{yellow!25}      &         &  \cellcolor{yellow!25}       &         &         &         &         &   \cellcolor{yellow!25}      &         &         \\\hline
$(3,1,2,0)$ &         &     \cellcolor{yellow!25}    &         &         &         &         &      \cellcolor{yellow!25}   &        &     \cellcolor{yellow!25}    &         &         &         \\\hline
$(1,3,2,0)$ &   \cellcolor{yellow!25}      &         &         &         &         &         &     \cellcolor{yellow!25}    &         &         &   \cellcolor{yellow!25}      &         &         \\\hline
$(2,1,3,0)$ &         &   \cellcolor{yellow!25}      &         &         &      \cellcolor{yellow!25}   &         &         &         &         &         &      \cellcolor{yellow!25}   &         \\\hline
$(1,2,3,0)$ &  \cellcolor{yellow!25}       &         &         &         &         &      \cellcolor{yellow!25}   &         &         &         &         &   \cellcolor{yellow!25}      &         \\\noalign{\hrule height 1pt}
$(3,2,0,1)$ &         &         &         &    \cellcolor{yellow!25}     &         &       \cellcolor{yellow!25}  &         &         &     \cellcolor{yellow!25}    &         &         &         \\\hline
$(2,3,0,1)$ &         &         &         &     \cellcolor{yellow!25}    &   \cellcolor{yellow!25}      &         &         &         &         &    \cellcolor{yellow!25}     &         &         \\\hline
\textcolor{red}{$(3,0,2,1)$} &         &         &         &     \cellcolor{yellow!25}\textcolor{red}{$\frac{\lambda_{(1,0,1,1)}}{(3-1)!} = \frac{0.05}{2}$}&         &         &      \cellcolor{yellow!25}\textcolor{red}{$\frac{\lambda_{(1,0,1,1)}}{(3-1)!} = \frac{0.05}{2}$}&         &      \cellcolor{yellow!25}\textcolor{red}{$\frac{\lambda_{(1,0,1,1)}}{(3-1)!} = \frac{0.05}{2}$}   &         &         &         \\\hline
\textcolor{Green}{$(0,3,2,1)$} &         &         &         &       \cellcolor{yellow!25}\textcolor{Green}{$\frac{\lambda_{(0,1,1,1)}}{(3-1)!} = \frac{0.1}{2}$}&         &         &    \cellcolor{yellow!25}\textcolor{Green}{$\frac{\lambda_{(0,1,1,1)}}{(3-1)!} = \frac{0.1}{2}$}&         &         &     \cellcolor{yellow!25}\textcolor{Green}{$\frac{\lambda_{(0,1,1,1)}}{(3-1)!} = \frac{0.1}{2}$}    &         &         \\\hline
\textcolor{red}{$(2,0,3,1)$} &         &         &         &   \cellcolor{yellow!25}\textcolor{red}{$\frac{\lambda_{(1,0,1,1)}}{(3-1)!} = \frac{0.05}{2}$}&  \cellcolor{yellow!25}\textcolor{red}{$\frac{\lambda_{(1,0,1,1)}}{(3-1)!} = \frac{0.05}{2}$}       &         &         &         &         &         &    \cellcolor{yellow!25}\textcolor{red}{$\frac{\lambda_{(1,0,1,1)}}{(3-1)!} = \frac{0.05}{2}$}&         \\\hline
\textcolor{Green}{$(0,2,3,1)$} &         &         &         &  \cellcolor{yellow!25}\textcolor{Green}{$\frac{\lambda_{(0,1,1,1)}}{(3-1)!} = \frac{0.1}{2}$}&         &    \cellcolor{yellow!25}\textcolor{Green}{$\frac{\lambda_{(0,1,1,1)}}{(3-1)!} = \frac{0.1}{2}$}     &         &         &         &         &    \cellcolor{yellow!25}\textcolor{Green}{$\frac{\lambda_{(0,1,1,1)}}{(3-1)!} = \frac{0.1}{2}$}&         \\\noalign{\hrule height 1pt}
$(3,1,0,2)$ &         &    \cellcolor{yellow!25}     &         &         &         &         &         &    \cellcolor{yellow!25}     &   \cellcolor{yellow!25}      &         &         &    \\\hline 
$(1,3,0,2)$ &    \cellcolor{yellow!25}     &         &         &         &         &         &         &    \cellcolor{yellow!25}     &         &   \cellcolor{yellow!25}      &         &    \\\hline 
\textcolor{red}{$(3,0,1,2)$} &         &         &  \cellcolor{yellow!25}\textcolor{red}{$\frac{\lambda_{(1,0,1,1)}}{(3-1)!} = \frac{0.05}{2}$}&         &         &         &         &   \cellcolor{yellow!25}\textcolor{red}{$\frac{\lambda_{(1,0,1,1)}}{(3-1)!} = \frac{0.05}{2}$}&      \cellcolor{yellow!25}\textcolor{red}{$\frac{\lambda_{(1,0,1,1)}}{(3-1)!} = \frac{0.05}{2}$}   &         &         &    \\\hline 
\textcolor{Green}{$(0,3,1,2)$} &         &         &  \cellcolor{yellow!25}\textcolor{Green}{$\frac{\lambda_{(0,1,1,1)}}{(3-1)!} = \frac{0.1}{2}$}&         &         &         &         &    \cellcolor{yellow!25}\textcolor{Green}{$\frac{\lambda_{(0,1,1,1)}}{(3-1)!} = \frac{0.1}{2}$}&         & \cellcolor{yellow!25}\textcolor{Green}{$\frac{\lambda_{(0,1,1,1)}}{(3-1)!} = \frac{0.1}{2}$}        &         &    \\\hline 
\textcolor{red}{$(1,0,3,2)$} & \cellcolor{yellow!25}\textcolor{red}{$\frac{\lambda_{(1,0,1,1)}}{(3-1)!} = \frac{0.05}{2}$}        &         &         &         &         &         &         & \cellcolor{yellow!25}\textcolor{red}{$\frac{\lambda_{(1,0,1,1)}}{(3-1)!} = \frac{0.05}{2}$}&         &         &  \cellcolor{yellow!25}\textcolor{red}{$\frac{\lambda_{(1,0,1,1)}}{(3-1)!} = \frac{0.05}{2}$}&    \\\hline 
\textcolor{Green}{$(0,1,3,2)$} &         &  \cellcolor{yellow!25}\textcolor{Green}{$\frac{\lambda_{(0,1,1,1)}}{(3-1)!} = \frac{0.1}{2}$}       &         &         &         &         &         &\cellcolor{yellow!25}\textcolor{Green}{$\frac{\lambda_{(0,1,1,1)}}{(3-1)!} = \frac{0.1}{2}$}&         &         &   \cellcolor{yellow!25}\textcolor{Green}{$\frac{\lambda_{(0,1,1,1)}}{(3-1)!} = \frac{0.1}{2}$}&    \\\noalign{\hrule height 1pt}
$(2,1,0,3)$ &         &  \cellcolor{yellow!25}       &         &         &      \cellcolor{yellow!25}   &         &         &         &         &         &         &  \cellcolor{yellow!25}  \\\hline 

$(1,2,0,3)$ &  \cellcolor{yellow!25}       &         &         &         &         & \cellcolor{yellow!25}        &         &         &         &         &         & \cellcolor{yellow!25}   \\\hline

\textcolor{red}{$(2,0,1,3)$} &         &         &  \cellcolor{yellow!25}\textcolor{red}{$\frac{\lambda_{(1,0,1,1)}}{(3-1)!} = \frac{0.05}{2}$}&         & \cellcolor{yellow!25}\textcolor{red}{$\frac{\lambda_{(1,0,1,1)}}{(3-1)!} = \frac{0.05}{2}$}        &         &         &         &         &         &         & \cellcolor{yellow!25}\textcolor{red}{$\frac{\lambda_{(1,0,1,1)}}{(3-1)!} = \frac{0.05}{2}$} \\\hline 

\textcolor{Green}{$(0,2,1,3)$} &         &         & \cellcolor{yellow!25}\textcolor{Green}{$\frac{\lambda_{(0,1,1,1)}}{(3-1)!} = \frac{0.1}{2}$}&         &         &  \cellcolor{yellow!25}\textcolor{Green}{$\frac{\lambda_{(0,1,1,1)}}{(3-1)!} = \frac{0.1}{2}$}       &         &         &         &         &         &\cellcolor{yellow!25}\textcolor{Green}{$\frac{\lambda_{(0,1,1,1)}}{(3-1)!} = \frac{0.1}{2}$}  \\\hline 

\textcolor{red}{$(1,0,2,3)$} & \cellcolor{yellow!25}\textcolor{red}{$\frac{\lambda_{(1,0,1,1)}}{(3-1)!} = \frac{0.05}{2}$}        &         &         &         &         &         &  \cellcolor{yellow!25}\textcolor{red}{$\frac{\lambda_{(1,0,1,1)}}{(3-1)!} = \frac{0.05}{2}$}&         &         &         &         & \cellcolor{yellow!25}\textcolor{red}{$\frac{\lambda_{(1,0,1,1)}}{(3-1)!} = \frac{0.05}{2}$}   \\\hline 

\textcolor{Green}{$(0,1,2,3)$} &         & \cellcolor{yellow!25}\textcolor{Green}{$\frac{\lambda_{(0,1,1,1)}}{(3-1)!} = \frac{0.1}{2}$}        &         &         &         &         &   \cellcolor{yellow!25}\textcolor{Green}{$\frac{\lambda_{(0,1,1,1)}}{(3-1)!} = \frac{0.1}{2}$}&         &         &         &         & \cellcolor{yellow!25}\textcolor{Green}{$\frac{\lambda_{(0,1,1,1)}}{(3-1)!} = \frac{0.1}{2}$} \\\noalign{\hrule height 1pt}
$(0,0,0,0)$ &         &        &         &         &         &         &        &         &         &         &         & 
\end{tabular}}
\caption{Example of $P_m^{(1)}$ construction given $P_X^{(1)} = (0.05,0.1,0.15,0.15)$}
\label{tab:pm1_example2}
\end{table}

\subsection{Contruction for $P_m^{(2)}$}
In this section, we construct $P_m^{(2)}$ by distributing $P_X^{(2)}$ to the pairs $(x, \zeta) \in \gamma^{-1}(m)$ in $P_m^{(2)}$, so that $P_m^{(1)} + P_m^{(2)}$ satisfies the $\tfrac{\alpha}{T}$-capped column sum condition, and the column-sum invariance property.

\subsubsection{Layered Step-Vector Allocation of $P_m^{(2)}$}

 To introduce the algorithm, we utilize a subset of keys, which are called anchored keys.

\begin{definition}[Anchored Key Set]
   Given that $P_X^{(2)}$ is non-decreasing and has at most $K \leq T-1$ positive components at its tail, we define the anchored key set as
\begin{align}
        S(P_X^{(2)}) 
        &:= \left\{ \zeta \in \cZ^{N}_{T} : \; \supp\big(P_X^{(2)}\big) \subset \supp(\zeta)\right\} \notag \\
        &= \left\{ \zeta \in \cZ^{N}_{T} : \; [N-K+1:N] \subset \supp(\zeta)\right\} 
        := S(K).
\end{align}
\end{definition}
In other words, $S(P_X^{(2)})$ (or equivalently $S(K)$) consists of all keys $\zeta$ whose last $K$ components contain only positive values in the set $[1:T]$.

Since $P_X^{(2)}$ is a non-decreasing vector with at most $K \leq T-1$ positive entries at its tail, it can be represented as a sum of step-vectors of varying lengths:
\begin{align}
    P_X^{(2)} = (\mathbf{0}_{N-K},\eta_{N-K+1}, \cdots, \eta_N) 
    & = \Delta\eta_{N-K+1} \boldsymbol{u}_{K} + \Delta\eta_{N-K+2} \boldsymbol{u}_{K-1} + \ldots + \Delta\eta_{N} \boldsymbol{u}_{1} \notag \\
    & = \sum_{j=1}^K \Delta\eta_{N-j+1} \boldsymbol{u}_{j}\label{eq:express_eta_i_same_eta},
\end{align}
where we list nonzero components of $P_X^{(2)}$ with positive value $\eta_i$, and $\Delta\eta_i = \eta_i - \eta_{i-1} \geq 0 $ denotes the increment relative to the previous component, where the vector $\boldsymbol{u}_j$ is a step-vector with $j$ nonzero entries in the tail:
\begin{align}
    \boldsymbol{u}_j = (\mathbf{0}_{N-j+1}, \mathbf{1}_{j}).
\end{align}

The algorithm sequentially assign the components $\Delta\eta_{N-j+1}\boldsymbol{u}_j$ in $P_X^{(2)}$ to $P_m^{(2)}$. The procedure is given in Algorithm \ref{alg:fill_eta_layered}.

\begin{algorithm}[h]
\caption{$P_m^{(2)}$: Layered Step-vector Allocation}
\label{alg:fill_eta_layered}
\begin{algorithmic}[1]
\State \textbf{Input:} Decomposition $P_X^{(2)} = \sum_{j=1}^K \Delta\eta_{N-j+1}\boldsymbol{u}_j$; Initialized $P^{(2)}_m = \mathbf{0}, m \in [1:T]$ 
\For{$j = K, K-1,\ldots ,1$}
    \For{$x = N-j+1, \ldots ,N$}
        \State $P_m^{(2)}(x, \zeta) \gets P_m^{(2)}(x, \zeta) + \frac{\Delta \eta_{N-j+1}}{|S(K) \cap \gamma_x^{-1}(m)|}$,   $ \forall \ \zeta \in S(K) \cap \gamma_x^{-1}(m)$
    \EndFor
\EndFor
\end{algorithmic}
\end{algorithm}

\begin{prop} \label{prop:construct_pm2}
    The construction $P_m^{(2)}$ given in Algorithm \ref{alg:fill_eta_layered} assigns only positive values to pairs $(x, \zeta) \in \gamma^{-1}(m)$, and the combined distribution $P_m^{(1)} + P_m^{(2)}$ satisfies the $\tfrac{\alpha}{T}$-capped column-sum constraint, while $P_m^{(2)}$ itself preserves the $P_X^{(2)}$-column-sum invariance.
\end{prop}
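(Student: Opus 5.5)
We verify the three claims directly from Algorithm \ref{alg:fill_eta_layered}. The key counting fact is that for every $x\in[N-K+1:N]$ and $m\in[1:T]$ the set $S(K)\cap\gamma_x^{-1}(m)$ is nonempty, and its size does not depend on $x$ or $m$.

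\emph{Counting step.} Since $K\le T-1<T$, there are keys $\zeta\in\cZ^N_T$ whose support contains $[N-K+1:N]$. A key in $S(K)$ with $\zeta_x=m$ arises as follows:
\begin{itemize}
\item fix value $m$ at position $x$;
\item place the remaining $T-1$ nonzero labels so that the other $K-1$ tail positions are covered;
\item put the leftover $T-K$ labels among the $N-K$ head positions.
\end{itemize}
This gives the count
\begin{align*}
c_K := (T-1)(T-2)\cdots(T-K+1)\cdot \frac{(N-K)!}{(N-T)!},
\end{align*}
with the first factor equal to $1$ when $K=1$. It is positive and uniform in $(x,m)$. Every increment in the algorithm is therefore well defined and nonnegative. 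Moreover, by construction it is placed only on pairs with $\gamma(x,\zeta)=m$, so $P_m^{(2)}$ is supported on $\gamma^{-1}(m)$. Positivity follows from $\Delta\eta_i\ge 0$; where $\Delta\eta_i=0$ the mass added is zero, which is harmless.

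\emph{Column sums.} For a tail position $x=N-i+1$, the loop over $j$ touches $x$ exactly for $j\ge i$. Each such pass adds total mass $\Delta\eta_{N-j+1}$ to column $x$, spread over $c_K$ keys. Summing over these passes telescopes:
\begin{align*}
\sum_{j=i}^{K}\Delta\eta_{N-j+1}=\eta_{N-i+1}=P_X^{(2)}(x).
\end{align*}
Head positions receive nothing. This gives $P_X^{(2)}$-column-sum invariance for every $m$.

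\emph{Capped column sums.} Proposition \ref{prop:construct_pm1} puts all of $P_X^{(1)}(x)$ on $\gamma^{-1}(m)$. Combined with the previous step,
\begin{align*}
\sum_\zeta\bigl(P_m^{(1)}+P_m^{(2)}\bigr)(x,\zeta)\,\mathds{1}\{\gamma(x,\zeta)=m\}=P_X^{(1)}(x)+P_X^{(2)}(x).
\end{align*}
By Proposition \ref{prop:split_px}(3), the right-hand side equals $P_X(x)-P_X^{(3)}(x)=\min(\tfrac{\alpha}{T},P_X(x))$. This yields (\ref{eq:alpha/M_capped_column_sum}) with equality.

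The main obstacle is the counting step, namely showing that $|S(K)\cap\gamma_x^{-1}(m)|$ is positive. This is exactly where $K\le T-1$ is used, so that the tail can be covered by nonzero labels. The value is uniform by the symmetry of $\cZ^N_T$ under relabeling of values and of positions. Row-sum invariance is not claimed here; it is left to $P_m^{(3)}$.
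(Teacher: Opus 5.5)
Your proposal is correct and follows the paper's proof essentially step for step: your $c_K$ coincides with the paper's $C_{N,T,K}=\binom{T-1}{K-1}(K-1)!\,\frac{(N-K)!}{(N-T)!}$, the column sums come from the same telescoping sum $\sum_{j=i}^{K}\Delta\eta_{N-j+1}=\eta_{N-i+1}$, and the capped condition follows by adding the $P_m^{(1)}$ contribution from Proposition \ref{prop:construct_pm1}. Your final identity $P_X^{(1)}(x)+P_X^{(2)}(x)=P_X(x)-P_X^{(3)}(x)$ is also stated correctly, whereas the paper's appendix has a typo at that step and writes $1-P_X^{(3)}(x)$.
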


The proof of Proposition $\ref{prop:construct_pm2}$ is given in Appendix \ref{sec:proof_pm2_prop}.
Note that the row-sum property is not preserved in this construction. The reason is that with the restriction of using only pairs $(x, \zeta) \in \gamma^{-1}(m)$, it is impossible to do so. 
To see this, consider $P_1^{(2)}$ shown in Table \ref{tab:pm2_simple}. If we assign a positive value to $\zeta = (3,2,0,1)$ or $\zeta = (2,3,0,1)$, then for $m = 2,3$, these $\zeta$ values have no yellow cells in the third and fourth columns to assign any positive value. This would break the row-sum property. In fact, there is no choice of $\zeta$ that can entirely avoid the row-sum imbalance. The algorithm selects $\zeta$ properly so that this imbalance is well controlled by utilizing only the anchor key positions. We will regain the row-sum invariance with $P_m^{(3)}$.

\subsubsection{Example of $P_m^{(2)}$ Construction }

In this section, we provide two examples to demonstrate the construction of $P_m^{(2)}$: one corresponding to the step-vector $P_X^{(2)}$ and the other is a more general $P_X^{(2)}$, shown in Tables \ref{tab:pm2_simple} and \ref{tab:pm2_general}, respectively. In both cases, we choose $N = 4$ and $T = 3$, which yields three joint distribution tables of $P_m^{(2)}$ in total for each example. The pairs $(x,\zeta)$ that are decoded as $m$ are highlighted in yellow.

In Table \ref{tab:pm2_simple}, we choose $P_X^{(2)} = (0,0,0.1,0.1) = 0.1\boldsymbol{u}_2$, which contains at most $T-1 = 2$ positive entries at the tail. We then apply Algorithm \ref{alg:fill_eta_layered} to obtain the corresponding $P_m^{(2)}$, performing the procedure only for $j = 2$. In this table, the first two columns sums up to the values $P_X^{(2)}(1)=P_X^{(2)}(2)=0$, so we leave these two columns blank. Note that in the third column, we do not spread $P_X^{(2)}(3) = 0.1$ evenly over all pairs $(x,\zeta) \in \gamma^{-1}(m)$, i.e., over all yellow cells. The reason is that some choices of $\zeta$ would create more imbalance in the row-sums of the remaining $P_m^{(2)}$ tables. Assigning the value to row $(3, 0, 1,2)$ is preferable than using $(3,2,0,1)$ or $(2,3,0,1)$, because in this configuration the row-sum imbalance occurs in only one table rather than in both tables. The assigned keys are exactly the anchored key set $S(K)$.

In Table \ref{tab:pm2_general}, we construct $P_m^{(2)}$ using Algorithm \ref{alg:fill_eta_layered} for the general distribution $P_X^{(2)} = (0,0,0.1,0.15)$. This $P_X^{(2)}$ can be expressed as a weighted sum of step-vectors:
\begin{align}
    P_X^{(2)} = (0,0,0.1,0.3) = 0.1\times(0,0,1,1) + 0.05\times(0,0,0,1) = 0.1\boldsymbol{u}_2 + 0.05\boldsymbol{u}_1.
\end{align}
Accordingly, the algorithm first allocate vectors on $0.1\boldsymbol{u}_2$ and then on $0.05\boldsymbol{u}_1$. In Table \ref{tab:pm2_general}, the allocation corresponding to $0.1\boldsymbol{u}_2$ is shown in black, while that for $0.05\boldsymbol{u}_1$ is shown in blue.

\begin{table}[]
\resizebox{\linewidth}{!}{%
\begin{tabular}{c||c|c|c|c||c|c|c|c||c|c|c|c}
            & \multicolumn{4}{c||}{$P_1^{(2)} (m=1)$} & \multicolumn{4}{c||}{$P_2^{(2)} (m=2)$} & \multicolumn{4}{c}{$P_3^{(2)} (m=3)$} \\ \hline
$\zeta$     & $x=1$   & $x=2$   & $x=3$   & $x=4$   & $x=1$   & $x=2$   & $x=3$   & $x=4$   & $x=1$   & $x=2$   & $x=3$   & $x=4$   \\ \hline\hline
$(3,2,1,0)$ &         &         &    \cellcolor{yellow!25}     &         &         &     \cellcolor{yellow!25}    &         &         &   \cellcolor{yellow!25}      &         &         &         \\\hline
$(2,3,1,0)$ &         &         &   \cellcolor{yellow!25}      &         &  \cellcolor{yellow!25}       &         &         &         &         &   \cellcolor{yellow!25}      &         &         \\\hline
$(3,1,2,0)$ &         &     \cellcolor{yellow!25}    &         &         &         &         &      \cellcolor{yellow!25}   &        &     \cellcolor{yellow!25}    &         &         &         \\\hline
$(1,3,2,0)$ &   \cellcolor{yellow!25}      &         &         &         &         &         &     \cellcolor{yellow!25}    &         &         &   \cellcolor{yellow!25}      &         &         \\\hline
$(2,1,3,0)$ &         &   \cellcolor{yellow!25}      &         &         &      \cellcolor{yellow!25}   &         &         &         &         &         &      \cellcolor{yellow!25}   &         \\\hline
$(1,2,3,0)$ &  \cellcolor{yellow!25}       &         &         &         &         &      \cellcolor{yellow!25}   &         &         &         &         &   \cellcolor{yellow!25}      &         \\\noalign{\hrule height 1pt}
$(3,2,0,1)$ &         &         &         &    \cellcolor{yellow!25}     &         &       \cellcolor{yellow!25}  &         &         &     \cellcolor{yellow!25}    &         &         &         \\\hline
$(2,3,0,1)$ &         &         &         &     \cellcolor{yellow!25}    &   \cellcolor{yellow!25}      &         &         &         &         &    \cellcolor{yellow!25}     &         &         \\\hline
$(3,0,2,1)$ &         &         &         &     \cellcolor{yellow!25}$\tfrac{0.1}{4}$&         &         &      \cellcolor{yellow!25}$\tfrac{0.1}{4}$&         &      \cellcolor{yellow!25}   &         &         &         \\\hline
$(0,3,2,1)$ &         &         &         &       \cellcolor{yellow!25}$\tfrac{0.1}{4}$&         &         &    \cellcolor{yellow!25}$\tfrac{0.1}{4}$&         &         &     \cellcolor{yellow!25}    &         &         \\\hline
$(2,0,3,1)$ &         &         &         &   \cellcolor{yellow!25}$\tfrac{0.1}{4}$&  \cellcolor{yellow!25}       &         &         &         &         &         &    \cellcolor{yellow!25}$\tfrac{0.1}{4}$&         \\\hline
$(0,2,3,1)$ &         &         &         &  \cellcolor{yellow!25}$\tfrac{0.1}{4}$&         &    \cellcolor{yellow!25}     &         &         &         &         &    \cellcolor{yellow!25}$\tfrac{0.1}{4}$&         \\\noalign{\hrule height 1pt}
$(3,1,0,2)$ &         &    \cellcolor{yellow!25}     &         &         &         &         &         &    \cellcolor{yellow!25}     &   \cellcolor{yellow!25}      &         &         &    \\\hline 
$(1,3,0,2)$ &    \cellcolor{yellow!25}     &         &         &         &         &         &         &    \cellcolor{yellow!25}     &         &   \cellcolor{yellow!25}      &         &    \\\hline 
$(3,0,1,2)$ &         &         &  \cellcolor{yellow!25}$\tfrac{0.1}{4}$&         &         &         &         &   \cellcolor{yellow!25}$\tfrac{0.1}{4}$&      \cellcolor{yellow!25}   &         &         &    \\\hline 
$(0,3,1,2)$ &         &         &  \cellcolor{yellow!25}$\tfrac{0.1}{4}$&         &         &         &         &    \cellcolor{yellow!25}$\tfrac{0.1}{4}$&         & \cellcolor{yellow!25}        &         &    \\\hline 
$(1,0,3,2)$ & \cellcolor{yellow!25}        &         &         &         &         &         &         & \cellcolor{yellow!25}$\tfrac{0.1}{4}$&         &         &  \cellcolor{yellow!25}$\tfrac{0.1}{4}$&    \\\hline 
$(0,1,3,2)$ &         &  \cellcolor{yellow!25}       &         &         &         &         &         &\cellcolor{yellow!25}$\tfrac{0.1}{4}$&         &         &   \cellcolor{yellow!25}$\tfrac{0.1}{4}$&    \\\noalign{\hrule height 1pt}
$(2,1,0,3)$ &         &  \cellcolor{yellow!25}       &         &         &      \cellcolor{yellow!25}   &         &         &         &         &         &         &  \cellcolor{yellow!25}  \\\hline 

$(1,2,0,3)$ &  \cellcolor{yellow!25}       &         &         &         &         & \cellcolor{yellow!25}        &         &         &         &         &         & \cellcolor{yellow!25}   \\\hline 

$(2,0,1,3)$ &         &         &  \cellcolor{yellow!25}$\tfrac{0.1}{4}$&         & \cellcolor{yellow!25}        &         &         &         &         &         &         & \cellcolor{yellow!25}$\tfrac{0.1}{4}$   \\\hline 

$(0,2,1,3)$ &         &         & \cellcolor{yellow!25}$\tfrac{0.1}{4}$&         &         &  \cellcolor{yellow!25}       &         &         &         &         &         &\cellcolor{yellow!25}$\tfrac{0.1}{4}$   \\\hline 

$(1,0,2,3)$ & \cellcolor{yellow!25}        &         &         &         &         &         &  \cellcolor{yellow!25}$\tfrac{0.1}{4}$&         &         &         &         & \cellcolor{yellow!25}$\tfrac{0.1}{4}$   \\\hline 

$(0,1,2,3)$ &         & \cellcolor{yellow!25}        &         &         &         &         &   \cellcolor{yellow!25}$\tfrac{0.1}{4}$&         &         &         &         & \cellcolor{yellow!25}$\tfrac{0.1}{4}$ \\\noalign{\hrule height 1pt}
$(0,0,0,0)$ &         &        &         &         &         &         &        &         &         &         &         & 
\end{tabular}}
\caption{Example of simple case step-vector allocation: $P_X^{(2)} = (0,0,0.1,0.1)$}
\label{tab:pm2_simple}
\end{table}

\begin{table}[]
\resizebox{\linewidth}{!}{%
\begin{tabular}{c||c|c|c|c||c|c|c|c||c|c|c|c}
            & \multicolumn{4}{c||}{$P_1^{(2)} (m=1)$} & \multicolumn{4}{c||}{$P_2^{(2)} (m=2)$} & \multicolumn{4}{c}{$P_3^{(2)} (m=3)$} \\ \hline
$\zeta$     & $x=1$   & $x=2$   & $x=3$   & $x=4$   & $x=1$   & $x=2$   & $x=3$   & $x=4$   & $x=1$   & $x=2$   & $x=3$   & $x=4$   \\ \hline\hline
$(3,2,1,0)$ &         &         &    \cellcolor{yellow!25}     &         &         &     \cellcolor{yellow!25}    &         &         &   \cellcolor{yellow!25}      &         &         &         \\\hline
$(2,3,1,0)$ &         &         &   \cellcolor{yellow!25}      &         &  \cellcolor{yellow!25}       &         &         &         &         &   \cellcolor{yellow!25}      &         &         \\\hline
$(3,1,2,0)$ &         &     \cellcolor{yellow!25}    &         &         &         &         &     \cellcolor{yellow!25}    &        &     \cellcolor{yellow!25}    &         &         &         \\\hline
$(1,3,2,0)$ &   \cellcolor{yellow!25}      &         &         &         &         &         &     \cellcolor{yellow!25}    &         &         &   \cellcolor{yellow!25}      &         &         \\\hline
$(2,1,3,0)$ &         &   \cellcolor{yellow!25}      &         &         &      \cellcolor{yellow!25}   &         &         &         &         &         &      \cellcolor{yellow!25}   &         \\\hline
$(1,2,3,0)$ &  \cellcolor{yellow!25}       &         &         &         &         &      \cellcolor{yellow!25}   &         &         &         &         &   \cellcolor{yellow!25}      &         \\\noalign{\hrule height 1pt}
$(3,2,0,1)$ &         &         &         &    \cellcolor{yellow!25}     &         &       \cellcolor{yellow!25}  &         &         &     \cellcolor{yellow!25}    &         &         &         \\\hline
$(2,3,0,1)$ &         &         &         &     \cellcolor{yellow!25}    &   \cellcolor{yellow!25}      &         &         &         &         &    \cellcolor{yellow!25}     &         &         \\\hline
$(3,0,2,1)$ &         &         &         &     \cellcolor{yellow!25}$\tfrac{0.1}{4}$+\textcolor{blue}{$\tfrac{0.05}{4}$}&         &         &      \cellcolor{yellow!25}$\tfrac{0.1}{4}$&         &      \cellcolor{yellow!25}   &         &         &         \\\hline
$(0,3,2,1)$ &         &         &         &       \cellcolor{yellow!25}$\tfrac{0.1}{4}$+\textcolor{blue}{$\tfrac{0.05}{4}$}&         &         &    \cellcolor{yellow!25}$\tfrac{0.1}{4}$&         &         &     \cellcolor{yellow!25}    &         &         \\\hline
$(2,0,3,1)$ &         &         &         &   \cellcolor{yellow!25}$\tfrac{0.1}{4}$+\textcolor{blue}{$\tfrac{0.05}{4}$}&  \cellcolor{yellow!25}       &         &         &         &         &         &    \cellcolor{yellow!25}$\tfrac{0.1}{4}$&         \\\hline
$(0,2,3,1)$ &         &         &         &  \cellcolor{yellow!25}$\tfrac{0.1}{4}$+\textcolor{blue}{$\tfrac{0.05}{4}$}&         &    \cellcolor{yellow!25}     &         &         &         &         &    \cellcolor{yellow!25}$\tfrac{0.1}{4}$&         \\\noalign{\hrule height 1pt}
$(3,1,0,2)$ &         &    \cellcolor{yellow!25}     &         &         &         &         &         &    \cellcolor{yellow!25}     &   \cellcolor{yellow!25}      &         &         &    \\\hline 
$(1,3,0,2)$ &    \cellcolor{yellow!25}     &         &         &         &         &         &         &    \cellcolor{yellow!25}     &         &   \cellcolor{yellow!25}      &         &    \\\hline 
$(3,0,1,2)$ &         &         &  \cellcolor{yellow!25}$\tfrac{0.1}{4}$&         &         &         &         &   \cellcolor{yellow!25}$\tfrac{0.1}{4}$+\textcolor{blue}{$\tfrac{0.05}{4}$}&      \cellcolor{yellow!25}   &         &         &    \\\hline 
$(0,3,1,2)$ &         &         &  \cellcolor{yellow!25}$\tfrac{0.1}{4}$&         &         &         &         &    \cellcolor{yellow!25}$\tfrac{0.1}{4}$+\textcolor{blue}{$\tfrac{0.05}{4}$}&         & \cellcolor{yellow!25}        &         &    \\\hline 
$(1,0,3,2)$ & \cellcolor{yellow!25}        &         &         &         &         &         &         & \cellcolor{yellow!25}$\tfrac{0.1}{4}$+\textcolor{blue}{$\tfrac{0.05}{4}$}&         &         &  \cellcolor{yellow!25}$\tfrac{0.1}{4}$&    \\\hline 
$(0,1,3,2)$ &         &  \cellcolor{yellow!25}       &         &         &         &         &         &\cellcolor{yellow!25}$\tfrac{0.1}{4}$+\textcolor{blue}{$\tfrac{0.05}{4}$}&         &         &   \cellcolor{yellow!25}$\tfrac{0.1}{4}$&    \\\noalign{\hrule height 1pt}
$(2,1,0,3)$ &         &  \cellcolor{yellow!25}       &         &         &      \cellcolor{yellow!25}   &         &         &         &         &         &         &  \cellcolor{yellow!25}  \\\hline 

$(1,2,0,3)$ &  \cellcolor{yellow!25}       &         &         &         &         & \cellcolor{yellow!25}        &         &         &         &         &         & \cellcolor{yellow!25}   \\\hline 

$(2,0,1,3)$ &         &         &  \cellcolor{yellow!25}$\tfrac{0.1}{4}$&         & \cellcolor{yellow!25}        &         &         &         &         &         &         & \cellcolor{yellow!25}$\tfrac{0.1}{4}$+\textcolor{blue}{$\tfrac{0.05}{4}$}   \\\hline 

$(0,2,1,3)$ &         &         & \cellcolor{yellow!25}$\tfrac{0.1}{4}$&         &         &  \cellcolor{yellow!25}       &         &         &         &         &         &\cellcolor{yellow!25}$\tfrac{0.1}{4}$+\textcolor{blue}{$\tfrac{0.05}{4}$}   \\\hline 

$(1,0,2,3)$ & \cellcolor{yellow!25}        &         &         &         &         &         &  \cellcolor{yellow!25}$\tfrac{0.1}{4}$&         &         &         &         & \cellcolor{yellow!25}$\tfrac{0.1}{4}$+\textcolor{blue}{$\tfrac{0.05}{4}$}   \\\hline 

$(0,1,2,3)$ &         & \cellcolor{yellow!25}        &         &         &         &         &   \cellcolor{yellow!25}$\tfrac{0.1}{4}$&         &         &         &         & \cellcolor{yellow!25}$\tfrac{0.1}{4}$+\textcolor{blue}{$\tfrac{0.05}{4}$} \\ \noalign{\hrule height 1pt}
$(0,0,0,0)$ &         &        &         &         &         &         &        &         &         &         &         & 
\end{tabular}}
\caption{Example of general case layered step-vector allocation: $P_X^{(2)} = (0,0,0.1,0.15)$}
\label{tab:pm2_general}
\end{table}

\subsubsection{Row-sum Imbalance in $P_m^{(2)}$}

We express the row-sum imbalance in a mathematical form given in the following definition.

\begin{definition}[Row-sum Imbalance]
    The row-sum imbalance generated by \(P_m^{(2)}\) at each row \(\zeta \in \cZ^N_T\) is denoted as \(U_m(\zeta)\), and is defined as:
\[
    U_m(\zeta) = \max_{\mu \in [1:T]}\left( \sum_{x \in \cX} P_{\mu}^{(2)}(x, \zeta) \right) - \sum_{x \in \cX} P_{m}^{(2)}(x, \zeta), \quad \forall m \in [1:T],\, \zeta \in \cZ^N_T.
\]
The total row-sum imbalance associated with \(P_m^{(2)}\) is then given by $U_m = \sum_{\zeta \in \cZ^N_T} U_m(\zeta)$.
\end{definition}

In this definition, the row-sum imbalance $U_m(\zeta)$ for a given row $\zeta$ is defined as the difference between the row sum of the current table $P_m^{(2)}$ and the maximum row sum attained by that row in any table $P_{\mu}^{(2)}$ with $\mu \in [1:T]$. The total row-sum imbalance $U_m$ is then obtained by summing the imbalances over all rows $\zeta$.

\begin{prop} \label{prop:U_value}
Let $P_m^{(2)}$ be produced by Algorithm \ref{alg:fill_eta_layered}. The total row-sum imbalance induced by $P_m^{(2)}$ is $U_m = \sum_{j=1}^{K} \Delta\eta_{N-j+1}(T-j)$, for any $m \in [1:T]$.
\end{prop}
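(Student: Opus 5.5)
The plan is to compute the row sums of each table $P_m^{(2)}$ explicitly and row by row, using the symmetry of the anchored key set $S(K)$ under relabeling of messages. Rows $\zeta\notin S(K)$ receive no mass in any table, so $U_m(\zeta)=0$ there; only $\zeta\in S(K)$ matter.

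\textbf{Step 1: the denominator is a single constant.} I will show that $C:=|S(K)\cap\gamma_x^{-1}(m)|$ does not depend on $x\in[N-K+1:N]$ or on $m\in[1:T]$, and that $C=|S(K)|/T$. Every $\zeta\in S(K)$ has a nonzero value at position $x$. A permutation of the values $1,\dots,T$ maps $S(K)$ to itself and sends the event $\{\zeta_x=m\}$ to $\{\zeta_x=m'\}$. Hence the $T$ possible values of $\zeta_x$ are equally frequent over $S(K)$.

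\textbf{Step 2: row sums for fixed $\zeta\in S(K)$.} Write $L_j=[N-j+1:N]$. Layer $j$ of Algorithm \ref{alg:fill_eta_layered} adds $\Delta\eta_{N-j+1}/C$ to $P_m^{(2)}(x,\zeta)$ for each $x\in L_j$ with $\zeta_x=m$. The entries of $\zeta$ on $L_K\supseteq L_j$ are distinct and nonzero, since $\zeta$ is a permutation of $b_T^N$. So the value $m$ occurs at most once in $L_j$, and
\begin{align*}
\sum_{x}P_m^{(2)}(x,\zeta)=\frac{1}{C}\sum_{j=1}^K\Delta\eta_{N-j+1}\,\mathds{1}\{m\in\zeta(L_j)\}.
\end{align*}
The maximum over $\mu$ is attained at $\mu=\zeta_N$. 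Indeed, $N\in L_j$ for every $j\ge1$, so $\zeta_N$ lies in every $\zeta(L_j)$, and the maximum equals $\frac1C\sum_j\Delta\eta_{N-j+1}$. This gives
\begin{align*}
U_m(\zeta)=\frac{1}{C}\sum_{j=1}^K\Delta\eta_{N-j+1}\,\mathds{1}\{m\notin\zeta(L_j)\}.
\end{align*}

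\textbf{Step 3: sum over $\zeta$ and exchange the sums.} Summing over $\zeta\in S(K)$ and exchanging the order of summation, it remains to count, for each $j$, the keys in $S(K)$ whose values on $L_j$ avoid $m$. The events $\{\zeta_x=m\}$ for $x\in L_j$ are disjoint, and by Step 1 each has size $C$. So exactly $jC$ keys contain $m$ on $L_j$, and $|S(K)|-jC=(T-j)C$ keys avoid it. Substituting gives $U_m=\sum_{j=1}^K\Delta\eta_{N-j+1}(T-j)$, independent of $m$.

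The main obstacle is Step 2. The identification of the maximizing table needs care: it relies on the algorithm always using the full anchored set $S(K)$, rather than $S(j)$, together with the fact that the last coordinate belongs to every layer. I will verify these points against the worked examples in Tables \ref{tab:pm2_simple} and \ref{tab:pm2_general}. In Table \ref{tab:pm2_simple}, $C=4$, $T=3$, and the formula gives $0.1\cdot(3-2)$, which matches the table.
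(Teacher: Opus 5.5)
Your proof is correct and follows essentially the same route as the paper's appendix proof. In both, the maximizing table is $\mu=\zeta_N$ because position $N$ lies in every layer, and the count $|S(K)|-jC=(T-j)C$ then gives $U_m=\sum_{j=1}^K\Delta\eta_{N-j+1}(T-j)$.

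The differences are only in presentation. You compute each total row sum directly, whereas the paper defines per-layer imbalances $U_m^j(\zeta)$ and justifies exchanging $\sum_j$ with $\max_\mu$. You also get $|S(K)|=TC$ from relabeling symmetry, while the paper uses the explicit binomial formulas for $|S(K)|$ and $C_{N,T,K}$.
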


Since the imbalance value $U_m$ is independent of $m$, we simply write it as $U$.  The proof of Proposition \ref{prop:U_value} is given in Appendix \ref{sec:prop_U_proof}, where we show that the total imbalance generated by $P_m^{(2)}$ is the sum of the imbalances resulting from assigning each step-vector $\Delta\eta_{N-j+1}\boldsymbol{u}_j$ to $P_m^{(2)}$ in Algorithm \ref{alg:fill_eta_layered}.

For instance, the $P_m^{(2)}$ imbalance produced by the step-vector $0.1\boldsymbol{u}_2$ allocation in Table $\ref{tab:pm2_simple}$ is
\[ U = 0.1 \times (3-2) = 0.1, \]
while the $P_m^{(2)}$ imbalance resulting from  $0.1\boldsymbol{u}_2 + 0.05\boldsymbol{u}_1$ allocation in Table \ref{tab:pm2_general} is
\[ U = 0.1 \times (3-2) + 0.05 \times (3-1) = 0.2. \]

\subsection{Construction of $P_m^{(3)}$}

In this section, we develop an algorithm to obtain $P_m^{(3)}$ from $P_X^{(3)}$ so that we can correct the row-sum imbalance arising from the construction of $P_m^{(2)}$ in Algorithm \ref{alg:fill_eta_layered}.

\subsubsection{Regaining Row-sum Invariance}

 The procedure for constructing $P_m^{(3)}$ is presented in Algorithm \ref{alg:pm3_proportional_fill} below. We first observe that, in the $P_X$ decomposition of Proposition \ref{prop:split_px}, the vector $P_X^{(3)}$ has at most $\tilde{K}$ positive entries at its tail, where $\tilde{K}$ is the number of components of $P_X$ that are greater or equal to $\tfrac{\alpha}{T}$. We then define $R$ as the sum of all elements of $P_X^{(3)}$, i.e.,
\begin{align}
    R = \sum_{i = 1}^N P_X^{(3)}(i) = \sum_{i = N-\tilde{K}+1}^N P_X^{(3)}(i).
\end{align}

\begin{algorithm}[h]
\caption{$P_m^{(3)}$ Construction}
\label{alg:pm3_proportional_fill}
\begin{algorithmic}[1]
\State \textbf{Input:} $P_X^{(3)}$ with $\tilde{K}$ positive value; Decomposition $P_X^{(2)} = \sum_{j=1}^K \Delta\eta_{N-j+1}\boldsymbol{u}_j$; $K$ number of nonzero elements in $P_X^{(2)}$; Total Imbalance $U$; Initialized $P_m^{(3)} =\mathbf{0}$, $m \in [1:T]$
\For{$j = K, \cdots, 1$}
    \For{$x = N-\tilde{K}+1, \ldots, N$}
        \State $P_m^{(3)}(x,\zeta) \gets P_m^{(3)}(x,\zeta) + \frac{\Delta\eta_{N-j+1}(T-j)}{|S(K) \setminus \cup_{l=N-j+1}^N \gamma^{-1}_l(m)|}\times \frac{P_X^{(3)}(x)}{R}$ , if $\zeta \in S(K) \setminus \cup_{l=N-j+1}^N \gamma^{-1}_l(m)  $ \label{alg-eq:allocate_imbalance}
    \EndFor
\EndFor
\State $P_m^{(3)}(x, \mathbf{0}_N) \gets P_X^{(3)}(x) \times (1-\frac{U}{R}), \forall x$
\end{algorithmic}
\end{algorithm}

Intuitively, this step uses the remaining entries of $P_X^{(3)}$ to offset the row-sum imbalance introduced when constructing $P_m^{(2)}$. Thus, for each $m \in [1:T]$, we must allocate the precise imbalance amount $U_m(\zeta)$ to the $\zeta$ row in $P_m^{(3)}$ using the available mass in $P_X^{(3)}$. To justify this, we first need to show that the total sum of $P_X^{(3)}$ exceeds the total imbalance $U$, ensuring that $P_X^{(3)}$ contains enough mass to fully correct the imbalance. This leads us to the following proposition. 

\begin{prop} \label{prop:R_larger_than_U}
    Let $P_X^{(2)}$ and $P_X^{(3)}$ be the decomposition of $P_X$ in Proposition \ref{prop:split_px}. Then the sum of $P_X^{(3)}$, denoted by $R$, is always greater than the total row-sum imbalance $U$ induced by Algorithm \ref{alg:fill_eta_layered}.
\end{prop}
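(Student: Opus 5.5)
The plan is a direct comparison: write both $R$ and $U$ in closed form in terms of the decomposition of Proposition \ref{prop:split_px}, and show that $R-U\ge 1-\alpha>0$.

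First I dispose of the trivial case. If $P_X^{(2)}\equiv 0$, then all $\Delta\eta_i=0$ and Proposition \ref{prop:U_value} gives $U=0$. It remains to note $R\ge 0$, and to argue that $R>0$ whenever any rebalancing is actually needed (if $R=0$ and $U=0$, Algorithm \ref{alg:pm3_proportional_fill} simply does nothing). So from here on I assume $P_X^{(2)}\neq 0$, and then (\ref{eqn:PX2}) holds. Write $c:=\max_x P_X^{(1)}(x)$, so that $\sum_x P_X^{(1)}(x)=Tc$.

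Next I simplify $U$. Using $\sum_{j=1}^K\Delta\eta_{N-j+1}=\eta_N$ and $\sum_{j=1}^K j\,\Delta\eta_{N-j+1}=\sum_{i=N-K+1}^N\eta_i$ (both follow from the step-vector expansion (\ref{eq:express_eta_i_same_eta})), Proposition \ref{prop:U_value} becomes
\[
U=T\eta_N-\sum_{x}P_X^{(2)}(x).
\]
For $R$, the three components sum to $P_X$, which has total mass $1$. Hence
\[
R=1-\sum_x P_X^{(1)}(x)-\sum_x P_X^{(2)}(x)=1-Tc-\sum_x P_X^{(2)}(x).
\]
Subtracting, the $\sum_x P_X^{(2)}(x)$ terms cancel:
\[
R-U=1-T\,(c+\eta_N).
\]

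The key remaining inequality is $c+\eta_N\le \alpha/T$. Item 3 of Proposition \ref{prop:split_px} gives $P_X^{(1)}(x)+P_X^{(2)}(x)=\min(\alpha/T,P_X(x))\le\alpha/T$ for every $x$. So it suffices to know that $P_X^{(1)}(N)=c$, i.e.\ that the maximum of $P_X^{(1)}$ is attained at the last index, which lies in the support of $P_X^{(2)}$. Given this, $T(c+\eta_N)\le\alpha$, hence $R-U\ge 1-\alpha>0$ because $\alpha<1$, which proves the strict inequality.

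The main obstacle is justifying $P_X^{(1)}(N)=c$. It does not follow from the three listed properties in the statement of Proposition \ref{prop:split_px} alone, so I would extract it from the explicit construction in Appendix \ref{appendix:prop3}. There, $P_X^{(2)}$ is obtained by clipping the largest entries of the nondecreasing vector $\min(\alpha/T,P_X)$ down to a common level $c$ until equality in (\ref{eq:T-hot-representable}) is reached. That clipping makes $P_X^{(1)}$ nondecreasing and equal to $c$ on the tail $[N-K+1:N]$, and this is exactly the fact needed.
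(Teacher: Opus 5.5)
Your proof is correct and is essentially the paper's computation. Both evaluate $R-U$ directly from total mass one and the defining equation of $y$ (equivalently (\ref{eqn:PX2}), with the maximum of $P_X^{(1)}$ on the tail). The paper expands everything in terms of $P_X$, $y$ and $\tilde K$, while you go through the cleaner identity $U=T\eta_N-\sum_x P_X^{(2)}(x)$. Both arrive at $R-U=1-\alpha$: your bound $c+\eta_N\le \alpha/T$ is in fact an equality, since $P_X^{(2)}\neq 0$ forces $P_X(N)\ge \alpha/T$.

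You are also right that $P_X^{(1)}(N)=\max_x P_X^{(1)}(x)$ must come from the explicit construction in Appendix~\ref{appendix:prop3}, not from the listed properties. For example, take $T=2$, $\alpha=0.99$, $P_X=(0.1,0.15,0.75)$. The split $P_X^{(1)}=(0.1,0.15,0.05)$, $P_X^{(2)}=(0,0,0.445)$, $P_X^{(3)}=(0,0,0.255)$ satisfies every property stated in Proposition~\ref{prop:split_px}, including (\ref{eqn:PX2}). Yet it gives $U=0.445>R=0.255$.
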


The proof of Proposition \ref{prop:R_larger_than_U} is given in Appendix \ref{sec:proof_R_larger_than_U}. Recall Algorithm \ref{alg:fill_eta_layered} applies step-vector allocation layer by layer. Accordingly, we also distribute the imbalance values layer by layer, inserting these values $U_m(\zeta)$ into the row $P_m^{(3)}(\cdot, \zeta)$ for those keys $\zeta$. 
Algorithm \ref{alg:pm3_proportional_fill} assigns values only to keys $\zeta \in S(K) \setminus \cup_{l=N-j+1}^N \gamma^{-1}_l(m)$, since $U_m(\zeta)$ can be positive only for those $\zeta$, as established in Appendix \ref{sec:prop_U_proof}.

\begin{prop} \label{prop:pm3_contruction}
    The construction of $P_m^{(3)}$ in Algorithm \ref{alg:pm3_proportional_fill} leads to $P_m = P_m^{(1)} + P_m^{(2)} + P_m^{(3)}$ satisfying the following properties: row-sum and column-sum invariance, $\tfrac{\alpha}{T}$-capped column sum and the $\alpha$-bounded total sum condition.
\end{prop}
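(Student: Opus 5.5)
We check the four properties one at a time, using the facts already available for each component. These are: Proposition \ref{prop:construct_pm1} for $P_m^{(1)}$; Proposition \ref{prop:construct_pm2} for $P_m^{(2)}$; the imbalance computation behind Proposition \ref{prop:U_value} (Appendix \ref{sec:prop_U_proof}); and $R>U$ from Proposition \ref{prop:R_larger_than_U}. The last fact guarantees that the entries placed on the all-zero key, $P_X^{(3)}(x)(1-U/R)$, are nonnegative.

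\textbf{Column sums.} Fix $x$ and $m$. For each layer $j$, line \ref{alg-eq:allocate_imbalance} of Algorithm \ref{alg:pm3_proportional_fill} spreads mass uniformly over a key set of some size. Summed over that set, it contributes exactly $\Delta\eta_{N-j+1}(T-j)\,P_X^{(3)}(x)/R$. Summing over $j$ gives $U\,P_X^{(3)}(x)/R$. Adding the all-zero key contribution $P_X^{(3)}(x)(1-U/R)$ gives total column sum $P_X^{(3)}(x)$. So $P_m^{(3)}$ has $P_X^{(3)}$-column-sum invariance, and adding the three components gives $P_X$-column-sum invariance.

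\textbf{The $\tfrac{\alpha}{T}$-capped condition.} This follows directly: $P_m^{(3)}\ge 0$, and Proposition \ref{prop:construct_pm2} already gives the bound for $P_m^{(1)}+P_m^{(2)}$.

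\textbf{Row sums.} $P_m^{(1)}$ is row-invariant by Proposition \ref{prop:construct_pm1}. On the zero key, the row sum $\sum_x P_X^{(3)}(x)(1-U/R)=R-U$ does not depend on $m$. The remaining claim, and the main obstacle, is that for every $\zeta\in S(K)$ the row sum of $P_m^{(3)}(\cdot,\zeta)$ equals $U_m(\zeta)$, so that $P_m^{(2)}+P_m^{(3)}$ has row sum $\max_\mu\sum_x P_\mu^{(2)}(x,\zeta)$, independent of $m$. I will prove this layer by layer.

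Fix layer $j$. In Algorithm \ref{alg:fill_eta_layered}, for a key $\zeta\in S(K)$ whose entries on $[N-j+1:N]$ are distinct nonzero values, table $m$ receives $\Delta\eta_{N-j+1}/|S(K)\cap\gamma_x^{-1}(m)|$ if $m$ appears among $\zeta_{N-j+1},\dots,\zeta_N$, and nothing otherwise. The denominator is the same constant $c=|S(K)|/T$ for all $x,m$, by symmetry of permutations. Hence in each layer, row $\zeta$ has sum $\Delta\eta_{N-j+1}/c$ in the $j$ tables whose message appears on the last $j$ positions, and $0$ in the other $T-j$ tables.

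Since $j\le K$ and $\zeta\in S(K)$, the maximum over $\mu$ is attained by a table that receives mass in every layer. Therefore the maxima add across layers, and the per-layer deficits sum to $U_m(\zeta)$. So it suffices to show that, in layer $j$, table $m$ must add $\Delta\eta_{N-j+1}/c$ exactly on the keys $\zeta\in S(K)\setminus\cup_{l=N-j+1}^N\gamma_l^{-1}(m)$, that is, where $m$ does not occur in the last $j$ positions.

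Algorithm \ref{alg:pm3_proportional_fill} puts total row mass $\Delta\eta_{N-j+1}(T-j)/|S(K)\setminus\cup_{l}\gamma_l^{-1}(m)|$ on each such key, after summing the factor $P_X^{(3)}(x)/R$ over $x$. A counting argument shows $|S(K)\setminus\cup_l\gamma_l^{-1}(m)|=|S(K)|(T-j)/T=c(T-j)$. Substituting, the row mass is $\Delta\eta_{N-j+1}/c$, which matches the deficit.

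\textbf{The $\alpha$-bounded total sum.} Decoding only occurs through $\zeta$, so $P_Z=\sum_x P_m(x,\cdot)$. For fixed $x$, sum $P_Z(\zeta)$ over the keys with $\zeta_x\ne0$. The set $\{\zeta:\zeta_x\neq0\}$ and its complement are both invariant under permutations of the other coordinates. I will write $P_Z$ as the row sums of the symmetric parts: $P^{(1)}$ contributes $\sum_{\omega}\lambda_\omega T/(T-1)!$ per key in $\mathcal Z_\omega$, and the $P^{(2)}+P^{(3)}$ rows are constant on $S(K)$, plus mass on $\mathbf 0_N$.

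The target is to bound $\Pr(\zeta_x\ne 0)\le \sum_{\omega\ni x}T\lambda_\omega+(\text{mass on }S(K))\le\alpha$. The $P^{(1)}$ part contributes $T\,P_X^{(1)}(x)\le T\max P_X^{(1)}$. The $S(K)$ block has total mass $T\sum P_X^{(2)}+U$, and only a fraction of it has $\zeta_x\neq0$. Here I would use the equality \eqref{eqn:PX2} and the construction in Proposition \ref{prop:split_px}, namely that $P_X^{(1)}+P_X^{(2)}\le \alpha/T$ on the top entries. The $\mathbf 0_N$ key never contributes. I expect this final inequality to need careful use of the details of the decomposition, and it is the least certain step of the plan.
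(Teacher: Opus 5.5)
Your arguments for column-sum invariance, row-sum invariance and the $\tfrac{\alpha}{T}$-capped condition are correct and follow the paper's proof. Specifically:
\begin{itemize}
\item The column-sum computation is the same as the paper's.
\item For the capped condition, your appeal to nonnegativity of $P_m^{(3)}$ is valid. The paper instead notes that $P_m^{(3)}$ puts no mass on cells decoded as $m$.
\item Showing that the $P_m^{(3)}$ row sum on $\zeta\in S(K)$ equals the deficit $U_m(\zeta)$ is the paper's computation \eqref{eq:sum_x_p2_p3-2} in different words. Every $\zeta\in S(K)$ ends up with $P^{(2)}+P^{(3)}$ row sum $\sum_j\Delta\eta_{N-j+1}/c=\eta_N/c$ in every table.
\item You also check the zero key explicitly, which the paper leaves implicit.
\end{itemize}

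The gap is the $\alpha$-bounded total sum. You do not prove it, and the one quantitative claim you make there is wrong. $P_Z$ is the row sum of a \emph{single} table $P_m$. So the $P^{(2)}+P^{(3)}$ mass on $S(K)$ is not $T\sum_x P_X^{(2)}(x)+U$. It is
\[
\sum_x P_X^{(2)}(x)+U=\sum_{j=1}^K j\,\Delta\eta_{N-j+1}+\sum_{j=1}^K (T-j)\,\Delta\eta_{N-j+1}=T\eta_N ,
\]
equivalently $|S(K)|\cdot\eta_N/c$ with $|S(K)|=Tc$. With your figure the bound cannot close. For $P_X=(0.05,0.1,0.25,0.6)$, $T=3$, $\alpha=0.9$ it gives $0.45+0.75+0.2=1.4>\alpha$. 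With the correct figure the argument finishes at once:
\[
\sum_{\zeta:\,\zeta_x\neq 0}P_Z(\zeta)\le T P_X^{(1)}(x)+T P_X^{(2)}(N)\le T\bigl(P_X^{(1)}(N)+P_X^{(2)}(N)\bigr)=T\min\bigl(\tfrac{\alpha}{T},P_X(N)\bigr)\le\alpha .
\]
This uses only two facts from Proposition~\ref{prop:split_px}:
\begin{itemize}
\item $P_X^{(1)}=(\boldsymbol{a}(1),\dots,\boldsymbol{a}(N-K),y,\dots,y)$ is nondecreasing, because $y\ge\boldsymbol{a}(N-K)$.
\item $P_X^{(1)}+P_X^{(2)}=\boldsymbol{a}$.
\end{itemize}
The equality \eqref{eqn:PX2} you planned to invoke is not needed. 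This is the paper's route: for each $m$ it sums the constant row value $\eta_N/c$ over the at most $c$ keys in $S(K)\cap\gamma_x^{-1}(m)$.

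A minor slip: the $P^{(1)}$ row sum at a key of $\cZ_{\boldsymbol{\omega}}$ is $\lambda_{\boldsymbol{\omega}}/(T-1)!$, not $T\lambda_{\boldsymbol{\omega}}/(T-1)!$. The factor $T$ appears only after summing over the $T!$ keys of $\cZ_{\boldsymbol{\omega}}$. That sum gives your correct contribution $TP_X^{(1)}(x)$.
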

Proposition \ref{prop:pm3_contruction} is proved in Appendix \ref{sec:proof_pm3_prop}. By Proposition \ref{prop:construct_pm1}, Proposition \ref{prop:construct_pm2}, and Proposition \ref{prop:pm3_contruction}, we have $P_m = P_m^{(1)} + P_m^{(2)} + P_m^{(3)}$ satisfies all the conditions given in section \ref{sec:main-thm}, therefore, Theorem \ref{thm:forward} is proved.

\subsubsection{Example of $P_m$ }
Table \ref{tab:pm1+pm2+pm3} gives the combined \(P_m = P_m^{(1)} + P_m^{(2)} + P_m^{(3)}\) for \(N = 4\), \(T = 3\), \(\alpha = 0.9\), and \(P_X = (0.05, 0.1, 0.25, 0.6)\). The entries corresponding to \(P_m^{(1)}\) are shown in black, those corresponding to \(P_m^{(2)}\) are shown in blue, and those corresponding to \(P_m^{(3)}\) are shown in red. Decomposing \(P_X\) yields \(P_X^{(1)} = (0.05, 0.1, 0.15, 0.15)\), \(P_X^{(2)} = (0, 0, 0.1, 0.15)\), and \(P_X^{(3)} = (0, 0, 0, 0.3)\), which are precisely the values used in the previous example in Tables \ref{tab:pm1_example2} and \ref{tab:pm2_general}. 
It is seen the imbalance is indeed corrected via Algorithm \ref{alg:pm3_proportional_fill}.

\begin{table}[]
\resizebox{\linewidth}{!}{%
\begin{tabular}{c||c|c|c|c||c|c|c|c||c|c|c|c}
            & \multicolumn{4}{c||}{$P_1 (m=1)$} & \multicolumn{4}{c||}{$P_2 (m=2)$} & \multicolumn{4}{c}{$P_3 (m=3)$} \\ \hline
$\zeta$     & $x=1$   & $x=2$   & $x=3$   & $x=4$   & $x=1$   & $x=2$   & $x=3$   & $x=4$   & $x=1$   & $x=2$   & $x=3$   & $x=4$   \\ \hline\hline
$(3,2,1,0)$ &         &         &    \cellcolor{yellow!25}     &         &         &     \cellcolor{yellow!25}    &         &         &   \cellcolor{yellow!25}      &         &         &         \\\hline
$(2,3,1,0)$ &         &         &   \cellcolor{yellow!25}      &         &  \cellcolor{yellow!25}       &         &         &         &         &   \cellcolor{yellow!25}      &         &         \\\hline
$(3,1,2,0)$ &         &     \cellcolor{yellow!25}    &         &         &         &       &  \cellcolor{yellow!25}       &        &     \cellcolor{yellow!25}    &         &         &         \\\hline
$(1,3,2,0)$ &   \cellcolor{yellow!25}      &         &         &         &         &         &     \cellcolor{yellow!25}    &         &         &   \cellcolor{yellow!25}      &         &         \\\hline
$(2,1,3,0)$ &         &   \cellcolor{yellow!25}      &         &         &      \cellcolor{yellow!25}   &         &         &         &         &         &      \cellcolor{yellow!25}   &         \\\hline
$(1,2,3,0)$ &  \cellcolor{yellow!25}       &         &         &         &         &      \cellcolor{yellow!25}   &         &         &         &         &   \cellcolor{yellow!25}      &         \\\noalign{\hrule height 1pt}
$(3,2,0,1)$ &         &         &         &    \cellcolor{yellow!25}     &         &       \cellcolor{yellow!25}  &         &         &     \cellcolor{yellow!25}    &         &         &         \\\hline
$(2,3,0,1)$ &         &         &         &     \cellcolor{yellow!25}    &   \cellcolor{yellow!25}      &         &         &         &         &    \cellcolor{yellow!25}     &         &         \\\hline
$(3,0,2,1)$ &         &         &         &     \cellcolor{yellow!25}$\tfrac{0.05}{2}$+$\textcolor{blue}{\tfrac{0.1}{4}+\tfrac{0.05}{4}}$&         &         &      \cellcolor{yellow!25}$\tfrac{0.05}{2}$+$\textcolor{blue}{\tfrac{0.1}{4}}$&  \textcolor{red}{$\tfrac{0.05}{4}$}        &      \cellcolor{yellow!25}$\tfrac{0.05}{2}$   &         &         &     $\textcolor{red}{\tfrac{0.1}{4}+\tfrac{0.05}{4}}$    \\\hline
$(0,3,2,1)$ &         &         &         &       \cellcolor{yellow!25}$\tfrac{0.1}{2}$+$\textcolor{blue}{\tfrac{0.1}{4}+\tfrac{0.05}{4}}$&         &         &    \cellcolor{yellow!25}$\tfrac{0.1}{2}$+$\textcolor{blue}{\tfrac{0.1}{4}}$&   \textcolor{red}{$\tfrac{0.05}{4}$}       &         &     \cellcolor{yellow!25}$\tfrac{0.1}{2}$    &         &       $\textcolor{red}{\tfrac{0.1}{4}+\tfrac{0.05}{4}}$   \\\hline
$(2,0,3,1)$ &         &         &         &   \cellcolor{yellow!25}$\tfrac{0.05}{2}$+$\textcolor{blue}{\tfrac{0.1}{4}+\tfrac{0.05}{4}}$&  \cellcolor{yellow!25}$\tfrac{0.05}{2}$       &         &         &    $\textcolor{red}{\tfrac{0.1}{4}+\tfrac{0.05}{4}}$     &         &         &    \cellcolor{yellow!25}$\tfrac{0.05}{2}$+$\textcolor{blue}{\tfrac{0.1}{4}}$&   \textcolor{red}{$\tfrac{0.05}{4}$}      \\\hline
$(0,2,3,1)$ &         &         &         &  \cellcolor{yellow!25}$\tfrac{0.1}{2}$+$\textcolor{blue}{\tfrac{0.1}{4}+\tfrac{0.05}{4}}$&         &    \cellcolor{yellow!25}$\tfrac{0.1}{2}$     &         &     $\textcolor{red}{\tfrac{0.1}{4}+\tfrac{0.05}{4}}$    &         &         &    \cellcolor{yellow!25}$\tfrac{0.1}{2}$+$\textcolor{blue}{\tfrac{0.1}{4}}$&    \textcolor{red}{$\tfrac{0.05}{4}$}     \\\noalign{\hrule height 1pt}
$(3,1,0,2)$ &         &    \cellcolor{yellow!25}     &         &         &         &         &         &    \cellcolor{yellow!25}     &   \cellcolor{yellow!25}      &         &         &    \\\hline 
$(1,3,0,2)$ &    \cellcolor{yellow!25}     &         &         &         &         &         &         &    \cellcolor{yellow!25}     &         &   \cellcolor{yellow!25}      &         &    \\\hline 
$(3,0,1,2)$ &         &         &  \cellcolor{yellow!25}$\tfrac{0.05}{2}$+$\textcolor{blue}{\tfrac{0.1}{4}}$&  \textcolor{red}{$\tfrac{0.05}{4}$}       &         &         &         &   \cellcolor{yellow!25}$\tfrac{0.05}{2}$+$\textcolor{blue}{\tfrac{0.1}{4}+\tfrac{0.05}{4}}$&      \cellcolor{yellow!25}$\tfrac{0.05}{2}$   &         &         &  $\textcolor{red}{\tfrac{0.1}{4}+\tfrac{0.05}{4}}$   \\\hline 
$(0,3,1,2)$ &         &         &  \cellcolor{yellow!25}$\tfrac{0.1}{2}$+$\textcolor{blue}{\tfrac{0.1}{4}}$&    \textcolor{red}{$\tfrac{0.05}{4}$}     &         &         &         &    \cellcolor{yellow!25}$\tfrac{0.1}{2}$+$\textcolor{blue}{\tfrac{0.1}{4}+\tfrac{0.05}{4}}$&         & \cellcolor{yellow!25}$\tfrac{0.1}{2}$        &         &  $\textcolor{red}{\tfrac{0.1}{4}+\tfrac{0.05}{4}}$   \\\hline 
$(1,0,3,2)$ & \cellcolor{yellow!25}$\tfrac{0.05}{2}$        &         &         &  $\textcolor{red}{\tfrac{0.1}{4}+\tfrac{0.05}{4}}$       &         &         &         & \cellcolor{yellow!25}$\tfrac{0.05}{2}$+$\textcolor{blue}{\tfrac{0.1}{4}+\tfrac{0.05}{4}}$&         &         &  \cellcolor{yellow!25}$\tfrac{0.05}{2}$+$\textcolor{blue}{\tfrac{0.1}{4}}$ & $\textcolor{red}{\tfrac{0.05}{4}}$ \\\hline 
$(0,1,3,2)$ &         &  \cellcolor{yellow!25}$\tfrac{0.1}{2}$       &         &   $\textcolor{red}{\tfrac{0.1}{4}+\tfrac{0.05}{4}}$      &         &         &         &\cellcolor{yellow!25}$\tfrac{0.1}{2}$+$\textcolor{blue}{\tfrac{0.1}{4}+\tfrac{0.05}{4}}$&         &         &   \cellcolor{yellow!25}$\tfrac{0.1}{2}$+$\textcolor{blue}{\tfrac{0.1}{4}}$&  \textcolor{red}{$\tfrac{0.05}{4}$}  \\\noalign{\hrule height 1pt}
$(2,1,0,3)$ &         &  \cellcolor{yellow!25}       &         &         &      \cellcolor{yellow!25}   &         &         &         &         &         &         &  \cellcolor{yellow!25}  \\\hline 

$(1,2,0,3)$ &  \cellcolor{yellow!25}       &         &         &         &         & \cellcolor{yellow!25}        &         &        &         &         &         & \cellcolor{yellow!25}   \\\hline 

$(2,0,1,3)$ &         &         &  \cellcolor{yellow!25}$\tfrac{0.05}{2}$+$\textcolor{blue}{\tfrac{0.1}{4}}$& \textcolor{red}{$\tfrac{0.05}{4}$}        & \cellcolor{yellow!25}$\tfrac{0.05}{2}$        &         &         &  $\textcolor{red}{\tfrac{0.1}{4}+\tfrac{0.05}{4}}$        &         &         &         & \cellcolor{yellow!25}$\tfrac{0.05}{2}$+$\textcolor{blue}{\tfrac{0.1}{4}+\tfrac{0.05}{4}}$ \\\hline 

$(0,2,1,3)$ &         &         & \cellcolor{yellow!25}$\tfrac{0.1}{2}$+$\textcolor{blue}{\tfrac{0.1}{4}}$&   \textcolor{red}{$\tfrac{0.05}{4}$}      &         &  \cellcolor{yellow!25}$\tfrac{0.1}{2}$       &         &    $\textcolor{red}{\tfrac{0.1}{4}+\tfrac{0.05}{4}}$     &         &         &         &\cellcolor{yellow!25}$\tfrac{0.1}{2}$+$\textcolor{blue}{\tfrac{0.1}{4}+\tfrac{0.05}{4}}$   \\\hline 

$(1,0,2,3)$ & \cellcolor{yellow!25}$\tfrac{0.05}{2}$        &         &         &   $\textcolor{red}{\tfrac{0.1}{4}+\tfrac{0.05}{4}}$       &         &         &  \cellcolor{yellow!25}$\tfrac{0.05}{2}$+$\textcolor{blue}{\tfrac{0.1}{4}}$&  \textcolor{red}{$\tfrac{0.05}{4}$}        &         &         &         & \cellcolor{yellow!25}$\tfrac{0.05}{2}$+$\textcolor{blue}{\tfrac{0.1}{4}+\tfrac{0.05}{4}}$   \\\hline 

$(0,1,2,3)$ &         & \cellcolor{yellow!25}$\tfrac{0.1}{2}$        &         &   $\textcolor{red}{\tfrac{0.1}{4}+\tfrac{0.05}{4}}$     &         &         &   \cellcolor{yellow!25}$\tfrac{0.1}{2}$+$\textcolor{blue}{\tfrac{0.1}{4}}$&   \textcolor{red}{$\tfrac{0.05}{4}$}       &         &         &         & \cellcolor{yellow!25}$\tfrac{0.1}{2}$+$\textcolor{blue}{\tfrac{0.1}{4}+\tfrac{0.05}{4}}$ \\ \noalign{\hrule height 1pt}
$(0,0,0,0)$ &         &        &         &   $\textcolor{red}{0.1}$       &         &         &        &    $\textcolor{red}{0.1}$     &         &         &         & $\textcolor{red}{0.1}$
\end{tabular}}
\caption{Illustration of Construction $P_m = P_m^{(1)} + P_m^{(2)} + P_m^{(3)}$ for $P_X= (0.05,0.1,0.25,0.6)$.}
\label{tab:pm1+pm2+pm3}
\end{table}

\section{Construction B: A Psuedo-Token Approach} \label{sec:construction_B}

In Construction B, we build $P_m$ using a method similar to that described in Algorithm \ref{alg:iterative_construction}, but here additional “pseudo-tokens” are introduced into the original token set $\cX$. Again, assume a non-decreasing token order on the distribution $P_X$. 

\subsection{Construction in Augmented Space}
We apply the following lemma to construct an extended vector $P'_X$ from a given $P_X$.
\begin{lemma} \label{lemma:extend_PX}
    Let \(P_X\) be a non-decreasing distribution of length \(N\). Then there exists an integer \(n \ge 0\) such that we can construct an extended non-negative vector \(P'_X\) of length \(N + n\) that is \(T\)-hot representable. Specifically, when \(n = 0\), we have \(P'_X = \min \bigl(\tfrac{\alpha}{T}, P_X\bigr)\). When \(n > 0\), the components of \(P'_X\) add up to \(1\), i.e. \(P'_X\) itself forms a probability distribution with extra $n$ psuedo-tokens. 
\end{lemma}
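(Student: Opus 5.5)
We split into two cases according to whether the capped vector $\boldsymbol{c} := \min(\tfrac{\alpha}{T}, P_X)$ (taken componentwise) already satisfies the criterion of Lemma \ref{lemma:T-hot-representable-criterion}. Let $S := \sum_{x} \boldsymbol{c}(x)$. Every entry of $\boldsymbol{c}$ is at most $\tfrac{\alpha}{T}$, so $T\max_x \boldsymbol{c}(x) \le \alpha$. Hence, if $T \max_x \boldsymbol{c}(x) \le S$, we take $n=0$ and $P'_X = \boldsymbol{c}$. Lemma \ref{lemma:T-hot-representable-criterion} then gives $T$-hot representability directly. This covers, for example, the case where at least $T$ entries of $P_X$ are at least $\tfrac{\alpha}{T}$, since then $S \ge \alpha \ge T\max_x \boldsymbol{c}(x)$.

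In the remaining case, $T\max_x \boldsymbol{c}(x) > S$. We then abandon the cap and extend $P_X$ itself. In this case fewer than $T$ entries reach $\tfrac{\alpha}{T}$, and we work with $P_X$ directly, whose sum is $1$. The idea is to append pseudo-tokens so that the total mass equals $T$ times the largest entry. Write $p_{\max} = P_X(N)$, which is the maximum because $P_X$ is non-decreasing. If $T p_{\max} \le 1$, then $P_X$ is already $T$-hot representable and we can take $n = 0$ with $P'_X = P_X$. We must check this is consistent with the claim; see the obstacle below.

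Otherwise $T p_{\max} > 1$, and we want to add total mass $D := T p_{\max} - 1 > 0$ spread over $n$ new entries, each of value at most $p_{\max}$. We choose $n = \lceil D / p_{\max} \rceil \le T-1$ pseudo-tokens of value $D/n$. The extended vector then has sum $1 + D = T p_{\max}$ and maximum $p_{\max}$, so it satisfies the criterion with equality. However, its total is $T p_{\max} > 1$, which contradicts the requirement that $P'_X$ sum to $1$. We therefore need a different normalization: we rescale by reducing the mass on the large tokens. The natural choice is to keep the original entries capped, $\min(\tfrac{\alpha}{T},P_X(x))$, and let the pseudo-tokens absorb the excess $R = \sum_x \max(P_X(x)-\tfrac{\alpha}{T},0)$. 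Pseudo-tokens are distributed so that no entry exceeds $\tfrac{\alpha}{T}$, which gives total mass exactly $1$. Then we need $T \cdot \tfrac{\alpha}{T} = \alpha \le 1$, which holds.

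Concretely, in the case $S < T \max_x \boldsymbol{c}(x)$ we set $P'_X = (\boldsymbol{c}, \tfrac{1-S}{n}\mathbf{1}_n)$ with $n = \lceil (1-S)\,T/\alpha \rceil$. Each pseudo-token then has value at most $\tfrac{\alpha}{T}$, the entries sum to $1$, and the maximum entry is at most $\tfrac{\alpha}{T}$. The criterion of Lemma \ref{lemma:T-hot-representable-criterion} reads $T\max \le \alpha \le 1 = $ sum, so $P'_X$ is $T$-hot representable. We also confirm $n>0$: the case assumption gives $S < T\max_x \boldsymbol{c}(x) \le \alpha < 1$, so $1-S>0$ and hence $n \ge 1$.

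The main obstacle is choosing the case split correctly. We intend the $n=0$ branch to be exactly $T\max \boldsymbol{c} \le S$, and the $n>0$ branch to use capped original entries plus pseudo-tokens of size at most $\tfrac{\alpha}{T}$. Making sure the pseudo-token values respect the cap, rather than the uncapped $p_{\max}$, is the delicate point.
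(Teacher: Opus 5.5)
Your final argument, the $n=0$ branch plus the ``Concretely'' paragraph, is correct and is the paper's proof: split on whether $\min(\tfrac{\alpha}{T},P_X)$ meets the criterion of Lemma~\ref{lemma:T-hot-representable-criterion}, and otherwise append $n$ equal pseudo-tokens carrying $R=1-S$, each at most $\tfrac{\alpha}{T}$, so that $T\max\le\alpha\le 1=\text{sum}$. Your $n=\lceil RT/\alpha\rceil$ equals the paper's $\lceil R/\boldsymbol{a}(N)\rceil$, since in that case necessarily $\boldsymbol{a}(N)=\tfrac{\alpha}{T}$; do delete the exploratory middle paragraphs, because the option $P'_X=P_X$ with $n=0$ there would contradict the lemma's requirement that $n=0$ gives $P'_X=\min(\tfrac{\alpha}{T},P_X)$.
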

\begin{proof}
    We define $\boldsymbol{a}$ and $\boldsymbol{r}$ as
    \begin{align}
        \boldsymbol{a}(x) = \min\!\left(\frac{\alpha}{T}, P_X(x)\right), \; \boldsymbol{r}(x) = (P_X - \boldsymbol{a})(x) = \max\!\left(P_X(x) - \frac{\alpha}{T}, 0\right),
    \end{align}
    where $\boldsymbol{a}$ is same as what we defined in the proof of Proposition~\ref{prop:split_px}, and $\boldsymbol{r}$ is the same definition as $P_X^{(3)}$. We use $R$ to denote the sum of the vector $\boldsymbol{r}$, consistent with the definition of $R$ provided in Algorithm \ref{alg:pm3_proportional_fill}.
    
    If $\boldsymbol{a}$ itself is $T$-hot representable, then set $n = 0$ and $P'_X = \boldsymbol{a}$. Otherwise, we choose $n = \lceil \frac{R}{\boldsymbol{a}(N)} \rceil$. We then demonstrate that
    \begin{align}
        P'_X = (\boldsymbol{a}, \tfrac{R}{n}\mathbf{1}_n)
    \end{align}
    is $T$-hot representable, where $\mathbf{1}_n$ denotes the all-ones vector of length $n$. The sum of $P_X'$ is 
    \begin{align*}
        \sum_{i=1}^{N+n} P_X'(i) & = \sum_{i=1}^{N} \boldsymbol{a}(i) + \sum_{i=N+1}^{N+n} \frac{R}{n} = \sum_{i=1}^{N} \boldsymbol{a}(i) + R = \sum_{i=1}^{N} \boldsymbol{a}(i) + \sum_{i=1}^{N}[P_X(i) - \boldsymbol{a}(i)]  = 1.
    \end{align*}
    
    Clearly, we have $\tfrac{R}{n} \leq \tfrac{R}{\tfrac{R}{\boldsymbol{a}(N)}} = \boldsymbol{a}(N)$. Furthermore, because $P_X$ is a non-decreasing vector, the vector $\boldsymbol{a}$ is also non-decreasing. Hence, $\max_i P_X'(i) = \boldsymbol{a}(N)$.

    The vector $P_X'$ is $T$-hot representable since
    \begin{align}
        T \max_i P_X'(i) = T\boldsymbol{a}(N) \leq T \frac{\alpha}{T} = \alpha \leq 1 = \sum_{i=1}^{N+n} P_X'(i),
    \end{align}
    which satisfies the inequality \eqref{eq:T-hot-representable}.
\end{proof}

Lemma \ref{lemma:extend_PX} provides an extended vector $P'_X$ that is $T$-hot representable. We now construct  $P_m$ under two scenarios: case 1 with $n = 0$ and case 2 with $n > 0$. For case 1, where $n = 0$, the lemma gives us $P'_X = \min(\tfrac{\alpha}{T}, P_X)$. In this setting, we adopt the same reduced key set $\cZ^N_T$, and then apply Algorithm \ref{alg:iterative_construction} to generate $P_m$ using $P'_X$ as the input vector. For the remaining mass represented by the vector $\boldsymbol{r} = P_X - P'_X$, we assign it to the dummy key $\zeta = \mathbf{0}_N$, namely
\begin{align}
    P_m(x, \mathbf{0}_N) = \boldsymbol{r}(x), \quad \forall x \in [1:N],\ \forall m \in [1:T].
\end{align} 

For case 2, where $n > 0$, $P'_X$ is now a vector with length $N+n$ and there are $n$ psuedo-tokens other than the original token set $\cX = [1:N]$. We denote the extended token space as $\cX_E = [1:N+n]$ and we adopt the key set $\cZ^{N+n}_T$. Algorithm \ref{alg:construction_B} summarizes the construction for both cases. 
\begin{prop} \label{prop:construct_big_Pm}
    Let $P_X$ be a non-decreasing distribution, and let $P'_X$ be the extended vector of length $N+n$ (with $n \geq 0$) constructed according to Lemma \ref{lemma:extend_PX}. The construction of $P_m$ in Algorithm \ref{alg:construction_B} is of dimension $|\cX|\times|\cZ^{N+n}_T|$ that satisfies the following properties: column-sum invariance, row-sum invariance, $\tfrac{\alpha}{T}$-capped column sum, and $\alpha$-bounded total sum.
\end{prop}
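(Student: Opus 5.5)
The plan is to verify the four properties directly from the explicit form of $P_m$, treating both cases together. Let $P'_m$ be the $(N+n)\times|\cZ^{N+n}_T|$ table produced by Algorithm \ref{alg:iterative_construction} with input $P'_X=\sum_{\boldsymbol{\omega}}\lambda_{\boldsymbol{\omega}}\boldsymbol{\omega}$, which exists by Lemma \ref{lemma:extend_PX} and Lemma \ref{lemma:T-hot-representable-criterion}. I am assuming Algorithm \ref{alg:construction_B} builds the $|\cX|\times|\cZ^{N+n}_T|$ table $P_m$ as follows.

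In case 1 ($n=0$), $P_m$ is $P'_m$ together with the extra row $P_m(x,\mathbf{0}_N)=\boldsymbol{r}(x)$. In case 2 ($n>0$), the pseudo-token columns are folded back onto the real tokens in proportion to the excess mass:
\begin{align*}
P_m(x,\zeta)=P'_m(x,\zeta)+\frac{\boldsymbol{r}(x)}{R}\sum_{x'=N+1}^{N+n}P'_m(x',\zeta),\quad x\in[1:N].
\end{align*}
In both cases the decoder is $\gamma(x,\zeta)=\zeta_x$, restricted to real tokens. If the algorithm instead places the folded mass differently, the argument below adapts, provided the added mass lies in cells that are not needed for the capped condition.

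First I record two facts about $P'_m$, which follow from the counting in Proposition \ref{prop:construct_pm1} applied with $N+n$ in place of $N$.
\begin{itemize}
\item \emph{Column mass.} For each $x$ and each $m$, all of $P'_X(x)$ lies in cells with $\zeta_x=m$. This is because there are $(T-1)!$ keys $\zeta\in\cZ_{\boldsymbol{\omega}}$ with $\zeta_x=m$, and each receives $\lambda_{\boldsymbol{\omega}}/(T-1)!$.
\item \emph{Row mass.} Each row $\zeta\in\cZ_{\boldsymbol{\omega}}$ has exactly one cell decoded to $m$. Hence its row sum is $\lambda_{\boldsymbol{\omega}}/(T-1)!$, independent of $m$, and the total mass of $\cZ_{\boldsymbol{\omega}}$ is $T!\,\lambda_{\boldsymbol{\omega}}/(T-1)!=T\lambda_{\boldsymbol{\omega}}$.
\end{itemize}

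Next I check the four properties in turn.
\begin{itemize}
\item \emph{Row-sum invariance.} The folding only moves mass within a row, so row sums of $P_m$ equal those of $P'_m$. The zero-key row in case 1 sums to $R$ for every $m$.
\item \emph{Column-sum invariance.} For $x\le N$ the column sum is $\boldsymbol{a}(x)$ plus the folded term. In case 2 the folded term equals $\frac{\boldsymbol{r}(x)}{R}\sum_{x'>N}P'_X(x')=\frac{\boldsymbol{r}(x)}{R}\cdot R$. In case 1 it is the zero-key entry $\boldsymbol{r}(x)$. Either way the total is $\boldsymbol{a}(x)+\boldsymbol{r}(x)=P_X(x)$.
\item \emph{$\tfrac{\alpha}{T}$-capped column sum.} The folded and zero-key masses are nonnegative, so the mass on $\{\zeta:\zeta_x=m\}$ is at least $P'_X(x)=\boldsymbol{a}(x)=\min(\tfrac{\alpha}{T},P_X(x))$.
\end{itemize}

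I expect the $\alpha$-bounded total sum (\ref{eq:alpha_bounded_total_sum}) to be the main step. Here $P_Z(\zeta)$ is the common row sum. For a real token $x$, the left side of (\ref{eq:alpha_bounded_total_sum}) equals $P_Z(\{\zeta:\zeta_x\neq 0\})$, since every nonzero value of $\zeta_x$ is some $m\in[1:T]$. The key $\mathbf{0}_N$ never contributes. A key $\zeta\in\cZ_{\boldsymbol{\omega}}$ has $\zeta_x\neq 0$ exactly when $x\in\supp(\boldsymbol{\omega})$. Using the total mass $T\lambda_{\boldsymbol{\omega}}$ of $\cZ_{\boldsymbol{\omega}}$, this gives
\begin{align*}
P_Z(\{\zeta:\zeta_x\neq0\})=\sum_{\boldsymbol{\omega}:\,\boldsymbol{\omega}(x)=1}T\lambda_{\boldsymbol{\omega}}=T\,P'_X(x)=T\,\boldsymbol{a}(x)\le\alpha.
\end{align*}
Finally, I will confirm that the total mass of $P_m$ is $1$ and that all entries are nonnegative, so that $P_m$ lies in the simplex.
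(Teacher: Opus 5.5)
Your proposal is correct and follows essentially the same route as the paper: a direct check of the four properties, with the column and row sums handled by the same folding computation. Two small differences are worth noting. You obtain the $\tfrac{\alpha}{T}$-capped condition from nonnegativity of the folded mass, whereas the paper shows that rows decoded to $m$ at a real token carry no pseudo-token mass, which gives equality. You also bound the total sum using the class masses $T\lambda_{\boldsymbol{\omega}}$ rather than row-sum invariance; both arguments yield $T\,P'_X(x)\le\alpha$.
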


We prove Proposition \ref{prop:construct_big_Pm} in Appendix \ref{sec:proof_big_Pm}. When $n > 0 $ in Algorithm \ref{alg:construction_B}, Algorithm \ref{alg:iterative_construction} with the input $P'_X$ leads to the extended joint distribution $P'_m$, which is with dimension $|\cX_E| \times |\cZ^{N+n}_T|$. However, the pseudo-tokens are not actual tokens in the generative model, and thus cannot be directly used. An additional step in Algorithm \ref{alg:construction_B} maps the values from the pseudo-token columns back into the original token space $\cX$, namely
\begin{align}
    P_m(x, \zeta) \gets P_m(x, \zeta) + \frac{\boldsymbol{r}(x)}{R} \left( \sum_{i=N+1}^{N+n} P'_m (i, \zeta) \right), \quad \forall x \in [1:N], \; \forall \zeta \in \cZ^{N+n}_T.
\end{align}
This mapping sums up the $P'_m$ values associated with the pseudo-tokens and redistributes them to the original tokens $x \in \cX$ with the proportion of $\boldsymbol{r}(x)$.

\begin{algorithm}[h]
\caption{$P_m$: Construction B}
\label{alg:construction_B}
\begin{algorithmic}[1]
\State \textbf{Input:} $P_X$ with length $N$; $P'_X$ with length $N+n$; Key set $\cZ^{N+n}_T$; Initialized $P'_m = \mathbf{0}$; Initialized $P_m = \mathbf{0}$, $m \in [1:T]$ 
\State Generate $P'_m$ by Algorithm \ref{alg:iterative_construction} with vector input $P'_X$
\State $P_m(x,\zeta) = P'_m(x,\zeta), \forall x \in [1:N], \; \forall \zeta \in \cZ^{N+n}_T$ 
\State  $\boldsymbol{r} = P_X - \min(\tfrac{\alpha}{T}, P_X)$ and $R = \sum_i\boldsymbol{r}(i)$
\If{$n = 0$}
\State $P_m(x, \mathbf{0}_N) \gets \boldsymbol{r}(x), ~\forall x \in [1:N]$  
\Else
    \State $P_m(x, \zeta) \gets  P_m(x, \zeta) + \frac{\boldsymbol{r}(x)}{R} \sum_{i=N+1}^{N+n} P'_m (i, \zeta), ~\forall x \in [1:N], \; \forall \zeta \in \cZ^{N+n}_T$
\EndIf
\end{algorithmic}
\end{algorithm}

Since $P_m$ satisfies all four conditions in proposition \ref{prop:construct_big_Pm}, we conclude that with the extended key set $\cZ^{N+n}_T$, the resultant $(P_m, \gamma)$ achieves the optimal solution. 

\subsection{Example of Construction B} 
Here we present two examples: one illustrates how to construct the extended $P_m'$, and the other demonstrates how to obtain $P_m$ from $P_m'$ by performing the final step of Algorithm \ref{alg:construction_B}, which maps the values in the pseudo-token columns back to the original columns. In this example, we choose $P_X = (0.1, 0.3, 0.6)$ with $N=3$ and $T=2$. For $\alpha = 0.8$, applying Lemma \ref{lemma:extend_PX} yields the extended distribution $P_X' = (0.1, 0.3, 0.4, 0.2)$ with $n=1$ . Consequently, the key set is taken to be $\cZ^4_2$, and the extended token space is $\cX_E = [1:4]$.

The $T$-hot representable form of $P'_X$ is 
\begin{align}
    P'_X = (0.1, 0.3, 0.4, 0.2) = 0.2 \times (0,1,1,0) + 0.2 \times (0,0,1,1) + 0.1 \times (1,1,0,0).
\end{align}
Algorithm \ref{alg:iterative_construction} with input $P_X'$ gives $P'_m$ in Table \ref{tab:construct_big_pm'}. 

For $P_m$ in Table \ref{tab:construct_pm_B}, the values in blue correspond to the entries in the extended column $x = 4$ of the extended $P'_m$. By mapping them to their original token space $\cX = [1:3]$, we recover $P_m$.

\begin{table}[]
\centering
\begin{tabular}{c||cccc||cccc}

            & \multicolumn{4}{c||}{$P'_m (m=1)$}                                                            & \multicolumn{4}{c}{$P'_m (m=2)$}                                                                          \\ \hline
$\zeta$     & \multicolumn{1}{c|}{$x=1$}                                           & \multicolumn{1}{c|}{$x=2$}                    & \multicolumn{1}{c|}{$x=3$}                    & $x=4$                    & \multicolumn{1}{c|}{$x=1$}                    & \multicolumn{1}{c|}{$x=2$}                    & \multicolumn{1}{c|}{$x=3$}                    & $x=4$                    \\ \hline \hline
$(1,2,0,0)$ & \multicolumn{1}{c|}{\cellcolor{yellow!25}$0.1$} & \multicolumn{1}{c|}{}                         & \multicolumn{1}{c|}{}                         &                          & \multicolumn{1}{c|}{}                         & \multicolumn{1}{c|}{\cellcolor{yellow!25}$0.1$} & \multicolumn{1}{c|}{}                         &                          \\ \hline
$(2,1,0,0)$ & \multicolumn{1}{c|}{}                                                & \multicolumn{1}{c|}{\cellcolor{yellow!25}$0.1$} & \multicolumn{1}{c|}{}                         &                          & \multicolumn{1}{c|}{\cellcolor{yellow!25}$0.1$} & \multicolumn{1}{c|}{}                         & \multicolumn{1}{c|}{}                         &                          \\ \hline
$(0,1,2,0)$ & \multicolumn{1}{c|}{}                                                & \multicolumn{1}{c|}{\cellcolor{yellow!25}$0.2$} & \multicolumn{1}{c|}{}                         &                          & \multicolumn{1}{c|}{}                         & \multicolumn{1}{c|}{}                         & \multicolumn{1}{c|}{\cellcolor{yellow!25}$0.2$} &                          \\ \hline
$(0,2,1,0)$ & \multicolumn{1}{c|}{}                                                & \multicolumn{1}{c|}{}                         & \multicolumn{1}{c|}{\cellcolor{yellow!25}$0.2$} &                          & \multicolumn{1}{c|}{}                         & \multicolumn{1}{c|}{\cellcolor{yellow!25}$0.2$} & \multicolumn{1}{c|}{}                         &                          \\ \hline
$(1,0,2,0)$ & \multicolumn{1}{c|}{\cellcolor{yellow!25}}                        & \multicolumn{1}{c|}{}                         & \multicolumn{1}{c|}{}                         &                          & \multicolumn{1}{c|}{}                         & \multicolumn{1}{c|}{}                         & \multicolumn{1}{c|}{\cellcolor{yellow!25}} &                          \\ \hline
$(2,0,1,0)$ & \multicolumn{1}{c|}{}                                                & \multicolumn{1}{c|}{}                         & \multicolumn{1}{c|}{\cellcolor{yellow!25}} &                          & \multicolumn{1}{c|}{\cellcolor{yellow!25}} & \multicolumn{1}{c|}{}                         & \multicolumn{1}{c|}{}                         &                          \\ \hline
$(2,0,0,1)$ & \multicolumn{1}{c|}{}                                                & \multicolumn{1}{c|}{}                         & \multicolumn{1}{c|}{}                         & \cellcolor{yellow!25} & \multicolumn{1}{c|}{\cellcolor{yellow!25}} & \multicolumn{1}{c|}{}                         & \multicolumn{1}{c|}{}                         &                          \\ \hline
$(0,2,0,1)$ & \multicolumn{1}{c|}{}                                                & \multicolumn{1}{c|}{}                         & \multicolumn{1}{c|}{}                         & \cellcolor{yellow!25} & \multicolumn{1}{c|}{}                         & \multicolumn{1}{c|}{\cellcolor{yellow!25}} & \multicolumn{1}{c|}{}                         &                          \\ \hline
$(0,0,2,1)$ & \multicolumn{1}{c|}{}                                                & \multicolumn{1}{c|}{}                         & \multicolumn{1}{c|}{}                         & \cellcolor{yellow!25}$0.2$ & \multicolumn{1}{c|}{}                         & \multicolumn{1}{c|}{}                         & \multicolumn{1}{c|}{\cellcolor{yellow!25}$0.2$} &                          \\ \hline
$(1,0,0,2)$ & \multicolumn{1}{c|}{\cellcolor{yellow!25}}                        & \multicolumn{1}{c|}{}                         & \multicolumn{1}{c|}{}                         &                          & \multicolumn{1}{c|}{}                         & \multicolumn{1}{c|}{}                         & \multicolumn{1}{c|}{}                         & \cellcolor{yellow!25} \\ \hline
$(0,1,0,2)$ & \multicolumn{1}{c|}{}                                                & \multicolumn{1}{c|}{\cellcolor{yellow!25}} & \multicolumn{1}{c|}{}                         &                          & \multicolumn{1}{c|}{}                         & \multicolumn{1}{c|}{}                         & \multicolumn{1}{c|}{}                         & \cellcolor{yellow!25} \\ \hline
$(0,0,1,2)$ & \multicolumn{1}{c|}{}                                                & \multicolumn{1}{c|}{}                         & \multicolumn{1}{c|}{\cellcolor{yellow!25}$0.2$} &                          & \multicolumn{1}{c|}{}                         & \multicolumn{1}{c|}{}                         & \multicolumn{1}{c|}{}                         & \cellcolor{yellow!25}$0.2$ \\ \hline
$(0,0,0,0)$ & \multicolumn{1}{c|}{}                                                & \multicolumn{1}{c|}{}                         & \multicolumn{1}{c|}{}                         &                          & \multicolumn{1}{c|}{}                         & \multicolumn{1}{c|}{}                         & \multicolumn{1}{c|}{}                         &                          \\ 
\end{tabular}
\caption{Example of constructing extended $P'_m$ with $N=3$, $T= 2$, $n = 1$, $P_X'=(0.1, 0.3, 0.4, 0.2)$.}
\label{tab:construct_big_pm'}
\end{table}

\begin{table}[]
\centering
\begin{tabular}{c||ccc||ccc}

            & \multicolumn{3}{c||}{$P_m (m=1)$}                                                            & \multicolumn{3}{c}{$P_m (m=2)$}                                                                          \\ \hline
$\zeta$     & \multicolumn{1}{c|}{$x=1$}                                           & \multicolumn{1}{c|}{$x=2$}                    & $x=3$                    & \multicolumn{1}{c|}{$x=1$}                    & \multicolumn{1}{c|}{$x=2$}                    & $x=3$                    \\ \hline
$(1,2,0,0)$ & \multicolumn{1}{c|}{\cellcolor{yellow!25}$0.1$} & \multicolumn{1}{c|}{}                         &                          & \multicolumn{1}{c|}{}                         & \multicolumn{1}{c|}{\cellcolor{yellow!25}$0.1$} &                          \\ \hline
$(2,1,0,0)$ & \multicolumn{1}{c|}{}                                                & \multicolumn{1}{c|}{\cellcolor{yellow!25}$0.1$} &                          & \multicolumn{1}{c|}{\cellcolor{yellow!25}$0.1$} & \multicolumn{1}{c|}{}                         &                          \\ \hline
$(0,1,2,0)$ & \multicolumn{1}{c|}{}                                                & \multicolumn{1}{c|}{\cellcolor{yellow!25}$0.2$} &                          & \multicolumn{1}{c|}{}                         & \multicolumn{1}{c|}{}                         & \cellcolor{yellow!25}$0.2$ \\ \hline
$(0,2,1,0)$ & \multicolumn{1}{c|}{}                                                & \multicolumn{1}{c|}{}                         & \cellcolor{yellow!25}$0.2$ & \multicolumn{1}{c|}{}                         & \multicolumn{1}{c|}{\cellcolor{yellow!25}$0.2$} &                          \\ \hline
$(1,0,2,0)$ & \multicolumn{1}{c|}{\cellcolor{yellow!25}}                        & \multicolumn{1}{c|}{}                         &                          & \multicolumn{1}{c|}{}                         & \multicolumn{1}{c|}{}                         & \cellcolor{yellow!25} \\ \hline
$(2,0,1,0)$ & \multicolumn{1}{c|}{}                                                & \multicolumn{1}{c|}{}                         & \cellcolor{yellow!25} & \multicolumn{1}{c|}{\cellcolor{yellow!25}} & \multicolumn{1}{c|}{}                         &                          \\ \hline
$(2,0,0,1)$ & \multicolumn{1}{c|}{}                                                & \multicolumn{1}{c|}{}                         &                          & \multicolumn{1}{c|}{\cellcolor{yellow!25}} & \multicolumn{1}{c|}{}                         &                          \\ \hline
$(0,2,0,1)$ & \multicolumn{1}{c|}{}                                                & \multicolumn{1}{c|}{}                         &                          & \multicolumn{1}{c|}{}                         & \multicolumn{1}{c|}{\cellcolor{yellow!25}} &                          \\ \hline
$(0,0,2,1)$ & \multicolumn{1}{c|}{}                                                & \multicolumn{1}{c|}{}                         & \textcolor{blue}{$0.2$}                          & \multicolumn{1}{c|}{}                         & \multicolumn{1}{c|}{}                         & \cellcolor{yellow!25}$0.2$ \\ \hline
$(1,0,0,2)$ & \multicolumn{1}{c|}{\cellcolor{yellow!25}}                        & \multicolumn{1}{c|}{}                         &                          & \multicolumn{1}{c|}{}                         & \multicolumn{1}{c|}{}                         &                          \\ \hline
$(0,1,0,2)$ & \multicolumn{1}{c|}{}                                                & \multicolumn{1}{c|}{\cellcolor{yellow!25}} &                          & \multicolumn{1}{c|}{}                         & \multicolumn{1}{c|}{}                         &                          \\ \hline
$(0,0,1,2)$ & \multicolumn{1}{c|}{}                                                & \multicolumn{1}{c|}{}                         & \cellcolor{yellow!25}$0.2$ & \multicolumn{1}{c|}{}                         & \multicolumn{1}{c|}{}                         & \textcolor{blue}{$0.2$}                         \\ \hline
$(0,0,0,0)$ & \multicolumn{1}{c|}{}                                                & \multicolumn{1}{c|}{}                         &                          & \multicolumn{1}{c|}{}                         & \multicolumn{1}{c|}{}                         &                          \\ \hline
\end{tabular}
\caption{Example of constructing $P_m$ from $P'_m$ in Table \ref{tab:construct_big_pm'} with $P_X = (0.1, 0.3, 0.6)$.}
\label{tab:construct_pm_B}
\end{table}

\section{Comparisons of the Constructions}

\subsection{Invalidity of the Construction by He et al.}
A purported construction of $(P_m,\gamma)$ was given in \cite{hedistributional} for the same problem under consideration. In their construction, the decoder $\gamma$ is any function for which both $\gamma(\cdot, \zeta)$ and $\gamma(x, \cdot)$ are bijective for every fixed $\zeta \in \cZ \setminus \{\mathbf{0}_N \}$ and $x \in \cX$, and the size of key set is chosen so that  $|\cZ| = |\cX|+1$. We will demonstrate, via a specific example, that this choice of decoder can not yield a valid solution to Theorem \ref{thm:forward} in general.

Consider the setting where $N=3$, $T=2$, $\alpha = 0.9$ and $P_X=(0.01,0.04,0.95)$. In this case, we choose the key set
$$\cZ_{bi} = \{(1,2,0),(0,1,2),(2,0,1),(0,0,0)\}.$$
It is straightforward to verify that, with this choice of key set $\cZ_{bi}$, the decoder $\gamma$ satisfies the required condition that both $\gamma(\cdot, \zeta)$ and $\gamma(x, \cdot)$ are bijection mappings. Their matrix $P_m$ is shown in Table \ref{tab:construct_pm_he}. 

\begin{table}[]
\centering
\begin{tabular}{c||ccc||ccc}

            & \multicolumn{3}{c||}{$P_m (m=1)$}                                                            & \multicolumn{3}{c}{$P_m (m=2)$}                                                                          \\ \hline
$\zeta$     & \multicolumn{1}{c|}{$x=1$}                                           & \multicolumn{1}{c|}{$x=2$}                    & $x=3$                    & \multicolumn{1}{c|}{$x=1$}                    & \multicolumn{1}{c|}{$x=2$}                    & $x=3$                    \\ \hline
$(1,2,0)$ & \multicolumn{1}{c|}{\cellcolor{yellow!25}$P_1(1,1)$} & \multicolumn{1}{c|}{$P_1(2,1)$}                         &   $P_1(3,1)$                       & \multicolumn{1}{c|}{$P_2(1,1)$}                         & \multicolumn{1}{c|}{\cellcolor{yellow!25}$P_2(2,1)$} &  $P_2(3,1)$                        \\ \hline
$(0,1,2)$ & \multicolumn{1}{c|}{$P_1(1,2)$}                                                & \multicolumn{1}{c|}{\cellcolor{yellow!25}$P_1(2,2)$} &          $P_1(3,2)$                & \multicolumn{1}{c|}{$P_2(1,2)$} & \multicolumn{1}{c|}{$P_2(2,2)$}                         &  $P_2(3,2)$\cellcolor{yellow!25}                        \\ \hline
$(2,0,1)$ & \multicolumn{1}{c|}{$P_1(1,3)$}                                                & \multicolumn{1}{c|}{$P_1(2,3)$} &    \cellcolor{yellow!25}$P_1(3,3)$                     & \multicolumn{1}{c|}{\cellcolor{yellow!25}$P_2(1,3)$}                         & \multicolumn{1}{c|}{$P_2(2,3)$}                         & $P_2(3,3)$ \\ \hline
$(0,0,0)$ & \multicolumn{1}{c|}{$P_1(1,4)$}                                                & \multicolumn{1}{c|}{$P_1(2,4)$}                         &       $P_1(3,4)$                   & \multicolumn{1}{c|}{$P_2(1,4)$}                         & \multicolumn{1}{c|}{$P_2(2,4)$}                         &  $P_2(3,4)$                        \\ \hline
\end{tabular}
\caption{Construct the distribution $P_m$ using the He et al. method, given $P_X = (0.01, 0.04, 0.95)$.}
\label{tab:construct_pm_he}
\end{table}

We demonstrated that, with this particular choice of key set $\cZ_{bi}$, there does not exist a $P_m$ that can attain the optimal value $1-\sum_x \min(\tfrac{\alpha}{T}, P_X(x))$. To do so, we study the dual of the original (primal) optimization problem, and show that under this setup, the dual objective exceeds $1-\sum_x \min(\tfrac{\alpha}{T}, P_X(x))$. The duality theorem then implies that for this choice of key set $\cZ_{bi}$, there is no distribution $P_m$ for which the optimal value can reach $1-\sum_x \min(\tfrac{\alpha}{T}, P_X(x))$. A proof is provided in Appendix \ref{sec:he_construction_proof}.

\subsection{Comparison Between Constructions A and B}
We give the two constructions of $P_m$ in Sections \ref{sec:construction_A} and \ref{sec:construction_B}. The primary distinction between these two constructions lies in how the decoder is defined. In the first construction, we use the key set $\cZ^N_T$ as the decoder, whereas in the second construction, we instead employ the key set $\cZ^{N+n}_T$ as the decoder. Consequently, the key set in the second construction may be significantly larger than in the first. The increase in the cardinality of the key set is given by
\begin{align}
    \frac{|\cZ^{N+n}_T|}{|\cZ^{N}_T|} \simeq \frac{(N+n)!(N-T)!}{N!(N+n-T)!}. 
\end{align}
However, if the size of the token set $N$ is large and the token distribution is relatively uniform, the value of $n$ becomes smaller, which in turn results in a smaller increase in the key size. Although the key set is significantly smaller in the first construction, Construction B is conceptually much more straightforward. 

\section{Conclusion}
We consider the multi-bits generative watermarking under the worst-case false-alarm constraint, which was previously studied by  He et al. \cite{hedistributional}. It is shown that the construction given in \cite{hedistributional} is invalid in general, and we provide two new optimal constructions. These new constructions, together with the converse result given in  \cite{hedistributional}, fully characterize the optimal miss-detection probability. 

It should be noted that the token set $\cX$ can also be viewed as the extended token sequence set, and the same result applies when the generative probability of the sequence is available. However, computing the generative probability of a sequence is exponential in its complexity, which poses significant challenges for its use in practical applications, compared to the single token generation approach. 

In a private communication, the authors of \cite{he2024theoretically,hedistributional} informed us that they recently proposed a new deterministic scheme, and kindly shared with us their preprint  \cite{he2026fundamental}. However, the scheme proposed there still induces a larger detection error and is thus not optimal, in contrast to the optimal constructions presented in this work. 

\appendix
\section{Correctness of Algorithm \ref{alg:T-hot-representable}} \label{sec:correctness_T-hot-representable}

To establish the correctness of Algorithm \ref{alg:T-hot-representable}, it suffices to show that in each step, the coefficient is non-negative and that the procedure terminates with $\boldsymbol{\nu} = \mathbf{0}$. We denote, in each iteration of Algorithm \ref{alg:T-hot-representable}, the before and after vectors  (of line \ref{alg-eq:update_v}), as $\boldsymbol{\nu}$ and $\boldsymbol{\nu}^+$, respectively.

\underline{\textbf{Claim:} At each iteration, $\boldsymbol{\nu}^{+}$ satisfied (\ref{eq:T-hot-representable}) and are non-negative.}
We prove by induction. First $\boldsymbol{a}$ clearly satisfies (\ref{eq:T-hot-representable}) and is non-negative. 
Assume it is true for $\boldsymbol{\nu}$, we first verify that $\boldsymbol{\nu}^{+}(j) \le \frac{1}{T}\sum_i \boldsymbol{\nu}^{+}(i)$ holds for every $j$. 
If $ j \in \supp(\boldsymbol{\omega})$, the algorithm updates $\boldsymbol{\nu}^{+}(j) = \boldsymbol{\nu}(j) - \lambda_{\boldsymbol{\omega}}\boldsymbol{\omega}(j) = \boldsymbol{\nu}(j) - \lambda_{\boldsymbol{\omega}}$, where 
\begin{align}
    \lambda_{\boldsymbol{\omega}} = \frac{1}{T}\min \left( \min_{j\in \supp(\boldsymbol{\omega})} \left( T\boldsymbol{\nu}(j)\right)  ,\min_{j\notin \supp(\boldsymbol{\omega})} \left( \sum^{N}_{i=1} \boldsymbol{\nu}(i)-T \boldsymbol{\nu}(j)\right) \right) . \label{eq:lambda_def_pf}
\end{align}
$\lambda_{\boldsymbol{\omega}}$ is non-negative, since the second term in (\ref{eq:lambda_def_pf}) is non-negative owing to the fact that $\boldsymbol{\nu}$ satisfies the inequality (\ref{eq:T-hot-representable}). It follows that for $ j \in \supp(\boldsymbol{\omega})$,
\begin{align}
    \boldsymbol{\nu}^{+}(j) = \boldsymbol{\nu}(j) - \lambda_{\boldsymbol{\omega}} \leq \frac{\sum_i \boldsymbol{\nu}(i)}{T} - \lambda_{\boldsymbol{\omega}} = \frac{1}{T} \left( \sum_i \boldsymbol{\nu}(i) - T\lambda_{\boldsymbol{\omega}}\right) = \frac{1}{T} \sum_i \boldsymbol{\nu}^+(i), \label{eq:at_in_A_pf1}
\end{align} 
where the inequality follows from the assumption that $\boldsymbol{\nu}$ satisfies inequality (\ref{eq:T-hot-representable}), and the last equation follows from 
\begin{align}
    \sum_i \boldsymbol{\nu}^+(i) = \sum_i\boldsymbol{\nu}(i) - \lambda_{\boldsymbol{\omega}}\boldsymbol{\omega}(i) = \sum_i\boldsymbol{\nu}(i) - \sum_i\lambda_{\boldsymbol{\omega}}\boldsymbol{\omega}(i) = \sum_i\boldsymbol{\nu}(i)- T\lambda{\boldsymbol{\omega}}. \label{eq:update_a_sum_is_sum_minus_theta} 
\end{align}

If $j \notin \supp(\boldsymbol{\omega})$, $\boldsymbol{\nu}^{+}(j)$ remains the same because $\boldsymbol{\nu}^{+}(j) = \boldsymbol{\nu}(j) - \lambda_{\boldsymbol{\omega}}\boldsymbol{\omega}(j) = \boldsymbol{\nu}(j)$. Therefore,
\begin{align}
    \boldsymbol{\nu}^{+}(j) = \boldsymbol{\nu}(j) \leq \frac{\sum_i \boldsymbol{\nu}(i)}{T} - \lambda_{\boldsymbol{\omega}} = \frac{1}{T} \sum_i \boldsymbol{\nu}^+(i) \label{eq:at_in_A_pf2},
\end{align}
where the inequality 
follows the definition of $\lambda_{\boldsymbol{\omega}}$, because (\ref{eq:lambda_def_pf}) gives
\begin{align}
     \lambda_{\boldsymbol{\omega}} \leq \frac{1}{T} \left( \sum_i \boldsymbol{\nu}(i) - T\boldsymbol{\nu}(j)\right), \forall j \notin \supp(\boldsymbol{\omega})~ \Rightarrow ~ \boldsymbol{\nu}(j) \leq \frac{\sum_i \boldsymbol{\nu}(i)}{T} - \lambda_{\boldsymbol{\omega}}, ~ \forall j \notin \supp(\boldsymbol{\omega}). \notag
\end{align}

Therefore, we have established that $\boldsymbol{\nu}^+(j) \leq \frac{1}{T} \sum_i \boldsymbol{\nu}^+(i)$ for all $j$, which implies that $\boldsymbol{\nu}^+$ satisfies (\ref{eq:T-hot-representable}). Moreover, from definition of $\lambda_{\boldsymbol{\omega}}$, we have $\lambda_{\boldsymbol{\omega}} \leq  \tfrac{1}{T}\min_{j \in \supp(\boldsymbol{\omega})}T\boldsymbol{\nu}(j)$. Therefore,   
\begin{align}
    \boldsymbol{\nu}^+(i)=\boldsymbol{\nu}(i) - \lambda_{\boldsymbol{\omega}} \geq \boldsymbol{\nu}(i) - \min_{j \in \supp(\boldsymbol{\omega})}\boldsymbol{\nu}(j) \geq 0, ~~\forall i \in \supp(\boldsymbol{\omega}). \label{eq:at_non-neg}
\end{align}
For $i\notin\supp(\boldsymbol{\omega})$, $\boldsymbol{\nu}^+(i)=\boldsymbol{\nu}(i)\geq 0$ by the assumption. The proof of the claim is complete.

\underline{\textbf{Claim:} $\boldsymbol{\nu}=\mathbf{0}_N$ at termination.}

We define the index set $I = \{i : \boldsymbol{\nu}(i) = 0 \text{ or } T\boldsymbol{\nu}(i) = \sum_j \boldsymbol{\nu}(j)\}$. The subset $I$ consists of all indices $i$ in $\boldsymbol{\nu}(i)$ for which  $\boldsymbol{\nu}(i)$ equals $0$ or where $\boldsymbol{\nu}(i)$ takes the value $\frac{1}{T}\sum_j \boldsymbol{\nu}(j)$.

We demonstrate that once an index $i$ belongs to $I$, it remains in $I$ for all subsequent iterations. Furthermore, the set $I$ strictly grows with each iteration. More precisely, $ I \subset I^+$, which can be directly written as
\begin{align}
    \{ i : \boldsymbol{\nu}(i) = 0, \text{ or } T\boldsymbol{\nu}(i) = \sum_j \boldsymbol{\nu}(j) \} \subset \{ i : \boldsymbol{\nu}^+(i) = 0, \text{ or } T\boldsymbol{\nu}^+(i) = \sum_j \boldsymbol{\nu}^+(j) \}. \label{eq:I_increasing}
\end{align}
We first show if $i \in I$, then $i \in I^+$.
If $\boldsymbol{\nu}({i}) = 0$, then it follows that $\boldsymbol{\nu}^+({i}) = 0$, because at each iteration the component of $\boldsymbol{\nu}$ does not increase, and we have already established that $\boldsymbol{\nu}$ is nonnegative.  If $\boldsymbol{\nu}(i) = \frac{1}{T}\sum_j \boldsymbol{\nu}(j)$, the index $i$ will belong to the index set $A$ in the next iteration because $\boldsymbol{\nu}(i)$ is now the largest component of $\boldsymbol{\nu}$. This follows directly from the fact that any vector $\boldsymbol{\nu}$ satisfies (\ref{eq:T-hot-representable}), that is $\max_i \boldsymbol{\nu}(i) \leq\tfrac{1}{T} \sum_j \boldsymbol{\nu}(j)$. Thus, $i \in A = \supp(\boldsymbol{\omega})$, and  the update rule gives
\[
\boldsymbol{\nu}^{+}({i})
    = \boldsymbol{\nu}({i}) -\lambda_{\boldsymbol{\omega}}\boldsymbol{\omega}({i}) 
    = \frac{1}{T}\sum_j \boldsymbol{\nu}(j) - \lambda_{\boldsymbol{\omega}}
    = \frac{1}{T}\sum_j \boldsymbol{\nu}^{+}(j),
\]
where the last equality uses (\ref{eq:update_a_sum_is_sum_minus_theta}). Therefore, we conclude that $i \in I^+$. 

Next, we show that at each iteration, at least one index $i \notin I$ enters $I^+$. If $ \lambda_{\boldsymbol{\omega}}= \tfrac{1}{T}\min_{j \in \supp(\boldsymbol{\omega})}T\boldsymbol{\nu}(j)$, there exists $ i_{\star} \notin I, i_{\star} = \arg\min_{j \in \supp(\boldsymbol{\omega})} \boldsymbol{\nu}(j)$ such that 
\begin{align}
    \boldsymbol{\nu}^+({i_{\star}}) = \boldsymbol{\nu}({i_{\star}}) - \lambda_{\boldsymbol{\omega}}\boldsymbol{\omega}({i_{\star}}) = \boldsymbol{\nu}({i_{\star}}) - \min_{j \in \supp(\boldsymbol{\omega})} \boldsymbol{\nu}(j) = 0 \quad \Rightarrow {i_{\star}} \in I^+ .\label{eq:a_enters_0}
\end{align}

If $ \lambda_{\boldsymbol{\omega}} = \tfrac{1}{T}\min_{j \notin \supp(\boldsymbol{\omega})}(\sum_i \boldsymbol{\nu}(i) - T\boldsymbol{\nu}(j))$, we have
\begin{align}
    & \sum_i \boldsymbol{\nu}^+(i) = \sum_i \boldsymbol{\nu}(i) - \min_{j \notin \supp(\boldsymbol{\omega})}\left( \sum_i \boldsymbol{\nu}(i) - T\boldsymbol{\nu}(j)\right) = \max_{j \notin \supp(\boldsymbol{\omega})} T\boldsymbol{\nu}(j) \\
    & \Rightarrow \frac{\sum_i \boldsymbol{\nu}^+(i)}{T} = \max_{j \notin \supp(\boldsymbol{\omega})} \boldsymbol{\nu}(j) = \max_{j \notin \supp(\boldsymbol{\omega})} \boldsymbol{\nu}^+(j) \label{eq:a_enters_avg}
\end{align}
since $\boldsymbol{\nu}(j)$ will not be update if $j \notin \supp(\boldsymbol{\omega})$. Therefore, there exists $ j_{\star} \notin \supp(\boldsymbol{\omega}), \; {j_{\star}} \notin I$ such that $ \boldsymbol{\nu}^+({j_{\star}}) = \frac{\sum_i \boldsymbol{\nu}^+(i) }{T}$, which means ${j_{\star}} \in I^+$. Therefore, the set $I$ strictly grows in each iteration.

Finally, if $|I| = N$, then $\boldsymbol{\nu}$ have exactly $T$ indices $i$ such that $\boldsymbol{\nu}(i) = \frac{1}{T}\sum_j \boldsymbol{\nu}(j)$, and all others are $0$. Otherwise, $\boldsymbol{\nu}$ would not sum to $\sum_j \boldsymbol{\nu}(j)$. Therefore, the final step is to set $A = \{i:T\boldsymbol{\nu}(i) = \sum_j \boldsymbol{\nu}(j) \} = \supp(\boldsymbol{\omega})$, where
$\lambda_{\boldsymbol{\omega}} = \tfrac{1}{T}\sum_i \boldsymbol{\nu}(i)$.
With this choice, the update rule
$\boldsymbol{\nu}^+(i) = \boldsymbol{\nu}(i) - \lambda_{\boldsymbol{\omega}}\boldsymbol{\omega}(i) = 0,\ \forall i$
ensures that the algorithm terminates. This completes the proof.

\section{Proof of Proposition \ref{prop:split_px}}\label{appendix:prop3}
We first define a vector $\boldsymbol{a}$ derived from $P_X$:
    \begin{align}
        & \boldsymbol{a}(x) = \min \left( \frac{\alpha }{T}, P_{X}(x)\right) , \label{eq:def_of_a}
    \end{align}
    and then introduce $P_X^{(3)}$ as the residual part of $P_X$ after subtracting $\boldsymbol{a}$:
    \begin{align}
        P_X^{(3)}(x) = P_X(x) - \boldsymbol{a}(x) = \max \left( P_{X}(x) -\frac{\alpha }{T} ,\  0\right),
    \end{align}
    which satisfies the third point in the proposition.
     
    Consider the following two cases.
    \begin{itemize}
        \item Case 1: $\boldsymbol{a}$ is $T$-hot representable. In this case, we set 
            \begin{align}
                P_X^{(1)} = \boldsymbol{a},\qquad P_X^{(2)} = \mathbf{0}_N.
            \end{align} 
        \item Case 2: $\boldsymbol{a}$ is not $T$-hot representable.          There is a unique index $K \in [\tilde{K}:T-1]$ and a corresponding leveling value $y$ such that $P_X^{(1)}$ is $T$-hot representable and $P_X^{(2)}$ contains exactly $K$ positive entries at its tail. Specifically,
            \begin{align}
                & P_X^{(1)} = (\boldsymbol{a}(1), \dots, \boldsymbol{a}(N-K), y, \dots, y), \\
                & P_X^{(2)} = \boldsymbol{a} - P_X^{(1)} = (\mathbf{0}_{N-K}, \boldsymbol{a}(N-K+1)-y, \dots, \boldsymbol{a}(N)-y).
            \end{align}
            where $\tilde{K}$ denotes the number of elements in $P_X$ that are greater or equal to $\tfrac{\alpha}{T}$ and
            
            \begin{align*}
                y = \frac{1}{T-K} \sum_{i=1}^{N-K} \boldsymbol{a}(i), \quad  y & \geq \boldsymbol{a}({N-K}).
            \end{align*}
    \end{itemize}
We next show that $P_X^{(1)}$ and $P_X^{(2)}$ satisfy Properties 1 and 2. This is trivial for Case 1, and we can focus on Case 2. 
First observe that $\boldsymbol{a}$ is a nondecreasing vector because $P_X$ is non-decreasing. We decompose $P_X^{(1)}$ and $P_X^{(2)}$ from $\boldsymbol{a}$  via Algorithm \ref{alg:decompose_px1_px2} below. The idea of Algorithm \ref{alg:decompose_px1_px2} is to reduce the few largest components to $y$ so that the modified vector becomes $T$-hot representable with an equality. The algorithm terminates at the smallest index $k = K$ for which substituting the $K$ largest entries of $\boldsymbol{a}$ by $y$ yields a $T$-hot representable vector.

\begin{algorithm}[h]
\caption{$T$-hot representable extraction from $\boldsymbol{a}$}
\label{alg:decompose_px1_px2}
\begin{algorithmic}[1]
\State \textbf{Input:} Nondecreasing length $N$ vector $\boldsymbol{a}$; initialized $P_X^{(1)} = P_X^{(2)} = \mathbf{0}_N$; value $\tilde{K}$ and $T$
\For{$k = \tilde{K}:T-1$}
    \State $y \gets \tfrac{1}{T-k}\sum_{i=1}^{N-k}\boldsymbol{a}(i)$
    \If{$y \geq \boldsymbol{a}(N-k)$}
        \State \textbf{Break}
    \EndIf
\EndFor
\State $P_X^{(1)} \gets (\boldsymbol{a}(1), \dots, \boldsymbol{a}(N-k), y, \dots,y)$
\State $P_X^{(2)} \gets \boldsymbol{a}-P_X^{(1)}$ 
\end{algorithmic}
\end{algorithm}

Notice that the algorithm always terminates with a break on Line-5, because when $k=T-1$, we have
\begin{align}
y= \sum_{i=1}^{N-T+1}\boldsymbol{a}(i)\geq \boldsymbol{a}(N-T+1),
\end{align}
by the assumption that $N\geq T$. Therefore, we always have $y \geq \boldsymbol{a}(N-K)$. Moreover, the assignment of $y$ implies that, 
\begin{align}
    T y =\sum_{i=1}^{N-K}\boldsymbol{a}(i) + K y,  
    \label{eq:y_equation}
\end{align}
which is exactly (\ref{eqn:PX2}), and also confirms claimed Property 1 of the proposition.  

Next, we verify that $P_X^{(2)}$ is a non-negative vector and contains $K$ positive elements, which is  
 $\boldsymbol{a}(i) - y > 0$ for every $i \in [N-K+1:N]$. Because $\boldsymbol{a}$ is a nondecreasing vector, it suffices to prove that $\boldsymbol{a}(N-K+1) - y > 0$. 
 We first prove the inequality for the case when $K > \tilde{K}$. Suppose otherwise, i.e., $\boldsymbol{a}(N-K+1) \leq y$. Then it follows that $(T-K) \cdot \boldsymbol{a}(N-K+1) \leq (T-K)\cdot y$. 
By the definition of $y$, we have
\begin{align}
    & (T-K)\cdot\boldsymbol{a}(N-K+1) \leq (T-K)\cdot y = \sum_{i=1}^{N-K} \boldsymbol{a}(i) \\
    \Rightarrow\; & T\cdot \boldsymbol{a}(N-K+1) \leq \sum_{i=1}^{N-K} \boldsymbol{a}(i) + K\cdot \boldsymbol{a}(N-K+1) = \sum_{i=1}^{N-K+1} \boldsymbol{a}(i) + (K-1)\cdot \boldsymbol{a}(N-K+1) \label{eq:stop_at_k-1}
\end{align}
However, since the algorithm terminates at $k=K$, this implies that when $k=K-1$, 
\[
y = \frac{1}{T-(K-1)}\sum_{i=1}^{N-K +1} \boldsymbol{a}(i) < \boldsymbol{a}(N-K+1) \Rightarrow \sum_{i=1}^{N-K +1} \boldsymbol{a}(i) + (K-1)\boldsymbol{a}(N-K+1) < T \cdot \boldsymbol{a}(N-K+1),
\]
which contradicts (\ref{eq:stop_at_k-1}). It remains to consider the case when the algorithm terminates at $K = \tilde{K}$. We begin by explicitly writing  $\boldsymbol{a}$ as:
\[
\boldsymbol{a} = \bigl(\boldsymbol{a}(1), \dots, \boldsymbol{a}(N-\tilde{K}), \underbrace{\tfrac{\alpha}{T}, \dots, \tfrac{\alpha}{T}}_{\tilde{K}}\bigr). 
\]
To obtain a contradiction, suppose $y \geq \boldsymbol{a}(N-\tilde{K}+1) = \tfrac{\alpha}{T}$, then 
\begin{align}
    y = \frac{1}{T-\tilde{K}} \sum_{i=1}^{N-\tilde{K}} \boldsymbol{a}(i) \geq \frac{\alpha}{T} & \Rightarrow \sum_{i=1}^{N-\tilde{K}} \boldsymbol{a}(i) + \tilde{K}\frac{\alpha}{T} > T\cdot \frac{\alpha}{T}
    \Rightarrow \sum_{i=1}^{N} \boldsymbol{a}(i)  \geq T\cdot \boldsymbol{a}(N),
\end{align}
leading to a contradiction with the assumption that $\boldsymbol{a}$ is not $T$-hot representable. 

Hence, we can conclude that the suppositions are not true, and $\boldsymbol{a}(N-K+1) > y$ indeed holds.

Lastly, it is clear that the sum of the components equals $P_X$, i.e.,
\[ P_X^{(1)} + P_X^{(2)} + P_X^{(3)} = \boldsymbol{a} + P_X - \boldsymbol{a} = P_X. \]
The proof is complete.

\section{Proof of Proposition \ref{prop:construct_pm1}} \label{sec:proof_pm1}
In this section, we show that the construction $P_m^{(1)}$ obtained from Algorithm \ref{alg:iterative_construction} satisfies both $P_X^{(1)}$-column-sum and row-sum invariance. Moreover, we assign nonzero values to $P_m^{(1)}(x,\zeta)$ only for those $(x,\zeta)$ pairs that are decoded to message $m$ by the decoder $\gamma$ and those value adds up to $P_X^{(1)}$.

First, to verify $P_X^{(1)}$-column-sum invariance, we need to show that 
\begin{align}
    P_X^{(1)}(x) = \sum_{\zeta \in \cZ^N_T} P_m^{(1)}(x,\zeta), \; \forall x \in \cX. 
\end{align}
For each positive $\lambda_{\boldsymbol{\omega}}$ in Algorithm \ref{alg:iterative_construction}, we assign an amount $\tfrac{\lambda_{\boldsymbol{\omega}}}{(T-1)!}$ to all pairs $(x,\zeta)$ such that $\zeta \in \cZ_{\boldsymbol{\omega}}$ and $\zeta_x = m$, contributing to the joint distribution $P_m^{(1)}$. Consequently, for any fixed $x \in \cX$ and every $m \in [1:T]$,
\begin{align}
    \sum_{\zeta \in \cZ^N_T} P_m^{(1)}(x,\zeta) & =  \sum_{\zeta \in \cZ^N_T} \sum _{\boldsymbol{\omega} \in \Omega_T}\tfrac{\lambda_{\boldsymbol{\omega}}}{(T-1)!} \mathds{1}\{\zeta \in \cZ_{\boldsymbol{\omega}}, \zeta_x =m \} =  \sum_{\zeta \in \cZ^N_T} \sum _{\boldsymbol{\omega} \in \Omega_T}\tfrac{\lambda_{\boldsymbol{\omega}}{\boldsymbol{\omega}}(x)}{(T-1)!} \mathds{1}\{\zeta \in \cZ_{\boldsymbol{\omega}}, \zeta_x =m \}\label{eq:alg_pm1_column-sum-pf1}\\
    & = \sum_{\boldsymbol{\omega}\in \Omega_T}\tfrac{\lambda_{\boldsymbol{\omega}}{\boldsymbol{\omega}}(x)}{(T-1)!} \sum_{\zeta \in \cZ^N_T}\mathds{1}\{\zeta \in \cZ_{\boldsymbol{\omega}}, \zeta_x =m \} \\
    & = \sum_{\boldsymbol{\omega}\in \Omega_T}\tfrac{\lambda_{\boldsymbol{\omega}}{\boldsymbol{\omega}}(x)}{(T-1)!} (T-1)! = \sum_{\boldsymbol{\omega}\in \Omega_T} \lambda_{\boldsymbol{\omega}}\boldsymbol{\omega}(x) = P_X^{(1)}(x). \label{eq:alg_pm1_column-sum-pf3}
\end{align}

Second, to establish row-sum invariance, we must show that for any fixed $\zeta \in \cZ^N_T$,
\begin{align}
    \sum_{x \in \cX}P_m^{(1)}(x,\zeta) = \sum_{x \in \cX}P_{m'}^{(1)}(x,\zeta), \quad \forall\, m, m' \in [1:T],\; m\neq m'.
\end{align}

By construction, Algorithm \ref{alg:iterative_construction} updates the entry $(x,\zeta)$ of $P_m^{(1)}$ only when $\zeta_x = m$. Therefore, for any fixed $\zeta \in \cZ^N_T \setminus \{ \mathbf{0}_N\}$, there exists some $x_a$ with $\zeta_{x_a} = m$ and some $x_b$ with $\zeta_{x_b} = m'$, and we obtain
\begin{align}
    \sum_{x \in \cX} P^{(1)}_m(x, \zeta) & = P^{(1)}_m(x_a, \zeta), \label{eq:(a1)=(b1)-pf-goal-split1}\\
    \sum_{x \in \cX} P^{(1)}_{m'}(x, \zeta) & = P^{(1)}_{m'}(x_b, \zeta), \label{eq:(a1)=(b1)-pf-goal-split2}
\end{align}
for all $m \neq m'$. Since $\zeta$ is the same in both (\ref{eq:(a1)=(b1)-pf-goal-split1}) and (\ref{eq:(a1)=(b1)-pf-goal-split2}), Algorithm \ref{alg:iterative_construction} assigns them the same $\lambda_{\boldsymbol{\omega}}$ whenever $\zeta \in \cZ_{\boldsymbol{\omega}}$, so they each receive an identical increment of $\frac{\lambda_{\boldsymbol{\omega}}}{(T-1)!}$. Hence, the value of (\ref{eq:(a1)=(b1)-pf-goal-split1}) equals to (\ref{eq:(a1)=(b1)-pf-goal-split2}), establishing the row-sum invariance.

Finally, as shown in \eqref{eq:alg_pm1_column-sum-pf1}-\eqref{eq:alg_pm1_column-sum-pf3}, we obtain
\begin{align}
    \sum_{\zeta \in Z^N_T}P_m^{(1)}(x, \zeta) \mathds{1} \{ \gamma(x,\zeta) = m\} = P_X^{(1)}(x), \quad \forall x \in \cX. \label{eq:show_pm1_alpha/T}
\end{align}
We conclude that Algorithm \ref{alg:iterative_construction} only assigns mass to entries with $(x,\zeta) \in \gamma^{-1}(m)$. 

\section{Proof of Proposition \ref{prop:construct_pm2}} \label{sec:proof_pm2_prop}

The cardinality of the set $\left\{\zeta: \zeta \in S(K) \cap \gamma_x^{-1}(m) \right\}$ in Algorithm \ref{alg:fill_eta_layered} can be determined as
\begin{align}
    \left\lvert \left\{\zeta \in  S(K) \cap \gamma_x^{-1}(m) \right\} \right\rvert 
    = \left( \begin{gathered}T-1\\ K-1\end{gathered} \right)  \times \frac{(N-K)!}{(N-T)!} \times (K-1)! := C_{N,T,K},
\end{align}
since, for any $x \in [N-j+1:N]$ and any $m \in [1:T]$, the set can be expressed as
\begin{align}
    & \left\{\zeta: \zeta \in  S(K) \cap \gamma_x^{-1}(m) \right\}
    = \left\{ \zeta: [N-K+1:N] \subset \supp(\zeta) \;\text{and}\; \zeta_x = m \right\} \notag \\
 & = \left\{\zeta: \zeta_i \in [1:T]\setminus \{m\},\; \forall i \in [N-K+1:N]\setminus \{x\},\;\text{and}\; \zeta_x = m \right\}.
\end{align}
Observe that this cardinality is independent of the specific choices of $x \in [N-j+1:N]$ and $m \in [1:T]$; it is identical for all such $x$ and $m$. Hence, we denote it by the constant $C_{N,T,K}$.

We show that Algorithm \ref{alg:fill_eta_layered}, which constructs $P^{(2)}_m$, preserves the $P_X^{(2)}$-column-sum invariance and that $P^{(1)}_m+P^{(2)}_m$ satisfies the $\tfrac{\alpha}{T}$-capped column sum condition. Let us first prove that
\begin{align}
    \sum_\zeta P_m^{(2)}(x, \zeta) = P_X^{(2)}(x) = \eta_x,\quad \forall x \in [N-K+1:N], \label{eq:goal_fill_layered_eta}
\end{align}
where $\eta_x$ denotes the positive entries of $P_X^{(2)}$ defined in (\ref{eq:express_eta_i_same_eta}). It suffices to verify this for $x \in [N-K+1:N]$, because we do not go through the entries with $x \in [1:N-K]$.

For any fixed $x' \in [N-K+1:N]$, there exists some $j' \in [1:K]$ such that $x' = N - j' + 1$. From algorithm \ref{alg:fill_eta_layered}, 
\begin{align}
    & \sum_{\zeta \in \cZ^N_T} P_m^{(2)}(x', \zeta) = \sum_{\zeta \in \cZ^N_T} \sum_{j=1}^{K} \sum_{x = N-j+1}^N \frac{\Delta \eta_{N-j+1}}{C_{N,T,K}} \mathds{1}\{x = x', ~ \zeta \in S(K) \cap \gamma^{-1}_{x'}(m) \}  \notag \\
    & =  \sum_{\zeta \in \cZ^N_T} \sum_{j=j'}^{K} \sum_{x = N-j+1}^N \frac{\Delta \eta_{N-j+1}}{C_{N,T,K}} \mathds{1}\{x = x', ~ \zeta \in S(K) \cap \gamma^{-1}_{x'}(m) \}  \label{eq:layered_eta_pf_0} \\ 
    & = \sum_{\zeta \in \cZ^N_T} \sum_{j=j'}^{K} \frac{\Delta \eta_{N-j+1}}{C_{N,T,K}}  \mathds{1}\{ \zeta \in S(K) \cap \gamma^{-1}_{x'}(m) \} =  \sum_{j=j'}^{K} \frac{\Delta \eta_{N-j+1}}{C_{N,T,K}}  \sum_{\zeta \in \cZ^N_T} \mathds{1}\{ \zeta \in S(K) \cap \gamma^{-1}_{x'}(m) \}\label{eq:layered_eta_pf_1}\\
    & = \sum_{j=j'}^{K} \Delta \eta_{N-j+1} = \eta_{N-j'+1} = \eta_{x'} \label{eq:layered_eta_pf_3} 
\end{align}
Equation \eqref{eq:layered_eta_pf_0} holds because, for every $j \in [1:j'-1]$, no $x \in [N-j+1:N]$ satisfies $x = x'$. Equation (\ref{eq:layered_eta_pf_1}) holds because there is exactly one $x \in [N-j+1:N]$ such that $x = x'$, and the second equality in (\ref{eq:layered_eta_pf_3}) follows immediately from the definition of $\Delta \eta_j$.

Next, we verify that $P^{(1)}_m + P^{(2)}_m$ satisfies the $\tfrac{\alpha}{T}$-capped column sum constraint, i.e., we show that
\begin{align}
    \sum_{\zeta \in \cZ^N_T} \bigl(P^{(1)}_m(x,\zeta) + P^{(2)}_m(x,\zeta)\bigr)\,\mathds{1}\{\gamma(x,\zeta) = m\} \;\ge\; \min\!\left(\frac{\alpha}{T},\,P_X(x)\right),\quad \forall m \in [1:T],\ \forall x \in \cX.
\end{align}
By Algorithm \ref{alg:fill_eta_layered}, each $P_m^{(2)}$ assigns mass only to pairs $(x,\zeta) \in \gamma^{-1}(m)$, and we have already shown that $P_m^{(2)}$ preserves the $P_X^{(2)}$-column-sum invariance. Hence,
\begin{align}
    \sum_{\zeta \in \cZ^N_T} P_m^{(2)}(x,\zeta)\,\mathds{1}\{\gamma(x,\zeta) = m\}= P_X^{(2)}(x), \quad \forall x \in \cX.
\end{align}
Combining this with the result established in \eqref{eq:show_pm1_alpha/T}, we obtain
\begin{align}
    & \sum_{\zeta \in \cZ^N_T} \bigl(P^{(1)}_m(x,\zeta) + P^{(2)}_m(x,\zeta)\bigr)\,\mathds{1}\{\gamma(x,\zeta) = m\} \notag \\
    = & P_X^{(1)}(x) + P_X^{(2)}(x)= 1 - P_X^{(3)}(x)=\min\!\left(\frac{\alpha}{T},\,P_X(x)\right), \forall x \in \cX.
\end{align}

\section{Proof of Proposition \ref{prop:U_value}}\label{sec:prop_U_proof}

We first show that the imbalance will accumulate through the layer $j$ from $K$ to $1$ in Algorithm \ref{alg:fill_eta_layered}. Let us view $P_m^{(2)}$ as the accumulation of the step-vector allocations $P_{m,j}^{(2)}$, where $P_{m,j}^{(2)}$ denotes the allocation assigned in each layer $j$ of Algorithm \ref{alg:fill_eta_layered}, and $U_m^j(\zeta)$ represents the resulting row-sum imbalance induced by this allocation. Specifically,
\begin{align}
    P_m^{(2)} & = \sum_{j\in[1:K]} P_{m,j}^{(2)}, \\
    U_m^j(\zeta) & = \max_{\mu \in [1:T]}\left( \sum_{x \in \cX} P_{\mu,j}^{(2)}(x, \zeta) \right) - \sum_{x \in \cX} P_{m,j}^{(2)}(x, \zeta), \quad \forall m \in [1:T],\, \zeta \in \cZ^N_T. \label{eq:U^j_m_equation}
\end{align}

In each layer $j$, and for every $\zeta \in S(K)$, the row sum $\sum_{x \in \cX} P_{\mu,j}^{(2)}(x, \zeta)$ corresponding to message $\mu$ is either $0$ or $\tfrac{\Delta\eta_{N-j+1}}{C_{N,T,K}}$, with the allocation rule in Algorithm \ref{alg:fill_eta_layered}. Consequently, the first maximization term in \eqref{eq:U^j_m_equation} is always equal to $\tfrac{\Delta\eta_{N-j+1}}{C_{N,T,K}}$, since there is always a choice of message, namely $\mu \in \{\zeta_{N-j+1}, \dots, \zeta_{N}\}$, for which the row sums to $\tfrac{\Delta\eta_{N-j+1}}{C_{N,T,K}}$. Moreover, if $\zeta \in S(K) \cap \gamma_i^{-1}(m)$ for some $i \in [N-j+1:N]$, then the second term in \eqref{eq:U^j_m_equation} coincides with this same value $\tfrac{\Delta\eta_{N-j+1}}{C_{N,T,K}}$; otherwise, the second term is $0$. Consequently, $U^j_m(\zeta)$ can be non-zero only when $\zeta \in S(K) \setminus \bigcup_{i = N-j+1}^N \gamma_i^{-1}(m)$. 

Next, we need to show the following equality holds. For any fixed $\zeta \in S(K)$, 
\begin{align}
    \sum_{j = 1}^K \max_{\mu \in [1:T]}\left( \sum_{x \in \cX} P_{\mu,j}^{(2)}(x, \zeta) \right) = \max_{\mu \in [1:T]}\left( \sum_{j = 1}^K \sum_{x \in \cX} P_{\mu,j}^{(2)}(x, \zeta) \right). 
\end{align}
This equation holds only if there exists a $\mu^{\star} \in [1:T]$ such that, for this fixed $\mu^{\star}$, the term $\sum_x P_{\mu^{\star},j}^{(2)}(x, \zeta)$ is always maximal for every $j$. The index $\mu^{\star}$ is evidently equal to $\zeta_N$, i.e., the final component of the key $\zeta$, because for this choice of $\mu^{\star}$, the quantity $\sum_x P_{\mu^{\star},j}^{(2)}(x, \zeta)$ is always given by $\tfrac{\Delta\eta_{N-j+1}}{C_{N,T,K}}$. Therefore, we have
\begin{align}
    \sum_{j=1}^K U^j_m(\zeta) & = \sum_{j = 1}^K \max_{\mu \in [1:T]}\left( \sum_{x \in \cX} P_{\mu,j}^{(2)}(x, \zeta) \right) - \sum_{j=1}^K \sum_{x \in \cX} P_{m,j}^{(2)}(x, \zeta) \\
    & = \max_{\mu \in [1:T]}\left( \sum_{j = 1}^K \sum_{x \in \cX} P_{\mu,j}^{(2)}(x, \zeta) \right) - \sum_{j=1}^K \sum_{x \in \cX} P_{m,j}^{(2)}(x, \zeta) \\
    & = \max_{\mu \in [1:T]}\left(\sum_{x \in \cX} P_{\mu}^{(2)}(x, \zeta) \right) - \sum_{x \in \cX} P_{m}^{(2)}(x, \zeta) = U_m(\zeta).
\end{align}

We denote $U^j_m = \sum_{\zeta \in \cZ^N_T} U^j_m(\zeta)$, and proceed to compute this value: 
\begin{align}
    U^j_m & = \sum_{\zeta \in \cZ^N_T} U^j_m(\zeta) = \sum_{\zeta \in \cZ^N_T}  \frac{\Delta\eta_{N-j+1}}{C_{N,T,K}} \mathds{1} \{ \zeta \in S(K) \setminus \cup_{i = N-j+1}^N \gamma_i^{-1}(m) \} \\
    & = \frac{\Delta\eta_{N-j+1}}{C_{N,T,K}} \times  \left| S(K) \setminus \bigcup^{N}_{x=N-j+1} \left(S(K) \cap \gamma_x^{-1}(m) \right) \right| \label{eq:unbalanced_Uj-0} \\
    & =  \frac{\Delta\eta_{N-j+1}}{C_{N,T,K}}\times\left[ \underbrace{\left( \begin{gathered}T\\ K \end{gathered} \right)  \times K!\times \frac{(N-K)!}{(N-T)!} \  }_{\left| S(K)\right|  } \  -\  j \times \underbrace{\left( \begin{gathered}T-1\\ K-1\end{gathered} \right)  \times (K-1)!\times \frac{(N-K)!}{(N-T)!} }_{C_{N,T,K}  } \right]  \label{eq:unbalanced_Uj-1}  \\
    & =  \frac{\Delta\eta_{N-j+1}}{C_{N,T,K}}\times C_{N,T,K}(T-j) = \Delta\eta_{N-j+1}(T-j) \label{eq:unbalanced_Uj-2}.
\end{align}
The equality in (\ref{eq:unbalanced_Uj-1}) holds because the sets $\gamma_x^{-1}(m)$ are mutually disjoint for distinct values of $x$.

Last, we calculate the total row-sum imbalance value 
\begin{align}
    U_m = \sum_{j \in [1:K]} U^j_m = \sum_{j \in [1:K]} \Delta\eta_{N-j+1}(T-j) := U, \label{eq:total_U}
\end{align}
where we denote it as $U$ because it is the same value for all message $m$. This concludes the proof.

\section{Proof of Proposition \ref{prop:R_larger_than_U}} \label{sec:proof_R_larger_than_U}

We need to show that $R = \sum_{i=1}^N P_X^{(3)}$ is larger than the total imbalance \(U\). From the decomposition of \(P_X\), the vector \(P_X^{(2)}\) contains at most \(K \leq T-1\) positive entries, and in the proof of Proposition \ref{prop:split_px} we derived
\[
P_X^{(2)} = (\mathbf{0}_{N-K}, \boldsymbol{a}(N-K+1)-y, \dots, \boldsymbol{a}(N)-y).
\]
In that same proof, we introduced \(\tilde{K}\) as the number of entries of \(P_X\) that exceed \(\frac{\alpha}{T}\). From the definition of \(\boldsymbol{a}\) in (\ref{eq:def_of_a}), the last \(\tilde{K}\) entries of \(\boldsymbol{a}\) are all equal to \(\frac{\alpha}{T}\), that is,
\begin{align}
    \boldsymbol{a}(i) = \frac{\alpha}{T}, \quad \forall i \in [N-\tilde{K}+1:N], \label{eq:a_tail_is_alpha/T}
\end{align}
while for the remaining indices we have \(\boldsymbol{a}(i) = P_X(i)\) for all \(i \in [1:N-\tilde{K}]\).

Recall that in (\ref{eq:express_eta_i_same_eta}), we denoted the positive components of \(P_X^{(2)}\) by \(\eta_i\). Therefore, we can express
\begin{align}
    P_X^{(2)} & = (\mathbf{0}_{N-K}, P_X(N-K+1)-y, \dots, P_X(N-\tilde{K})-y, \tfrac{\alpha}{T}-y, \dots, \tfrac{\alpha}{T}-y) \notag \\
    & = (\mathbf{0}_{N-K}, \eta_{N-K+1}, \eta_{N-K+2}, \dots, \eta_N).
\end{align}
Consequently, for every index \(i \in [N-\tilde{K}+1:N]\), we obtain the same value \(\eta_i = \tfrac{\alpha}{T} - y\). In addition, for all element \(i \in [N-\tilde{K}+2:N]\), the increments satisfy \(\Delta\eta_i = \eta_i - \eta_{i-1} = 0\). 

Therefore, from (\ref{eq:total_U}), we further derive $U$ as 
\begin{align}
    U & = \sum_{j=1}^{\tilde{K}-1} \Delta\eta_{N-j+1}(T-j) + \sum_{j=\tilde{K}}^{K} \Delta\eta_{N-j+1}(T-j) = \sum_{j=\tilde{K}}^{K} \Delta\eta_{N-j+1}(T-j) \notag \\
    & = (\eta_{N-\tilde{K}+1} - \eta_{N-\tilde{K}}) (T-\tilde{K}) + (\eta_{N-\tilde{K}} - \eta_{N-\tilde{K}-1}) (T-\tilde{K}-1) + \cdots + (\eta_{N-{K}+1} - \eta_{N-K}) (T-K) \notag \\
    & = (T-\tilde{K})\eta_{N-\tilde{K}+1} - \sum_{j = N-K+1}^{N-\tilde{K}} \eta_j.
\end{align}

Next, we observe that $P_X^{(3)}$ contains only $\tilde{K}$ positive entries at its tail due to the way it is defined within the decomposition of $P_X$. Hence, we can establish that $R - U \geq 0$ as follows, when $U>0$:
\begin{align*}
    R-U & = \sum_{i=N-\tilde{K}+1}^{N} P_X^{(3)} - U \\
    & = \sum_{i=N-\tilde{K}+1}^{N} \left( P_X(i) - \frac{\alpha}{T}\right) - \left[  (T-\tilde{K})\eta_{N-\tilde{K}+1} - \sum_{j = N-K+1}^{N-\tilde{K}} \eta_j \right]  \\
    &=\sum_{i=N-\tilde{K}+1}^{N} P_X(i) - \tilde{K}\frac{\alpha}{T} - (T-\tilde{K})(\frac{\alpha}{T}-y) + \sum_{j = N-K+1}^{N-\tilde{K}}\left( P_{X}(i)-y\right)   \\
    & = \sum_{i=N-K+1}^{N} P_X(i) - T \frac{\alpha}{T}+(T-\tilde{K})y-(K-\tilde{K})y\\
    & = \sum_{i=N-K+1}^{N} P_X(i) - \alpha + (T-K)y \\
    &\stackrel{(a)}{=} \sum_{i=N-K+1}^{N} P_X(i) - \alpha + (T-K)\frac{1}{T-K}\sum_{i=1}^{N-K}P_X(i) \\
    & = 1 - \alpha \geq 0 \label{eq:U_smaller_than_sum_r},
\end{align*}
where in $(a)$ we apply the definition of $y$ when $U>0$. Since $\alpha\leq 1$, we have $R-U\geq 0$. When $U=0$, then clearly $R-U=R\geq 0$, and there is nothing to prove. This concludes the proof. 

\section{Proof of Proposition \ref{prop:pm3_contruction}} \label{sec:proof_pm3_prop}

In this section, we aim to establish the correctness of Algorithm \ref{alg:pm3_proportional_fill}, namely that all $P_m^{(3)}$ are non-negative, and that $P_m = P_m^{(1)} + P_m^{(2)} + P_m^{(3)}$ meets the required properties:  column-sum invariance, row-sum invariance, $\tfrac{\alpha}{T}$-capped column sum, and $\alpha$-bounded total sum. 

The only operation in Algorithm \ref{alg:pm3_proportional_fill} that could potentially produce a negative value is the final step $P_m^{(3)}(x,\mathbf{0}_N)$. We show in Appendix  \ref{sec:proof_R_larger_than_U} that $R = \sum_{x \in \cX}P_X^{(3)}(x) \geq U$, which in turn guarantees that $P_m^{(3)}(x,\mathbf{0}_N)$ is non-negative.

\underline{\textbf{$P_m$ satisfies column-sum invariance:}} \\
We first show $P_m^{(3)}$ satisfies $P_X^{(3)}$-column-sum invariance. Let $\tilde{K}$ denote the number of entries in $P_X$ that are larger than $\tfrac{\alpha}{T}$. It then suffices to check that, for any fixed $x \in [N-\tilde{K}+1 : N]$,
\begin{align}
    \sum_\zeta P_m^{(3)}(x,\zeta) = P_X^{(3)}(x),
\end{align}
because for all indices $x \in [1 : N-\tilde{K}]$ we have $P_X^{(3)}(x) = 0$, and we do not specify any values for $P_m^{(3)}(x,\cdot)$ for those $x$.

From algorithm \ref{alg:pm3_proportional_fill}, for any fixed $x' \in [N-\tilde{K}+1 : N]$, we have 
\begin{align}
    \sum_{\zeta \in \cZ^N_T} P_m^{(3)}(x',\zeta) & = \sum_{\zeta \in \cZ^N_T\setminus \{ \mathbf{0}_N\} } P_m^{(3)}(x',\zeta) + P_m^{(3)}(x',\mathbf{0}_N) \label{eq:rx_to_x_col_pf1}\\
    & = U\left(\frac{P_X^{(3)}(x')}{R} \right) + P_X^{(3)}(x')\left( 1- \frac{U}{R} \right)  \label{eq:rx_to_x_col_pf2}\\
    &= P_X^{(3)}(x'),
\end{align}
where (\ref{eq:rx_to_x_col_pf2}) can be seen as follows:
\begin{align}
    & \sum_{\zeta \in \cZ^N_T\setminus \{ \mathbf{0}_N\} } P_m^{(3)}(x',\zeta) \notag \\
    & =  \sum_{\zeta \in \cZ^N_T\setminus \{ \mathbf{0}\} } \sum_{j = 1}^K  \tfrac{\Delta\eta_{N-j+1}(T-j)}{|S(K)\setminus \cup_{l = N-j+1}^N \gamma_l^{-1}(m)|} \times \tfrac{P_X^{(3)}(x')}{R} \times \mathds{1} {\{ ~\zeta \in S(K)\setminus \cup_{l = N-j+1}^N \gamma_l^{-1}(m)\}} \label{eq:sum_xi_alg_3_w/o_0-0}\\
    & =   \tfrac{P_X^{(3)}(x')}{R} \sum_{j = 1}^K \Delta\eta_{N-j+1}(T-j)  \tfrac{1}{|S(K)\setminus \cup_{l = N-j+1}^N \gamma_l^{-1}(m)|} \times \sum_{\zeta \in \cZ^N_T\setminus \{ \mathbf{0}\} } \mathds{1} {\{\zeta \in S(K)\setminus \cup_{l = N-j+1}^N \gamma_l^{-1}(m)\}} \label{eq:sum_xi_alg_3_w/o_0-1}\\
    & = \tfrac{P_X^{(3)}(x')}{R} \sum_{j = 1}^K \Delta\eta_{N-j+1}(T-j)  =  \tfrac{P_X^{(3)}(x')}{R} U, \label{eq:sum_xi_alg_3_w/o_0-3}
\end{align}
where \eqref{eq:sum_xi_alg_3_w/o_0-0} is obtained from the allocation in Algorithm \ref{alg:pm3_proportional_fill}, and \eqref{eq:sum_xi_alg_3_w/o_0-1} is obtained by reordering the terms. The first equality in \eqref{eq:sum_xi_alg_3_w/o_0-3} holds because the last summation in \eqref{eq:sum_xi_alg_3_w/o_0-1} counts $\zeta$ elements in the set $\{S(K)\setminus \cup_{l = N-j+1}^N \gamma_l^{-1}(m) \}$, which exactly cancels the preceding fractional term.

Since we have already established that each $P_m^{(i)}$ satisfies the $P_X^{(i)}$-column-sum invariance, it follows that
\begin{align}
    \sum_{\zeta \in \cZ^N_T} P_m(x, \zeta) 
    &= \sum_{\zeta \in \cZ^N_T} \big(P_m^{(1)} + P_m^{(2)} + P_m^{(3)}\big)(x, \zeta) \notag \\
    &= P_X^{(1)}(x) + P_X^{(2)}(x) + P_X^{(3)}(x)= P_X(x), \quad \forall x \in \cX.
\end{align}
Hence, we have shown that $P_m$ satisfies the column-sum invariance condition.

\underline{\textbf{$P_m$ satisfies row-sum invariance:}} \\
Next, we show that $P_m^{(2)}+P_m^{(3)}$ satisfies row-sum invariance condition. It suffies to show that the row sum of $P_m^{(2)}+P_m^{(3)}$ is independent to $m \in [1:T]$ value. 
In algorithm \ref{alg:fill_eta_layered}, for an arbitrary $i \in \cX = [1:N]$, and fixed $m \in [1:T]$ and $\zeta \in \cZ^N_T$,
\begin{align}
    P^{(2)}_m(i, \zeta)  & = \sum_{j=1}^{K} \sum_{x=N-j+1}^N \frac{\Delta\eta_{N-j+1}}{C_{N,T,K}}\mathds{1}{\{x = i, \zeta \in S(K) \cap \gamma_{i}^{-1}(m)\}} \notag \\
    & = \left\{ \begin{array}{ll}
         0 &, i \in [1:N-K]  \\
         \sum_{j=j'}^{K}  \frac{\Delta\eta_{N-j+1}}{C_{N,T,K}}\mathds{1}{\{ \zeta \in S(K) \cap \gamma_{i}^{-1}(m)\}} & , i=N-j'+1\in [N-K+1:N]
    \end{array}\right. \\
    \Rightarrow & \sum_{i \in \cX} P^{(2)}_m(i, \zeta) = \sum_{i=N-K+1}^N \sum_{j=N-i+1}^{K}  \frac{\Delta\eta_{N-j+1}}{C_{N,T,K}}\mathds{1}{\{ \zeta \in S(K) \cap \gamma_{i}^{-1}(m)\}} .\label{eq:sum_x_p2_first_appear}
\end{align}

In algorithm \ref{alg:pm3_proportional_fill}, for an arbitrary $i \in \cX = [1:N]$, and fixed $m \in [1:T]$ and $\zeta \in \cZ^N_T$,
\begin{align}
    & P^{(3)}_m(i, \zeta) =\sum_{j=1}^K \sum_{x=N-\tilde{K}+1}^N \frac{\Delta\eta_{N-j+1}(T-j)}{|S(K) \setminus \cup_{l= N-j+1}^N \gamma^{-1}_l(m)|} \times \frac{P_X^{(3)}(i)}{R}\times \mathds{1} {\{ x = i, \zeta \in S(K) \setminus \cup_{l= N-j+1}^N \gamma^{-1}_l(m)\}} \notag \\
    & = \left\{ \begin{array}{ll}
         0 &, i \in [1:N-\tilde{K}]  \\
         \frac{P_X^{(3)}(i)}{R} \sum_{j=1}^K \frac{\Delta\eta_{N-j+1}(T-j)}{|S(K) \setminus \cup_{l= N-j+1}^N \gamma^{-1}_l(m)|}\mathds{1}{\{S(K) \setminus \cup_{l= N-j+1}^N \gamma^{-1}_l(m)\}} & , i\in [N-\tilde{K}+1:N]
    \end{array}\right.
\end{align}
This leads to 
\begin{align}
    \sum_{i \in \cX} P^{(3)}_m(i, \zeta) & = \underbrace{\sum_{i=N-\tilde{K}+1}^N \frac{P_X^{(3)}(i)}{R}}_{=1} \sum_{j=1}^K \frac{\Delta\eta_{N-j+1}(T-j)}{|S(K) \setminus \cup_{l= N-j+1}^N \gamma^{-1}_l(m)|}\mathds{1}{\{ \zeta \in S(K) \setminus \cup_{l= N-j+1}^N \gamma^{-1}_l(m)\}} \notag \\
    & = \sum_{j=1}^K \frac{\Delta\eta_{N-j+1}(T-j)}{|S(K) \setminus \cup_{l= N-j+1}^N \gamma^{-1}_l(m)|}\mathds{1}{\{ \zeta \in S(K) \setminus \cup_{l= N-j+1}^N \gamma^{-1}_l(m)\}} \\
    &  = \sum_{j=1}^K \frac{\Delta\eta_{N-j+1}}{C_{N,T,K}}\mathds{1}{\{ \zeta \in S(K) \setminus \cup_{l= N-j+1}^N \gamma^{-1}_l(m)\}}, \label{eq:sum_x_p3}
\end{align}
where (\ref{eq:sum_x_p3}) follows from the relation
\begin{align}
    \frac{(T-j)}{|S(K) \setminus \cup_{l= N-j+1}^N \gamma^{-1}_l(m)|} = \frac{(T-j)}{C_{N,T,K}(T-j)} = \frac{1}{C_{N,T,K}},
\end{align}
which is established in (\ref{eq:unbalanced_Uj-1})–(\ref{eq:unbalanced_Uj-2}).

Therefore, from \eqref{eq:sum_x_p2_first_appear} and \eqref{eq:sum_x_p3}, we have
\begin{align}
     & \sum_{i \in \cX} P^{(2)}_m(i, \zeta) +\sum_{i \in \cX}P^{(3)}_m(i, \zeta) \notag \\
     = & \sum_{i=N-K+1}^N \sum_{j=N-i+1}^{K}  \frac{\Delta\eta_{N-j+1}}{C_{N,T,K}}\mathds{1}{\{ \zeta \in S(K) \cap \gamma_{i}^{-1}(m)\}} +  \sum_{j=1}^K \frac{\Delta\eta_{N-j+1}}{C_{N,T,K}}\mathds{1}{\{ \zeta \in S(K) \setminus \cup_{l= N-j+1}^N \gamma^{-1}_l(m)\}} \label{eq:sum_x_p2_p3-1}\\
    = & \sum_{j=1}^K \frac{\Delta\eta_{N-j+1}}{C_{N,T,K}} \mathds{1} {\{\zeta \in S(K)\}}, \; \forall \zeta \in \cZ^N_T \label{eq:sum_x_p2_p3-2}
\end{align}
which can be seen as follows. When $\zeta \notin S(K)$, both $\sum_{i \in \cX} P^{(2)}_m(i, \zeta)$ and $\sum_{i \in \cX} P^{(3)}_m(i, \zeta)$ are equal to $0$. When $\zeta \in S(K)$, we distinguish cases according to whether $m$ appears in the last $K$ components of $\zeta$. Specifically, consider the case where $\zeta \in \gamma^{-1}_{x'}(m)$ for some $x' \in [N-K+1:N]$. Write $x' = N - j' + 1$ with $j' \in [1:K]$. Then the first part of \eqref{eq:sum_x_p2_p3-1} becomes
\begin{align}
     & \sum_{i=N-K+1}^N \sum_{j=N-i+1}^{K}  \frac{\Delta\eta_{N-j+1}}{C_{N,T,K}}\mathds{1}{\{ \zeta \in S(K) \cap \gamma_{i}^{-1}(m)\}} \notag \\
     & =  \sum_{j=j'}^{K}  \frac{\Delta\eta_{N-j+1}}{C_{N,T,K}}\mathds{1}{\{ \zeta \in S(K) \cap \gamma_{x'}^{-1}(m)\}} 
       = \sum_{j=j'}^{K}  \frac{\Delta\eta_{N-j+1}}{C_{N,T,K}}, \label{eq:zeta1_pm2_total}
\end{align}
where \eqref{eq:zeta1_pm2_total} follows because there is exactly one $i \in [N-K+1:N]$ such that $\zeta \in \gamma^{-1}_{i}(m)$, namely $i = x' = N - j' + 1$.
The second part of \eqref{eq:sum_x_p2_p3-1} can be derived as 
\begin{align}
    \sum_{j=1}^K \frac{\Delta\eta_{N-j+1}}{C_{N,T,K}}\mathds{1}{\{ \zeta \in S(K) \setminus \cup_{l= N-j+1}^N \gamma^{-1}_l(m)\}} = \sum_{j=1}^{j'-1} \frac{\Delta\eta_{N-j+1}}{C_{N,T,K}},
\end{align}
since clearly $\zeta$ will not be in the set $\{ S(K) \setminus \cup_{l= N-j+1}^N \gamma^{-1}_l(m)\}$ if $j \geq j'$. Therefore, we have 
\begin{align}
     \eqref{eq:sum_x_p2_p3-1} = \sum_{j=j'}^{K}  \frac{\Delta\eta_{N-j+1}}{C_{N,T,K}} + \sum_{j=1}^{j'-1}  \frac{\Delta\eta_{N-j+1}}{C_{N,T,K}} = \sum_{j=1}^{K}  \frac{\Delta\eta_{N-j+1}}{C_{N,T,K}}.
\end{align}

If $\zeta \in S(K)$ does not belong to any of the sets $\gamma_x^{-1}(m)$ for $x \in [N-K+1:N]$, then the first term in \eqref{eq:sum_x_p2_p3-1} vanishes, and we obtain
\begin{align}
    \eqref{eq:sum_x_p2_p3-1}
    = \sum_{j=1}^K \frac{\Delta\eta_{N-j+1}}{C_{N,T,K}}\mathds{1}{\{ \zeta \in S(K) \setminus \cup_{l= N-j+1}^N \gamma^{-1}_l(m)\}}
    = \sum_{j=1}^K \frac{\Delta\eta_{N-j+1}}{C_{N,T,K}}.
\end{align}
Thus, the row sums of $P_m^{(2)}+P_m^{(3)}$ are independent of $m$, and the sum matrix is indeed row-sum invariant. Since $P_m^{(1)}$ also satisfies row-sum invariance in Proposition \ref{prop:construct_pm1}, it follows that $P_m = P_m^{(1)}+P_m^{(2)}+P_m^{(3)}$ satisfies row-sum invariance as well.

\underline{\textbf{$P_m$ satisfies $\tfrac{\alpha}{T}$-capped column sum condition:}} 

For each $m \in [1:T]$ and for all $x \in \cX$
\begin{align}
    & \sum_{\zeta \in \cZ^N_T}P_m(x,\zeta) \mathds{1} \{ \gamma(x,\zeta) = m\} \notag \\
    & =  \sum_{\zeta \in \cZ^N_T}(P_m^{(1)}+P_m^{(2)})(x,\zeta) \mathds{1} \{ \gamma(x,\zeta) = m\} + P_m^{(3)}(x,\zeta) \mathds{1} \{ \gamma(x,\zeta) = m\} \notag \\
    & = \min \left(\frac{\alpha}{T} , P_X(x)\right), 
\end{align}
since the $P_m^{(3)}$ construction in Algorithm \ref{alg:pm3_proportional_fill} does not assign any value to $P_m^{(3)}(x,\zeta)$ for pairs satisfying $\gamma(x,\zeta) = m$. 

\underline{\textbf{$P_m$ satisfies $\alpha$-bounded total sum condition:}} \\

For any $x \in [1:N]$, 
\begin{align*}
    & \sum_{m \in [1:T]} \sum_{\zeta \in \cZ^N_T} P_{Z}(\zeta) \mathds{1} {\{\gamma(x,\zeta) = m\}} = \sum_{m \in [1:T]} \sum_{\zeta \in \cZ^N_T} \left( \sum_{i=1}^N P_m(i,\zeta) \right) \mathds{1} {\{\zeta \in \gamma^{-1}_x(m)\}} \\
    & = \sum_{m \in [1:T]} \sum_{\zeta \in \cZ^N_T} \left( \sum_{i=1}^N P_m^{(1)}(i,\zeta)+ P_m^{(2)}(i,\zeta) + P_m^{(3)}(i,\zeta) \right) \mathds{1} {\{\zeta \in \gamma^{-1}_x(m)\}} \\
    & = \sum_{m \in [1:T]} \left( \sum_{\zeta \in \cZ^N_T} \underbrace{\sum_{i=1}^N P_m^{(1)}(i,\zeta) \mathds{1} {\{\zeta \in \gamma^{-1}_x(m)\}} }_{(a)} + \sum_{\zeta \in \cZ^N_T} \underbrace{ \biggl[\sum_{i=1}^N P_m^{(2)}(i,\zeta) + \sum_{i=1}^N P_m^{(3)}(i,\zeta) \biggr]}_{(b)} \mathds{1}{\{\zeta \in \gamma^{-1}_x(m)\}}\right) \\
    & = \sum_{m \in [1:T]} \left( \sum_{\zeta \in \cZ^N_T} P_m^{(1)}(x,\zeta) \mathds{1} {\{\zeta \in \gamma^{-1}_x(m)\}} + \sum_{\zeta \in \cZ^N_T} \biggl[\sum_{j=1}^K \frac{\Delta\eta_{N-j+1}}{C_{N,T,K}} \mathds{1} \{ \zeta \in S(K)\}\biggr] \mathds{1} {\{\zeta \in \gamma^{-1}_x(m)\}}\right) \\
    & = \sum_{m \in [1:T]} \left( P_X^{(1)}(x)+  \sum_{j=1}^K \frac{\Delta\eta_{N-j+1}}{C_{N,T,K}}\sum_{\zeta \in \cZ^N_T} \mathds{1} \{ \zeta \in S(K)\cap \gamma^{-1}_x(m)\} \right) \\
    & = \sum_{m \in [1:T]}\left( P_X^{(1)}(x)+ \sum_{j=1}^K\Delta\eta_{N-j+1} \right) = \sum_{m \in [1:T]} P_X^{(1)}(x) + \eta_N = \sum_{m \in [1:T]}(P_X^{(1)}(x) + P_X^{(2)}(N)) \\
    & \leq \sum_{m \in [1:T]}(P_X^{(1)}(N) + P_X^{(2)}(N) ) = T\min (\frac{\alpha}{T}, P_X(N)) \leq \alpha.
\end{align*}
Here, the value $(b)$ is obtained from (\ref{eq:sum_x_p2_p3-2}), and the value $(a)$ follows from the fact that $P_m^{(1)}(i,\zeta)$ is positive only when $x = i$.

\section{Proof of Proposition \ref{prop:construct_big_Pm}} \label{sec:proof_big_Pm}
In this section, we show that $P_m$ produced by Algorithm \ref{alg:construction_B} satisfies column-sum invariance, row-sum invariance, $\tfrac{\alpha}{T}$-capped column sum, and $\alpha$-bounded total sum.

\underline{\textbf{$P_m$ satisfies column-sum invariance:}}

Since we generate $P'_m$ using Algorithm \ref{alg:iterative_construction} with input $P'_X$. By proposition \ref{prop:construct_pm1}, $P'_m$ satisfies $P'_X$-column-sum invariance, that is 
\begin{align}
   \sum_{\zeta } P^{\prime }_{m}(i,\zeta )=P^{\prime }_{X}(i)=\begin{cases}\min (\tfrac{\alpha }{T} ,P_{X}(i))&,\forall i\in [1:N]\\ \frac{R}{n} &,\forall i\in [N+1:N+n]\end{cases}  
\end{align}
For case $n=0$, Algorithm \ref{alg:construction_B} yeilds
\begin{align*}
   \sum_{\zeta } P_{m}(i,\zeta ) & = \sum_{\zeta \in \cZ^N_T \setminus \mathbf{0_N} } P_{m}(i,\zeta )+ P_{m}(i,\mathbf{0}_N)  \\
   & = P'_X(i)+ P_{m}(i,\mathbf{0}_N) = \min (\tfrac{\alpha }{T} ,P_{X}(i)) + \boldsymbol{r}(i) \\
   & = \min (\tfrac{\alpha }{T} ,P_{X}(i)) + P_X(i) - \min (\tfrac{\alpha }{T} ,P_{X}(i)) = P_X(i), \forall i \in [1:N] 
\end{align*} 

For case $n > 0$, for $x \in [1:N]$, we have 
\begin{align*}
   \sum_{\zeta } P_{m}(x,\zeta ) &= \sum_{\zeta } \left( P'_{m}(x,\zeta ) + \sum_{i=N+1}^{N+n} \tfrac{\boldsymbol{r}(x)}{R}P'_{m}(i,\zeta )\right) \\
   & = \sum_{\zeta }P'_{m}(x,\zeta ) +  \tfrac{\boldsymbol{r}(x)}{R} \sum_{i=N+1}^{N+n}\sum_{\zeta }P'_{m}(i,\zeta ) \\
   & = P'_X(x) +  \tfrac{\boldsymbol{r}(x)}{R} \sum_{i=N+1}^{N+n}P'_X(i) \\
   & = P'_X(x) +  \tfrac{\boldsymbol{r}(x)}{R}\times n \times \tfrac{R}{n} = \min (\tfrac{\alpha }{T} ,P_{X}(x)) + \boldsymbol{r}(x) = P_X(x). 
\end{align*}
Therefore, in both the cases $n=0$ and $n>0$, $P_m$ preserves the column-sum invariance.

\underline{\textbf{$P_m$ satisfies row-sum invariance:}}

In both the cases $n=0$ and $n>0$, $P'_m$ preserves the row-sum invariance because of proposition \ref{prop:construct_pm1}. Thus, for fixed $\zeta$, we have 
\begin{align}
    \sum_{i = 1}^{N+n} P'_m(i,\zeta) = \sum_{i = 1}^{N+n} P'_{m'}(i,\zeta), \; \forall m' \in [1:T] \;  \text{and } \; m \neq m'.
\end{align}
For the case $n=0$, we already have $P_m = P'_m$ on the subset $\zeta \in \cZ^N_T \setminus \mathbf{0}_N$, so it remains to verify row-sum invariance when $\zeta = \mathbf{0}_N$. For every $m \in [1:T]$,
\begin{align}
    \sum_{i=1}^N P_m(i,\mathbf{0}_N) = \sum_{i=1}^N \boldsymbol{r}(i) = R.
\end{align}
Hence, for $\zeta = \mathbf{0}_N$, row-sum invariance still holds, because the sum does not depend on $m$.

For the case $n>0$, for any fixed $\zeta \in \cZ^{N+n}_T$, 
\begin{align*}
    \sum_{x=1}^N P_m(x,\zeta) & = \sum_{x=1}^N \left( P'_m(x,\zeta) + \sum_{i=N+1}^{N+n} \tfrac{\boldsymbol{r}(x)}{R}P'_m(i,\zeta) \right) \\&= \sum_{x=1}^N P'_m(x,\zeta) + \sum_{x=1}^N \tfrac{\boldsymbol{r}(x)}{R} \sum_{i=N+1}^{N+n}P'_m(i,\zeta) \\
    & = \sum_{x=1}^N P'_m(x,\zeta) +\sum_{i=N+1}^{N+n}P'_m(i,\zeta) = \sum_{x=1}^{N+n}P'_m(x,\zeta) \\
    &= \sum_{x=1}^{N+n}P'_{m'}(x,\zeta) = \sum_{x=1}^N P_{m'}(x,\zeta), \; \forall m' \in [1:T] \;  \text{and } \; m \neq m'.
\end{align*}
Thus, $P_m$ satisfies the row-sum invariance for both cases $n = 0$ and $n >0$. 

\underline{\textbf{$P_m$ satisfies $\tfrac{\alpha}{T}$-capped column sum:}}

We need to show that for each $x \in [1:N]$, 
\begin{align*}
    \sum_{\zeta \in \cZ^{N+n}_T} P_m(x,\zeta)\,\mathds{1}\{\gamma(x,\zeta) = m \} \geq \min\left(\frac{\alpha}{T}, P_X(x)\right). 
\end{align*}
Since Algorithm \ref{alg:iterative_construction} only allocates value to $(x,\zeta)$ pairs that $\gamma(x,\zeta) = m $ , we have 
\begin{align*}
    \sum_{\zeta \in \cZ^{N+n}_T} P'_m(x,\zeta)\,\mathds{1}\{\gamma(x,\zeta) = m \} = P'_X(x) = \min\left(\frac{\alpha}{T}, P_X(x)\right). 
\end{align*}
Thus, it is obvious that $P_m$ satisfies the $\tfrac{\alpha}{T}$-capped column-sum property for case $n=0$. 

For $n > 0$ and each fixed $x \in [1:N]$, 
\begin{align}
    \sum_{\zeta \in \cZ^{N+n}_T} P_m(x,\zeta)\,\mathds{1}\{\gamma(x,\zeta) = m \} & = \sum_{\zeta \in \cZ^{N+n}_T} \left(P'_m(x,\zeta) + \tfrac{\boldsymbol{r}(x)}{R} \sum_{i=1}^{N+n}P'_m(i,\zeta) \right) \,\mathds{1}\{\zeta \in \gamma_x^{-1}(m)\} \label{eq:pf_Pm_alpha/T_capped_n>0}\\
    & = \sum_{\zeta \in \cZ^{N+n}_T} P'_m(x,\zeta) \,\mathds{1}\{\zeta \in \gamma_x^{-1}(m)\} = \min\left(\frac{\alpha}{T}, P_X(x)\right).
\end{align}
The term $\sum_{i=1}^{N+n}P'_m(i,\zeta)$ in \eqref{eq:pf_Pm_alpha/T_capped_n>0} vanishes because, whenever $\zeta \in \gamma_x^{-1}(m)$ for some $x\in [1:N]$, the algorithm assigns no value to $P'_m(i,\zeta)$ for any $i \in [N+1:N+n]$, because the Algorithm \ref{alg:iterative_construction} will only assign one non-zero value on each row.

Thus, $P_m$ also satisfies $\tfrac{\alpha}{T}$-capped column sum for case $n>0$.

\underline{\textbf{$P_m$ satisfies $\alpha$-bounded total sum:}}

For any $x\in[1:N]$, 
\begin{align}
    & \sum_{m \in [1:T]} \sum_{\zeta \in \cZ^{N+n}_T} P_\cZ(\zeta)\mathds{1} {\{ \gamma(x,\zeta) = m\}}  = \sum_{m \in [1:T]} \sum_{\zeta \in \cZ^{N}_T} \sum_{i=1}^{N} P_m(i,\zeta)\mathds{1} {\{ \gamma(x,\zeta) = m\}}\\
    & = \sum_{m \in [1:T]} \sum_{\zeta \in \cZ^{N+n}_T} \sum_{i=1}^{N+n} P'_m(i,\zeta)\mathds{1} {\{ \gamma(x,\zeta) = m\}}  \label{eq:pf_Pm_B_alpha_bounded1} \\
    & = \sum_{m \in [1:T]} \sum_{\zeta \in \cZ^{N+n}_T} P'_m(x,\zeta)\mathds{1} {\{ \gamma(x,\zeta) = m\}} = \sum_{m \in [1:T]} P'_X(x) \leq T\frac{\alpha}{T} = \alpha,  \label{eq:pf_Pm_B_alpha_bounded2} 
\end{align}
where the equality in \eqref{eq:pf_Pm_B_alpha_bounded1} follows from the equations used in proving that $P_m$ satisfies the row-sum invariance property, and the first equality in \eqref{eq:pf_Pm_B_alpha_bounded2} holds because for $\zeta$ satisfying $\gamma(x,\zeta) = m$, Algorithm \ref{alg:iterative_construction} only assigns a value to $P'_m(x,\zeta)$. This concludes the proof. 

\section{Proof of Invalidity of He et al.'s Construction} \label{sec:he_construction_proof}

 We first formulate the optimization problem under their choice of  $(P_m,\gamma)$ in Table \ref{tab:construct_pm_he} as below. The variables are those elements in $P_m$, which are chosen to minimize the objective function. 
\begin{align*}
   \min_{P_1,P_2, P_{\cZ}}  \; & \; \max\left( \sum_{x=1}^3 \sum_{ \zeta \notin \gamma^{-1}_x(1)} P_1(x,\zeta), \sum_{x=1}^3 \sum_{ \zeta \notin \gamma^{-1}_x(2)} P_2(x,\zeta)\right) \\
    \text{s.t.} & \sum_{m=1}^2 \sum_{\zeta \in \gamma^{-1}_x(m)} P_{\cZ}(\zeta) \leq \alpha, \forall x  \\
    & \sum_{\zeta \in \cZ_{bi}} P_m(x,\zeta) = P_X(x), \forall x , \forall m \\
    & \sum_{x=1}^3 P_m(x,\zeta) = P_{\cZ}(\zeta), \forall \zeta \in \cZ_{bi}, \forall m \\ 
    & P_m(x,\zeta) \geq 0, \forall x, \forall \zeta \in \cZ_{bi}, \forall m 
\end{align*}
This problem is equivalent to the following problem, where we replace the max function in objective function with an additional variable $t$. 
\begin{align*}
    \min_{P_1,P_2,P_{\cZ},t} & \; \;  t \\
    \text{s.t.} & \sum_{x=1}^3 \sum_{\zeta \notin \gamma^{-1}_x(m)} P_m(x,\zeta) -t \leq 0 , \forall m \\
    & \sum_{m=1}^2 \sum_{\zeta \in \gamma^{-1}_x(m)} P_{\cZ}(\zeta) \leq \alpha, \forall x  \\
    & \sum_{\zeta \in \cZ_{bi}} P_m(x,\zeta) = P_X(x), \forall x, \forall m\\
    & \sum_{x =1}^3 P_m(x,\zeta) - P_{\cZ}(\zeta) = 0, \forall \zeta \in \cZ_{bi}, \forall m \\ 
    & P_m(x,\zeta) \geq 0, \forall x, \forall \zeta \in \cZ_{bi}, \forall m 
\end{align*} 
We then write this problem in the following form: 
\begin{align*}
   \min_{\mathbf{p}}  ~~~ & \mathbf{d}^\intercal \mathbf{p} \\
    \text{s.t.} ~~~ & \mathbf{Ap} \leq \mathbf{b}, \; \mathbf{Ep} = \mathbf{c}, \; \mathbf{p} \geq \mathbf{0}
\end{align*} 
where 
\begin{align*}
    & \mathbf{p} = \begin{bmatrix}P_1, P_2, P_{Z}, t\end{bmatrix}^\intercal, \; 
    \mathbf{d} = \begin{bmatrix}\mathbf{0}_{28}, 1\end{bmatrix}^\intercal, \; 
    \mathbf{b} = \begin{bmatrix}
        \alpha \mathbf{1}_3 , ~ \mathbf{0}_2
    \end{bmatrix}^\intercal, \; 
    \mathbf{c} = \begin{bmatrix}
        P_X, ~  P_X, ~\mathbf{0}_8
    \end{bmatrix} ^\intercal \\
    & P_1= \begin{bmatrix}P_1(1,:), ~P_1(2,:), ~P_1(3,:)\end{bmatrix}, \; 
    P_2 = \begin{bmatrix}P_2(1,:), ~P_2(2,:), ~P_2(3,:)\end{bmatrix}.
\end{align*}

 The matrices $\mathbf{A} \in \mathbb{R}^{29 \times 5}$ and $\mathbf{E}\in \mathbb{R}^{29 \times 14}$ contain the coefficients of the inequality and equality constraints, respectively. Due to the large size of these matrices, we do not display them here.

We denote $\mathbf{y}$ and $\mathbf{z}$ as the dual variables corresponding to the inequality and equality constraints, respectively. Therefore, the dual problem is 
\begin{align*}
    \max_{\mathbf{y}, \mathbf{z}}  ~~~ & -\mathbf{y}^\intercal\mathbf{b} - \mathbf{z}^\intercal \mathbf{c}\\
    \text{s.t.} ~~~ & -\mathbf{A}^\intercal\mathbf{y}-\mathbf{E}^\intercal \leq \mathbf{d}, \; \mathbf{y} \geq \mathbf{0}, \; \mathbf{z} \; \text{free}
\end{align*}
Consider the feasible solution \eqref{eq:feasible_sol_dual} in the dual problem above:
\begin{align}
    \mathbf{y} = \begin{bmatrix}
        0 \\ 0\\ 0 \\ 0.5\\ 0.5
    \end{bmatrix} , ~ \mathbf{z} = \begin{bmatrix}
        0 \\  -0.5\\ 0.5 \\ \mathbf{0}_4^\intercal \\0.5\\\mathbf{0}_3^\intercal \\ -0.5\\0
    \end{bmatrix}. \label{eq:feasible_sol_dual}
\end{align}
This choice yields a dual objective value of $-\mathbf{y}^\intercal\mathbf{b} - \mathbf{z}^\intercal \mathbf{c} = 0.47$, which exceeds the optimal value reported in the paper, namely $1-\sum_x \min \bigl(\tfrac{\alpha}{T}, P_X(x)\bigr) = 0.455$. However, by weak duality, any dual feasible pair $(\mathbf{y}, \mathbf{z})$ must produce a dual objective value that is no larger than any primal objective value. Consequently, the reported value $0.455$ cannot be optimal value. This completes the proof.

\bibliographystyle{IEEEtran}
\bibliography{watermark}

@inproceedings{hedistributional,
  title={Distributional Information Embedding: A Framework for Multi-bit Watermarking},
  author={He, Haiyun and Liu, Yepeng and Wang, Ziqiao and Mao, Yongyi and Bu, Yuheng},
  booktitle={The 1st Workshop on GenAI Watermarking},
  year={2025}
}

@article{he2024theoretically,
  title={Theoretically grounded framework for llm watermarking: A distribution-adaptive approach},
  author={He, Haiyun and Liu, Yepeng and Wang, Ziqiao and Mao, Yongyi and Bu, Yuheng},
  journal={arXiv preprint arXiv:2410.02890},
  year={2024}
}

@inproceedings{kirchenbauer2023watermark,
  title={A watermark for large language models},
  author={Kirchenbauer, John and Geiping, Jonas and Wen, Yuxin and Katz, Jonathan and Miers, Ian and Goldstein, Tom},
  booktitle={International conference on machine learning},
  pages={17061--17084},
  year={2023},
  organization={PMLR}
}

@article{kuditipudi2023robust,
  title={Robust distortion-free watermarks for language models},
  author={Kuditipudi, Rohith and Thickstun, John and Hashimoto, Tatsunori and Liang, Percy},
  journal={arXiv preprint arXiv:2307.15593},
  year={2023}
}

@misc{he2026fundamental,
  author = {Haiyun He  and Yepeng Liu and Zhuoer Shen and Ziqiao Wang and Yongyi Mao and Yuheng Bu},
  title = {Fundamental Trade-Offs in Multi-Bit Watermarking of Stochastic Processes},
  date = {2026-04-02},
  howpublished = {Preprint via personal communication},
  year = {2026}
}

@article{cox2008digital,
  title={Digital watermarking},
  author={Cox, Ingemar J and Miller, Matthew L and Bloom, Jeffrey A and Fridrich, Jessica and Kalker, Ton},
  journal={Morgan Kaufmann Publishers},
  volume={54},
  number={56-59},
  pages={2},
  year={2008},
  publisher={Springer}
}

@article{liu2024survey,
  title={A survey of text watermarking in the era of large language models},
  author={Liu, Aiwei and Pan, Leyi and Lu, Yijian and Li, Jingjing and Hu, Xuming and Zhang, Xi and Wen, Lijie and King, Irwin and Xiong, Hui and Yu, Philip},
  journal={ACM Computing Surveys},
  volume={57},
  number={2},
  pages={1--36},
  year={2024},
  publisher={ACM New York, NY}
}

@article{yang2025watermarking,
  title={Watermarking for large language models: A survey},
  author={Yang, Zhiguang and Zhao, Gejian and Wu, Hanzhou},
  journal={Mathematics},
  volume={13},
  number={9},
  pages={1420},
  year={2025},
  publisher={MDPI}
}

@misc{aaronson2023watermarking,
  author       = {Scott Aaronson},
  title        = {Watermarking of Large Language Models},
  year         = {2023},
  month        = aug,
  day          = {17},
  howpublished = {\url{https://simons.berkeley.edu/talks/scott-aaronson-ut-austin-openai-2023-08-17}},
  note         = {Talk page, Simons Institute for the Theory of Computing. Accessed: 2026-03-28}
}

@article{zhao2023provable,
  title={Provable robust watermarking for ai-generated text},
  author={Zhao, Xuandong and Ananth, Prabhanjan and Li, Lei and Wang, Yu-Xiang},
  journal={arXiv preprint arXiv:2306.17439},
  year={2023}
}

@article{liu2024adaptive,
  title={Adaptive text watermark for large language models},
  author={Liu, Yepeng and Bu, Yuheng},
  journal={arXiv preprint arXiv:2401.13927},
  year={2024}
}

@article{hu2023unbiased,
  title={Unbiased watermark for large language models},
  author={Hu, Zhengmian and Chen, Lichang and Wu, Xidong and Wu, Yihan and Zhang, Hongyang and Huang, Heng},
  journal={arXiv preprint arXiv:2310.10669},
  year={2023}
}

@article{wu2023resilient,
  title={A resilient and accessible distribution-preserving watermark for large language models},
  author={Wu, Yihan and Hu, Zhengmian and Guo, Junfeng and Zhang, Hongyang and Huang, Heng},
  journal={arXiv preprint arXiv:2310.07710},
  year={2023}
}

@article{jiang2025stealthink,
  title={Stealthink: A multi-bit and stealthy watermark for large language models},
  author={Jiang, Ya and Wu, Chuxiong and Boroujeny, Massieh Kordi and Mark, Brian and Zeng, Kai},
  journal={arXiv preprint arXiv:2506.05502},
  year={2025}
}

@article{chao2024watermarking,
  title={Watermarking language models with error correcting codes},
  author={Chao, Patrick and Sun, Yan and Dobriban, Edgar and Hassani, Hamed},
  journal={arXiv preprint arXiv:2406.10281},
  year={2024}
}

@inproceedings{long2025optimized,
  title={Optimized Couplings for Watermarking Large Language Models},
  author={Long, Carol Xuan and Tsur, Dor and Verdun, Claudio Mayrink and Hsu, Hsiang and Permuter, Haim and Calmon, Flavio P},
  booktitle={2025 IEEE International Symposium on Information Theory (ISIT)},
  pages={1--6},
  year={2025},
  organization={IEEE}
}

@inproceedings{chen2025improved,
  title={Improved unbiased watermark for large language models},
  author={Chen, Ruibo and Wu, Yihan and Guo, Junfeng and Huang, Heng},
  booktitle={Proceedings of the 63rd Annual Meeting of the Association for Computational Linguistics (Volume 1: Long Papers)},
  pages={20587--20601},
  year={2025}
}

@inproceedings{zhu2024duwak,
  title={Duwak: Dual watermarks in large language models},
  author={Zhu, Chaoyi and Galjaard, Jeroen and Chen, Pin-Yu and Chen, Lydia},
  booktitle={Findings of the Association for Computational Linguistics: ACL 2024},
  pages={11416--11436},
  year={2024}
}

@article{wang2023towards,
  title={Towards codable watermarking for injecting multi-bits information to LLMs},
  author={Wang, Lean and Yang, Wenkai and Chen, Deli and Zhou, Hao and Lin, Yankai and Meng, Fandong and Zhou, Jie and Sun, Xu},
  journal={arXiv preprint arXiv:2307.15992},
  year={2023}
}

@article{takezawa2023necessary,
  title={Necessary and sufficient watermark for large language models},
  author={Takezawa, Yuki and Sato, Ryoma and Bao, Han and Niwa, Kenta and Yamada, Makoto},
  journal={arXiv preprint arXiv:2310.00833},
  year={2023}
}

@inproceedings{yoo2024advancing,
  title={Advancing beyond identification: Multi-bit watermark for large language models},
  author={Yoo, KiYoon and Ahn, Wonhyuk and Kwak, Nojun},
  booktitle={Proceedings of the 2024 Conference of the North American Chapter of the Association for Computational Linguistics: Human Language Technologies (Volume 1: Long Papers)},
  pages={4031--4055},
  year={2024}
}

@article{wouters2023optimizing,
  title={Optimizing watermarks for large language models},
  author={Wouters, Bram},
  journal={arXiv preprint arXiv:2312.17295},
  year={2023}
}

@inproceedings{christ2024undetectable,
  title={Undetectable watermarks for language models},
  author={Christ, Miranda and Gunn, Sam and Zamir, Or},
  booktitle={The Thirty Seventh Annual Conference on Learning Theory},
  pages={1125--1139},
  year={2024},
  organization={PMLR}
}

@article{boroujeny2024multi,
  title={Multi-bit distortion-free watermarking for large language models},
  author={Boroujeny, Massieh Kordi and Jiang, Ya and Zeng, Kai and Mark, Brian},
  journal={arXiv preprint arXiv:2402.16578},
  year={2024}
}

@article{dathathri2024scalable,
  title={Scalable watermarking for identifying large language model outputs},
  author={Dathathri, Sumanth and See, Abigail and Ghaisas, Sumedh and Huang, Po-Sen and McAdam, Rob and Welbl, Johannes and Bachani, Vandana and Kaskasoli, Alex and Stanforth, Robert and Matejovicova, Tatiana and others},
  journal={Nature},
  volume={634},
  number={8035},
  pages={818--823},
  year={2024},
  publisher={Nature Publishing Group UK London}
}

@inproceedings{moulin2000information,
  title={Information-theoretic analysis of watermarking},
  author={Moulin, Pierre and O'Sullivan, Joseph A},
  booktitle={2000 IEEE International Conference on Acoustics, Speech, and Signal Processing. Proceedings (Cat. No. 00CH37100)},
  volume={6},
  pages={3630--3633},
  year={2000},
  organization={IEEE}
}

@article{merhav2002random,
  title={On random coding error exponents of watermarking systems},
  author={Merhav, Neri},
  journal={IEEE Transactions on Information Theory},
  volume={46},
  number={2},
  pages={420--430},
  year={2002},
  publisher={IEEE}
}

@article{moulin2001role,
  title={The role of information theory in watermarking and its application to image watermarking},
  author={Moulin, Pierre},
  journal={Signal Processing},
  volume={81},
  number={6},
  pages={1121--1139},
  year={2001},
  publisher={Elsevier}
}

@article{steinberg2002identification,
  title={Identification in the presence of side information with application to watermarking},
  author={Steinberg, Yossef and Merhav, Neri},
  journal={IEEE Transactions on Information Theory},
  volume={47},
  number={4},
  pages={1410--1422},
  year={2002},
  publisher={IEEE}
}

@article{cohen2002gaussian,
  title={The Gaussian watermarking game},
  author={Cohen, Aaron S and Lapidoth, Amos},
  journal={IEEE transactions on Information Theory},
  volume={48},
  number={6},
  pages={1639--1667},
  year={2002},
  publisher={IEEE}
}

@article{kirchenbauer2023reliability,
  title={On the reliability of watermarks for large language models},
  author={Kirchenbauer, John and Geiping, Jonas and Wen, Yuxin and Shu, Manli and Saifullah, Khalid and Kong, Kezhi and Fernando, Kasun and Saha, Aniruddha and Goldblum, Micah and Goldstein, Tom},
  journal={arXiv preprint arXiv:2306.04634},
  year={2023}
}

@inproceedings{sun2023codemark,
  title={Codemark: Imperceptible watermarking for code datasets against neural code completion models},
  author={Sun, Zhensu and Du, Xiaoning and Song, Fu and Li, Li},
  booktitle={Proceedings of the 31st ACM joint European software engineering conference and symposium on the foundations of software engineering},
  pages={1561--1572},
  year={2023}
}

@article{xu2024learning,
  title={Learning to watermark llm-generated text via reinforcement learning},
  author={Xu, Xiaojun and Yao, Yuanshun and Liu, Yang},
  journal={arXiv preprint arXiv:2403.10553},
  year={2024}
}

@article{gu2023learnability,
  title={On the learnability of watermarks for language models},
  author={Gu, Chenchen and Li, Xiang Lisa and Liang, Percy and Hashimoto, Tatsunori},
  journal={arXiv preprint arXiv:2312.04469},
  year={2023}
}

@article{padhi2024deep,
  title={Deep learning-based dual watermarking for image copyright protection and authentication},
  author={Padhi, Sudev Kumar and Tiwari, Archana and Ali, Sk Subidh},
  journal={IEEE Transactions on Artificial Intelligence},
  volume={5},
  number={12},
  pages={6134--6145},
  year={2024},
  publisher={IEEE}
}

@article{liu2023unforgeable,
  title={An unforgeable publicly verifiable watermark for large language models},
  author={Liu, Aiwei and Pan, Leyi and Hu, Xuming and Li, Shu'ang and Wen, Lijie and King, Irwin and Yu, Philip S},
  journal={arXiv preprint arXiv:2307.16230},
  year={2023}
}

@article{munyer2024deeptextmark,
  title={DeepTextMark: a deep learning-driven text watermarking approach for identifying large language model generated text},
  author={Munyer, Travis and Tanvir, Abdullah All and Das, Arjon and Zhong, Xin},
  journal={Ieee Access},
  volume={12},
  pages={40508--40520},
  year={2024},
  publisher={IEEE}
}

@inproceedings{an2026reinforcement,
  title={A reinforcement learning framework for robust and secure llm watermarking},
  author={An, Li and Liu, Yujian and Liu, Yepeng and Bu, Yuheng and Zhang, Yang and Chang, Shiyu},
  booktitle={Proceedings of the 19th Conference of the European Chapter of the Association for Computational Linguistics (Volume 1: Long Papers)},
  pages={7181--7198},
  year={2026}
}

\vfill

\end{document}